\newcommand{\aenote}[1]{{\color{red}{#1}}}
\newcommand{\kgnote}[1]{{\color{orange}{#1}}}
\newcommand{\old}[1]{}
\newtheorem{lemma}{Lemma}
\newtheorem{proposition}{Proposition}
\newtheorem{corollary}{Corollary}
\newtheorem{theorem}{Theorem}
\theoremstyle{definition}
\newtheorem{definition}{Definition}
\newtheorem{example}{Example}
\newenvironment{numberedtheorem}[1]{%
\begin{theorem}}{\end{theorem}\addtocounter{theorem}{-1}}
\providecommand*{\napprox}{%
	\BeginAccSupp{method=hex,unicode,ActualText=2249}%
	\not\approx
	\EndAccSupp{}%
}
\newcommand{\MyFrame}[1]{\noindent \framebox[\textwidth]{ \begin{minipage}{0.97\textwidth} #1 \end{minipage}}}%
\newcommand{\fix}[1]{\textcolor{red} {\textbf{#1}}}
\newcommand{\R}{\mathbb{R}}
\newcommand{\argmax}{\operatorname{arg\,max}}
\newcommand{\prob}[2][]{\text{\bf Pr}\ifthenelse{\not\equal{}{#1}}{_{#1}}{}\!\left[#2\right]}
\newcommand{\expect}[2][]{\text{\bf E}\ifthenelse{\not\equal{}{#1}}{_{#1}}{}\!\left[#2\right]}
\newcommand{\Rev}{{\normalfont\textsc{Rev}}}
\newcommand{\opt}{{\normalfont\textsc{Opt}}}
\newcommand{\E}{\mathbb{E}}
\newcommand{\vareps}{\varepsilon}
\def\Pr{\ensuremath{\mathrm{Pr}}}
\def\argmax{\ensuremath{\mathrm{argmax}}}
\def\capx{\ensuremath{\overset{c}{\approx}}}
\def\twocapx{\ensuremath{\overset{2c}{\approx}}}
\def\kcapx{\ensuremath{\overset{kc}{\approx}}}
\def\alphaapx{\ensuremath{\overset{\alpha}{\approx}}}
\def\notcapx{\ensuremath{\overset{c}{\napprox}}}
\def\nottwocapx{\ensuremath{\overset{2c}{\napprox}}}
\newcommand{\vecs}{\vec{s}}
\newcommand{\hypercol}{\textsf{Hypergrid-Coloring}}
\newcommand{\randhypercol}{\textsf{Random-Hypergrid-Coloring}}
\newcommand{\concavity}{\textsf{concavity}}
\newenvironment{numberedproposition}[1]{%
\begin{proposition}}{\end{proposition}\addtocounter{theorem}{-1}}
\begin{document}
% Title portion. Note the short title for running heads
\title{Interdependent Values without Single-Crossing
\thanks{This work was partially supported by the European Research Council under the European Union's Seventh Framework Programme (FP7/2007-2013) / ERC grant agreement number 337122, and by the Israel Science Foundation (grant number 317/17).}
}
\author{Alon Eden%
\thanks{%
    {Tel Aviv University (\url{alonarden@gmail.com})}}
\and Michal Feldman%
\thanks{%
    {Tel Aviv University (\url{michal.feldman@cs.tau.ac.il})}}
\and Amos Fiat%
\thanks{%
    {Tel Aviv University (\url{fiat@tau.ac.il})}}
\and Kira Goldner%
\thanks{%
    {University of Washington (\url{kgoldner@cs.washington.edu})}}
}

% note that the abstract must come before \maketitle
\maketitle

\begin{abstract}
We consider a setting where an auctioneer sells a single item to $n$ potential agents with {\em interdependent values}.
That is, each agent has her own private signal, and the valuation of each agent is a known function of all $n$ private signals.
This captures settings such as valuations for artwork, oil drilling rights, broadcast rights, and many more.

In the interdependent value setting, all previous work has assumed a so-called {\sl single-crossing condition}. Single-crossing means that the impact of agent $i$'s private signal, $s_i$, on her own valuation is greater than the impact of $s_i$ on the valuation of any other agent. It is known that without the single-crossing condition an efficient outcome cannot be obtained. 
We study welfare maximization for interdependent valuations through the lens of approximation.

We show that, in general, without the single-crossing condition, one cannot hope to approximate the optimal social welfare any better than the approximation given by assigning the item to a random bidder. Consequently, we introduce a relaxed version of single-crossing, {\sl $c$-single-crossing}, parameterized by $c\geq 1$, which means that the impact of $s_i$ on the valuation of agent $i$ is at least $1/c$ times the impact of $s_i$ on the valuation of any other agent ($c=1$ is single-crossing). Using this parameterized notion, we obtain a host of positive results.

We propose a prior-free deterministic mechanism that gives an $(n-1)c$-approximation guarantee to welfare.
We then show that a random version of the proposed mechanism gives a prior-free universally truthful $2c$-approximation to the optimal welfare for any concave $c$-single crossing setting (and a $2\sqrt{n}c^{3/2}$-approximation in the absence of concavity). We extend this mechanism to a universally truthful mechanism that gives $O(c^2)$-approximation to the optimal revenue.

\end{abstract}

% Renew this after \maketitle if the default list of authors is too long for headers
%\renewcommand{\shortauthors}{W.\ Vickrey et.\ al.}

% filename-alt-af is a new version of filename, the old one is still filename

\section{Introduction}

The most fundamental problem in the theory of auctions is how to sell a single item to $n$ potential buyers efficiently; i.e., how to allocate the item to the agent who values it the most. This problem has been fully resolved in 1961 in the case where agents have independent private values (IPV). Indeed, %Vickrey
 \citep{vickrey1961counterspeculation} introduced the second-price auction, which is dominant-strategy incentive compatible and fully efficient.

While much research has concentrated on the independent private values model, for many high-stake auctions that arise in practice (e.g., auctions for mineral rights \citep{wilson1969communications}), agents' valuations are correlated, a special case of which are common values, where agent values are identical \citep{milgrom79,wilson1969communications}. In such cases the independent private values model is unsuitable.

To address scenarios of this type, \citet{milgrom1982theory} introduced the model of {\em interdependent} values. In their model, every agent $i$ has a non-negative private signal $s_i$, and the valuation of agent $i$ is a function of the entire signal vector; i.e., $v_i = v_i(\vecs)$.
While the signal is private and unknown by other agents or the auctioneer, the valuation functions are commonly known by all\footnote{A variation on this assumption is the {\sl asymmetric knowledge} scenario, where the valuation functions are common knowledge amongst the agents but {\sl not known} to the auctioneer. This variant was studied in \citep{dasgupta2000efficient, perry1999ex}.}.%\citep{dasgupta and maskin 2000, perry and reny 99}.}.

Consider the following scenario, which is a variant of the scenario given in \citep{dasgupta2000efficient}.
\begin{example}
Two firms compete for the right to drill for oil on a given tract of land. Firm $1$ has a marginal drilling cost of $1$, while firm $2$ has a marginal drilling cost of $2$.
Suppose oil is sold in the market at a price of $4$.
Firm $1$ performs a private test and discovers that the expected size of the oil reserve is $s_1$ units.
%Each one of the firms performs a private test, based on which it estimates the size of the oil reserve.
%Let $s_i$ be the estimated size according to firm $i$'s test, and suppose that the expected size of the oil reserve is the average of $s_1$ and $s_2$.
This scenario gives rise to the following valuation functions: $v_1(s_1,s_2)=(4-1)s_1=3s_1$, and $v_2(s_1,s_2)=(4-2)s_1=2s_1$.
\label{ex:oil-drilling-sc}
\end{example}

Other examples include: (a) common value auctions, where the value of the item for sale is identical amongst bidders, but bidders may have different information about the item's value \citep{wilson1969communications}, and (b) auctions with resale, where $v_i(\vec{s}) = \sum_{j=1}^n \beta_{ij} s_j$, where $\beta_{ij} \geq 0$ for all $1\leq i,j \leq n$. Here, an agent's value depends on her personal valuation for the item and its resale value, which is reflected by the valuations of others \citep{klemperer1998auctions, myerson1981optimal}.

In scenarios with interdependent values, a direct revelation mechanism is one where every agent $i$ reports his signal $s_i$ to the mechanism, and the mechanism, knowing the (expected) valuation of each agent as a function of the signal vector, determines the allocation and the payments.
The strongest notion of truthfulness relevant in the interdependent value setting is that truth telling is an ex-post Nash equilibrium.
Truth-telling is said to be ex-post Nash if, for every bidder $i$, for every possible realization of the other bidders' signals $\vecs_{-i}$, and given that other bidders report their signals truthfully, then it is in bidder $i$'s best interest to report her true signal.
Truthfulness in dominant strategies is not viable in interdependent values since the value of one agent depends on the signals of others\footnote{To see why a dominant strategy truthful mechanism is hopeless, note that if some agent (Alice) misreports her signal, even if every other agent reports truthfully, the valuation of other agents depends on the signal given by Alice, so valuations computed using the corrupt signal from Alice (corrupt valuations) may vary from the  true valuations (based on the non-corrupted signals). Thus, the allocation and prices can be quite different than those computed from the non-corrupted signals, and it may not be a best response of the agents to reveal their true signals. In fact, it is easy to give examples where truth-telling is clearly not a reasonable response.}. %\kgnote{Some sort of seem like definitions---should we emph direct revelation or ex-post Nash?}

Work on interdependent values has typically assumed a prior joint distribution (either correlated or independent) on the signals.
One can further categorize this body of work under various informational assumptions (is the distribution of signals commonly known by agents? is it known by the auctioneer?), and solution concepts (Bayes-Nash, ex-post Nash, etc.).

To quote \citet{Bergemann05}, \emph{``The mechanism design literature assumes too much common knowledge of the environment among the players and planner."}  Such common knowledge may be unrealistic, and moreover, can lead to miraculous outcomes. For example, \citet{CremerMcLean85,CremerMcLean88} show how to extract full surplus as revenue under appropriate assumptions and solution concepts (in particular, commonly known correlated distributions and Bayes Nash equilibrium).
This approach follows the seminal work of \citet{wilson1985game} that posits {\em Wilson's doctrine}\footnote{Due to Robert Wilson (1985), and not referring to the Harold Wilson wiretapping doctrine (1966), nor to the Woodrow Wilson extension of the Monroe doctrine (1912).}: simple detail-free mechanisms should be preferred in order to alleviate the risks introduced by various assumptions.

In this work we follow this doctrine and propose prior-free mechanisms---mechanisms that do not require any knowledge of the underlying distribution. Moreover, our mechanisms are universally ex-post truthful. Ex-post truthfulness means that bidders maximize utility for any signal profile, so any underlying distribution of bidder values is irrelevant. For universally truthful mechanisms, even if the bidders know in advance the random internal coin tosses made by the mechanism they still have no incentive to bid non-truthfully.

%Ex-post truthfulness is the strongest possible solution concept for interdependent settings; it is analogous to dominant strategy incentive-compatible in the private independent values model. \kgnote{Didn't we just say this 3 paragraphs before?}

\vspace{0.1in}
\noindent {\bf Single-crossing.} All previous work on interdependent values
assumes some version of a {\sl single-crossing (SC) condition} on the valuations \citep{milgrom1982theory,Aspremont82,maskin1992,ausubel1999generalized,dasgupta2000efficient,
bergemann2009information,CFK,che2015efficient,li2016approximation,RTCoptimalrev}. Several definitions of single-crossing have been suggested in the literature. We define the single-crossing (SC) condition as done in \cite{RTCoptimalrev}\footnote{Appendix  \ref{sec:alternative-definitions} briefly discusses other definitions of single-crossing, and how the results of our paper can be adapted appropriately.}: for every agent $i$, for any set of other players'  signals $\vec{s}_{-i}$, and for every agent $j$,
$$
\frac{\partial v_i(s_i, \vec{s}_{-i})}{\partial s_i} \geq \frac{\partial v_j(s_i, \vec{s}_{-i})}{\partial s_i}.
$$
Thus, as a function of signal $i$ ($s_i$), the valuation of bidder $i$ ($v_i$), increases at least as much as the valuation of any other bidder ($v_j$, $j\neq i$).

Under the SC condition, a generalization of the VCG mechanism maximizes social welfare \citep{maskin1992,ausubel1999generalized,RTCoptimalrev,CFK}; this generalized VCG mechanism is deterministic, ex-post truthful and prior-free.
%\mfnote{CONSIDER WAITING WITH REVENUE A BIT MORE, SO REMOVE:
%Moreover, under distributional, informational, and concavity assumptions \citep{CFK,RTCoptimalrev}, mechanisms for revenue can be obtained from mechanisms for optimal welfare.
%}
It can be easily verified that the valuations in the oil drilling example described in Example~\ref{ex:oil-drilling-sc} satisfy the SC condition. Indeed, $v_1$ (as a function of $s_1$) grows faster than $v_2$ (as a function of $s_1$).

However, there are many scenarios where the SC condition does not hold.
Consider for example the following scenario of two firms competing for oil drilling rights, given in \citep{dasgupta2000efficient}.

\begin{example}
\label{ex:oil-drilling-no-sc}
As before, each firm has a marginal cost for drilling, but this time each firm also has a fixed cost for drilling.
Firm $1$ has a fixed cost of $1$ and a marginal cost of $2$, while firm $2$ has a fixed cost of $2$ and a marginal cost of $1$.
Suppose oil is sold in the market at a price of $4$.
Firm $1$ performs a private test and discovers that the expected size of the oil reserve is $s_1$ units.
%Each one of the firms performs a private test, based on which it estimates the size of the oil reserve. Let $s_i$ be the estimated size of firm $i$, and suppose that the expected size of the oil reserve is the average of $s_1$ and $s_2$.
This scenario gives rise the following valuation functions: $v_1(s_1,s_2)=(4-2)s_1 - 1 = 2s_1-1$, and $v_2(s_1,s_2)=(4-1)s_1 - 2=3s_1-2$.
These valuations do not satisfy the SC condition.
The increase in firm $2$'s valuation increases faster than firm $1$'s valuation as a function of $s_1$.
\end{example}

As another example, consider the following scenario regarding two retail chains, each interested in renting some location for a shop.

\begin{example}
Each of the two retail chains has conducted a survey to estimate the income level of the population in the area. Let $s_i$ be the estimate obtained by retail chain $i$ (furthermore, assume that $1\leq s_i\leq 2$).
A good estimate for the income level in the area is the average of $s_1$ and $s_2$.
The valuations of the two retail chains for opening a shop at this location are functions of the income level in the area, since the demand for the goods is a function of the customer income level.
Suppose that the first retail chain sells normal goods, whereas the second retail chain sells luxury goods. Let the corresponding valuations be  $v_1(s_1,s_2)=0.06+\frac{s_1+s_2}{2}$ (for normal goods) and $v_2(s_1,s_2)=(\frac{s_1+s_2}{2})^{1.1}$ (for luxury goods).
These valuations do not satisfy the SC condition: as a function of $s_1$, the rate of increase for $v_2$ is faster than the rate of increase for $v_1$. (See Figure \ref{fig:normal-vs-luxury}).
\label{ex:normal-vs-luxary}
\end{example}

\begin{figure}[H]
	\centering
	\includegraphics[scale=0.3]{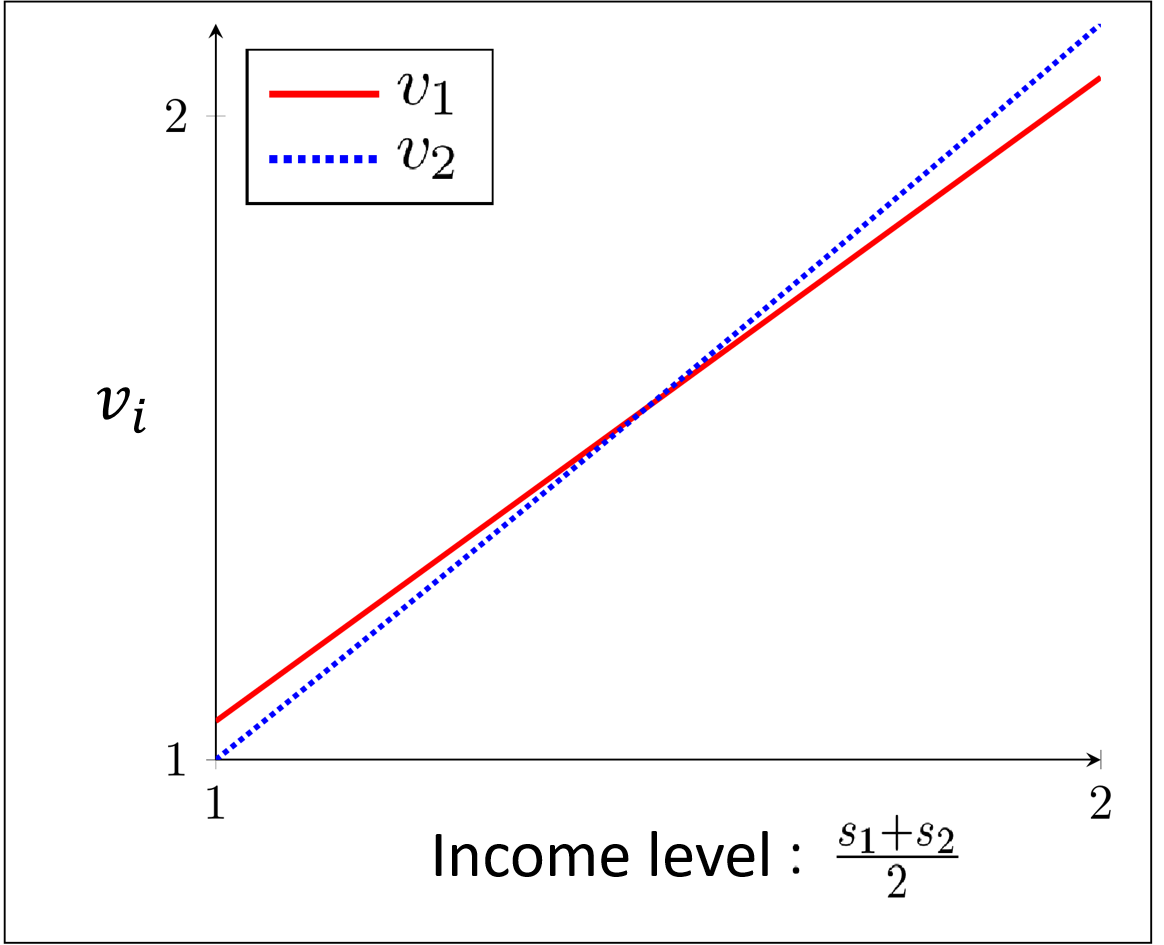}
	\caption{Valuations $v_1$ (retail chain 1, normal goods) and $v_2$ (retail chain 2, luxury goods) for opening a shop as a function of the income level. Income level  is itself well-estimated as the average of the two signals, $s_1$ and $s_2$.}
	\label{fig:normal-vs-luxury}			
\end{figure}

%\begin{figure}[H]
%\end{figure}

%\begin{figure}[H]
%\end{figure}

%\old{
%the valuations for the Spanish example above are single-crossing only if buyer $i$ values
%her signal more than any other buyer values $i$'s signal, {\sl i.e.}, the valuations are single-crossing if and only if $\beta_{ii}\geq \beta_{ji}$ for all $1 \leq i \leq j$. This is not obviously true. Moreover, neither the Russian nor the French interdependent valuations above are single-crossing.
%Thus, one may ask, can one do anything about social welfare maximization without SC?
%}

It is well known that, without the single-crossing assumption, it is generally impossible to achieve full efficiency  \citep{dasgupta2000efficient, jehiel2001efficient}. %\citep{dasgupta maskin 2000, Jehiel and Moldovanu ‘01}.
Thus, the next step is to seek near-optimal guarantees. This is our approach; namely, we study settings with interdependent valuations through the lens of approximately-optimal efficiency.

Unfortunately, without single-crossing, it is not generally possible to give {\em any} non-trivial approximation guarantee to the social welfare (see Section \ref{sec:detimpossibility}).
Thus, the obvious next question is to study the tradeoff between assumptions and the quality of the approximation.
The two ends of this spectrum are (a) the single-crossing assumption and full efficiency, and (b) no assumptions on the valuation functions and no efficiency guarantees.

To study this tradeoff, we introduce the following parameterized version of the single-crossing condition:
A valuation profile is said to be \emph{$c$-single-crossing} if for every agent $i$, for any set of other players'  signals $\vec{s}_{-i}$, and for every agent $j$,
$$
c \cdot \frac{\partial v_i(s_i, \vec{s}_{-i})}{\partial s_i} \geq \frac{\partial v_j(s_i, \vec{s}_{-i})}{\partial s_i}.
$$
This means that as buyer $i$'s signal changes, the change in any other agent's value is at most $c$ times the change in agent $i$'s value.
For example, the valuations described in Example \ref{ex:oil-drilling-no-sc} violate the SC condition, but they are $1.5$-SC.
Similarly, the valuations described in Example \ref{ex:normal-vs-luxary} are $1.18$-SC.
Can these scenarios, which deviate only slightly from SC, permit good approximation guarantees?
We answer this question in the affirmative.
In particular, we provide welfare guarantees that depend on the parameter $c$.
Our approximation guarantees are strong in the sense that they hold for any signal profile $\vec{s}$. (This is in contrast to approximation guarantees that are given in expectation over the signal profile).

%We consider both deterministic and randomized mechanisms.
%We also distinguish between prior-freimation and..
%\fix{To add:} Motivate/discuss our focus on determinism and prior-free.
%

%Valuation functions are concave if for all $i$,$j$, $$v_i(s_{-j},b_j)-v_i(s_{-j},b_j-1) \geq
%v_i(s'_{-j},b_j)- v_i(s'_{-j},b_j-1)$$ for all $b_j>1$ and all signals
%$s$ and $s'$ such that for every bidder $\ell \neq i$, the signal $s_\ell\leq s'_\ell$.
%Concavity is analogous to diminishing returns, when other
%signals are higher, the  effect of increasing $b_j$ is lower.

Moreover, we are also interested in revenue guarantees under relaxed SC valuations.
In \citep{CFK} it was shown that for SC and concave valuations, revenue maximization follows from welfare
maximization. Here, concavity essentially means that buyers are more sensitive
to a change in some agent's signal when all of the buyers have lower signals.
Like \citet{CFK}, we show that, for settings with $c$-single-crossing and concave valuations, [approximate] revenue
maximization follows from [approximate] welfare maximization.

%\old{I propose to wait with this until the results section. We didn't even mention what guarantees we have that do depend on $n$.
%Moreover, we also give an improved approximation to welfare under the
%concavity assumption which does not depend on $n$.
%}

%Motivated by our relaxation of single-crossing to $c$-single-crossing, we also relax the concavity assumption to the notion of $d$-concavity. It is similar to concavity except that instead the inequality above only holds to within a factor of $d$. Even this relaxed notion of concavity still allows revenue maximization using welfare maximization and gives approximation bounds that depend on both $c$ and $d$ (see Table \ref{tab:results}).

\subsection{Our Results}

Before presenting our results, we present our approximation notion.

\paragraph{Our approximation notion}

We say that the value of bidder $j$ is \emph{$\alpha$-approximated} by bidder $i$'s value at profile $\vec{s}$, which we denote by $v_i(\vec{s}) \alphaapx v_j(\vec{s})$, if $v_j(\vec{s}) \leq \alpha v_i(\vec{s})$.
We say that an allocation gives a \emph{prior-free} \emph{$\alpha$-approximation} to welfare if, for every signal profile $\vec{s}$, the value of the bidder with the highest value is $\alpha$-approximated by the bidder being allocated to in $\vec{s}$ by the mechanism.
A prior-free mechanism gives an $\alpha$-approximation to the welfare without any assumption on prior distributions.
%For example, the generalized VCG mechanism (that gives optimal efficiency under single-crossing) is a prior-free mechanism.
The quality of the approximation, for our randomized mechanisms, is in expectation over the internal
coin tosses of the mechanism and nothing else\footnote{The alternative to prior-free approximation is that the approximation only holds in expectation over some prior distribution; i.e.,
$\alpha \sum _{\vec{s} \in \times_i S_i } f(\vec{s}) \sum_i x_i(\vec{s}) v_i(\vec{s}) \geq \sum _{\vec{s} \in \times_i S_i } f(\vec{s}) \max _i v_i(\vec{s})$, where $\vec{s}$ comes from a joint distribution $F$ with density $f(\cdot)$.}.

%\section{Table of Social Welfare Results} \label{sec:tableapp}

\begin{table}[h!]
\begin{threeparttable}[h!]
{\centering \footnotesize
\begin{tabular}{|l r|c r|c r|c r| c r|} \hline
 \multicolumn{2}{|c|}{Setting}
& \multicolumn{2}{c|}{\begin{tabular}{l} Deterministic\\Upper Bound \\Prior-free\end{tabular}}
& \multicolumn{2}{c|}{\begin{tabular}{l} Randomized\\Upper Bound\\Prior-free \end{tabular}}
& \multicolumn{2}{c|}{\begin{tabular}{l} Deterministic\\Lower Bound \\Prior-free\end{tabular}}
& \multicolumn{2}{c|}{\begin{tabular}{l} Randomized\\Lower Bound\\(even w/ Prior) \end{tabular}} \\
\hline
\multicolumn{2}{|c|}{\begin{tabular}{l} Without Single\\Crossing \end{tabular}} & \multicolumn{2}{c|}{---} &  $n$ & $^{(\S 1)}$&
$\infty$ &$^{(\ddag 1)}$  & $n$ &$^{(\sharp 1)}$  \\ \hline
\multicolumn{2}{|c|}{\begin{tabular}{l} $c$-single-crossing\\arbitrary $n$, $k$ \end{tabular}} &
$(n-1)c$ & $^{(\dag 2)}$ &
$O\left( \min(c^{3/2} \cdot \sqrt{n},n)\right)$ & $^{(\S 2)}$ &
$c+\Omega(1)$ &$^{(\ddag 2)}$ &
c & $^{(\sharp 2)}$   \\ \hline
\multicolumn{2}{|c|}{\begin{tabular}{l} $c$-single-crossing \\ arbitrary $n$,  $k$\\Concavity \end{tabular}} &
 $(n-1)c$ &$^{(\dag 3)}$  &
  $\min(2c,n)$&   $^{(\S 3)}$& $c$
  &$^{(\ddag 3)}$ &  &
  %$^{(\sharp 3)}$
  \\ \hline
\multicolumn{2}{|c|}{\begin{tabular}{l} $c$-single-crossing \\ 2 bidders ($n=2$)\end{tabular}} & $c$ &$^{(\dag 4)}$ &
 ${\min(c,2)}$ &$^{(\S 4)}$ &
  $c$ & $^{(\ddag 4)}$&
     &
    %$^{(\sharp 4)}$
    \\ \hline
\multicolumn{2}{|c|}{\begin{tabular}{l} $c$-single-crossing \\ 2 Signals ($k=2$)\end{tabular}} & $c$ &$^{(\dag 5)}$ &
 $\min(c,n)$ & $^{(\S 5)}$ &
  $c$ &$^{(\ddag 5)}$&
  $c$ & $^{(\sharp 5)}$\\ \hline
 \end{tabular}
 \begin{tablenotes} \footnotesize
 \setlength{\columnsep}{0.8cm}
\setlength{\multicolsep}{0cm}
\begin{framed}
  \begin{multicols}{2}
     \item[$(\S 1 -- \S 5)$] Choosing a bidder at random gives an $n$ approximation.
     \item[$(\ddag 1)$] Section \ref{sec:detimpossibility}.
     \item[$(\sharp 1)$] Section \ref{sec:impossibility}.
     \item[$(\dag 2)$] Via the Hypergraph-Coloring algorithm  of Section \ref{sec:nbidders}.
     \item[$(\S 2)$] The $c^{3/2} \cdot \sqrt{n}$ approximation --- Section \ref{sec:random_sqrt}.
     \item[$(\ddag 2)$] Subsection \ref{sec:nocapprox}.
     \item[$(\sharp 2)$] For sufficiently large $n$, implied by cell $(\sharp 5)$, lower bound example with 2 signals.
     \item[$(\dag 3)$] Implied by cell $(\dag 2)$.
     \item[$(\S 3)$] The $2c$ approximation --- Section \ref{sec:random}. If we only have $d$-concavity, for some $d\geq 1$, then $2c$ becomes $c(d+1)$.
     \item[$(\ddag 3)$] Lower bound in Figure \ref{fig:c-is-tight} has concave valuations (with 2 bidders and 2 signals).
     %\item[$(\sharp 3)$] Unknown.
     \item[$(\dag 4)$] See Section \ref{subsec:2bidders}.
     \item[$(\S 4)$] A deterministic $c$ approximation is implied by cell $(\dag 4)$.
     \item[$(\ddag 4)$] Lower bound in Figure \ref{fig:c-is-tight} (with 2 bidders and 2 signals).
     %\item[$(\sharp 4)$] Unknown.
     \item[$(\dag 5)$] Using the algorithm of Figure \ref{fig:twobiddercoloring}, Section \ref{subsec:2bidders}.
     \item[$(\S 5)$] A deterministic $c$ approximation is implied by cell $(\dag 5)$.
     \item[$(\ddag 5)$] Implied by cell $(\sharp 5)$, for sufficiently large $n$.
     \item[$(\sharp 5)$] For sufficiently large $n$, see Subsection \ref{sec:random_c_lb}.
     \end{multicols}
  \end{framed}
   \end{tablenotes}}

\end{threeparttable}
\caption{Our Results for Social Welfare maximization. The randomized upper bounds are also universally truthful.}
\label{tab:results}
\end{table}

\paragraph{Our results}

In Section~\ref{sec:detimpossibility}, we show that without SC, no deterministic prior-free mechanism can achieve {\em any} guarantee for welfare.
Moreover, as we show in Section~\ref{sec:impossibility}, there exists a distribution on signals such that no (randomized) mechanism can do better than allocating the item to a random agent.
Such a mechanism would result in no better than $1/n$ fraction of the optimal welfare.

We then study welfare approximation in settings where buyer valuations satisfy $c$-SC.
We also consider the impact of the number of possible signals per buyer, {\sl i.e.}, the size of the set of potential signals that a buyer may have.
Some of our results depend on this parameter.

In Section~\ref{sec:warmups}, we identify two settings that admit a deterministic prior-free $c$-approximation: (a) $2$ bidders, any signal space (Theorem~\ref{thm:2bidder}), and (b) any number of bidders, each with a signal space of size $2$ (Theorem~\ref{thm:2signals}).  This approximation is tight in several senses.  First, it is provably impossible to obtain better than a $c$-approximation in these settings.  In fact, in the case of signal spaces of size $2$, this impossibility holds even if one considers randomized, truthful-in-expectation mechanisms with a known prior distribution on the buyer signals.
We also show that these are the most general settings that admit a $c$-approximation.  In particular, for $3$ bidders, one with signal space of size $3$ and two with signal spaces of size $2$---no $c$-approximation on the social welfare is possible (Proposition \ref{prop:superc}).

%The mechanisms for the two special cases above rely on having $2$ bidders or $2$ possible signals for each bidder. The techniques do not seem to be applicable to more general settings.  \kgnote{But we do extend the technique for $2$ bidders to $n$ bidders, so it is applicable.  Just not for $2$ signals.}

We also give results for general settings with $n$ bidders, each of which has one one of $k$ possible signals.
We start by constructing a family of deterministic prior-free mechanisms that obtain a $c(n-1)$-approximation (Theorem~\ref{thm:hypercolapx}).
The mechanism imposes some (arbitrary) ordering $\pi$ over the bidders.
In every iteration, the next bidder in $\pi$ is added, and a ``tentative allocation" is determined. This includes making
appropriate changes to the previous tentative allocation so as to preserve approximation guarantees  and monotonicity (truthfulness)  with respect to the newly added bidder.

The algorithm is described as computing the full allocation table for all possible signal profiles, which would take non-polynomial time. However, we show that computing the allocation for any profile of signals, on the fly, can be done in polytime ($O(n^2k\log k)$ time---Theorem~\ref{thm:polytime}).

We then show that a randomized version of this mechanism has much better welfare guarantees.
In particular, if the valuations are concave, then the randomized mechanism obtained by imposing a random ordering over the agents gives a prior-free, universally truthful mechanism that gives a $2c$-approximation (Theorem~\ref{thm:concave-2c}).
(This guarantee extends to $c(d+1)$ for $d$-concave valuations---a parameterized version of concavity).
For general $c$-SC valuations, this mechanism gives a $2c^{3/2}\sqrt{n}$-approximation (Theorem~\ref{thm:randrootn}).

While the main focus of this paper is social welfare guarantees, our results have implications to revenue optimization.
For concave valuations, we establish a black-box reduction from (approximate) welfare to (approximate) revenue, for every implementable allocation rule. To establish this reduction, we use similar ideas to ones used in \citep{CFK}.  In particular, every $c$-single-crossing setting with $n$ bidders and concave valuations admits a randomized, universally truthful mechanism that gives an $O(c^2)$-approximation (and $O(c^2d^2)$-approximation for $d$-concave valuations) (Theorem~\ref{thm:revapx}).

We refer the reader to Table~\ref{tab:results}, where we give an in depth description of the implications of our results for social welfare. The table shows how the various results relate to one another and gives references to where each upper and lower bound stems from in the organization of this paper.

\section{Model and Preliminaries} 
\label{sec:preliminaries}

We consider an auction setting where a single item is sold to $n$ agents with \emph{interdependent} values (\citet{milgrom1982theory}).
Each agent $i \in \{1, \ldots, n\}$ receives a single signal $s_i$ which is known only to agent $i$. Let $\vec{s}=(s_1,s_2,\ldots,s_n)$ be a signal profile, let $\vec{s}_{-i}$ denote all signals but $s_i$, and let $(s_i',\vec{s}_{-i})$ denote the profile $\vec{s}$ but where $s_i$ has been replaced with $s'_i$. The set of potential signals for bidder $1 \leq i\leq n$ is a discrete signal space $S_i$. Without loss of generality, assume $S_i=\{0,1,\ldots,k_i\}$; for ease of exposition we may assume that $k_i=k$ for all $i$.

Each agent $i$ also has a publicly known valuation function $v_i: \times_i S_i \rightarrow \mathbb{R}_{\geq 0}$, which maps every signal profile of the $n$ agents to a real (non-negative) value.  The valuation functions for all bidders $i$ are monotone non-decreasing in every signal $s_j$ for all $j$.

%In this paper, we focus on the \emph{single-item} environment, where the seller has a single item to sell.  The analogue of any \emph{single-parameter} environment in the independent private value setting corresponds to when each buyer has both one private signal and one public valuation function.  Having one of one and multiple of the other is not a single-parameter setting, even though the valuation functions are public.  In Section~\ref{sec:ksupply}, we discuss the single-parameter setting where each agent has one signal and one valuation function, but the seller has $k$ identical items to sell, or can sell to any set of bidders that is an independent set in a matroid.

The input to a mechanism is a vector of reported signals $\vec{b}=( b_1, b_2, \ldots, b_n)$.  Mechanisms are described by  a pair $(x,p)$, where $x$ is a set of allocation functions $x=\{x_1(\vec{b}),\ldots,x_n(\vec{b})\}$ satisfying $\sum_i x_i(\vec{s}) \leq 1$ for all possible $\vec{b}$, and $p$ a set of payment functions  $p=\{p_1(\vec{b}),\ldots,p_n(\vec{b})\}$.
An allocation function $x_i:\times_j S_j\rightarrow [0,1]$ maps every bid profile $\vec{b}$ to the probability that agent $i$ gets allocated.  A payment rule $p_i: \times _j S_j \rightarrow \R$ maps the reported bids $\vec{b}$ to the expected payment from bidder $i$. A bidder's expected utility is quasilinear, given by $x_i (\vec{b}) \cdot v_i(\vec{s}) - p_i(\vec{b})$ where $\vec{s}$ is the true signal profile of the agents.

\subsection{Solution Concepts}

We focus on the design of incentive-compatible and individually rational mechanisms.
In the interdependent setting, we cannot hope for truth-telling to be a dominant strategy: one agent's misreport could cause the auctioneer to overcharge a different agent. The strongest incentive-compatibility (IC) concept in this setting is thus that truth-telling is an ex-post Nash Equilibrium, or that it is in every agent $i$'s best interest to report his true signal $s_i$ given that all other agents reported the true signal profile $\vec{s}_{-i}$:
$$x_i(\vec{s}) \cdot v_i(\vec{s}) - p_i(\vec{s}) \geq x_i(b_i, \vec{s}_{-i}) \cdot v_i(\vec{s}) - p_i(b_i, \vec{s}_{-i})  \quad \quad \quad  \forall \vec{s} \in \times _{j} S_j, b_i \in S_i. \quad\quad\quad [\text{IC}]$$
Similarly, individual rationality (IR) cannot possibly hold if signals are corrupted, so the appropriate notion with respect to individual rationality is that of ex-post IR, {\sl i.e.}, $$x_i(\vec{s}) \cdot v_i(\vec{s}) - p_i(\vec{s}) \geq 0  \quad \quad \quad  \forall \vec{s} \in \times _{j} S_j, b_i \in S_i.\quad\quad\quad [\text{IR}]$$

Thus, in this paper, incentive-compatibility (IC) refers to truth-telling being an ex-post Nash and individual rationality (IR) refers to ex-post individually rational. We use the term {\em truthful} mechanism for mechanisms that are both; all the mechanisms we present are truthful.

We emphasize that the information state is the following: (a) agents know their own valuation function $v_i$, and their own private signal $s_i$; (b) the auctioneer knows the valuation functions of the agents participating; and (c) signals are private and arbitrary. Except for bidder $i$, no other bidder, nor the auctioneer, knows {\sl anything} about $s_i$.

An allocation $x=(x_1,\ldots,x_n)$ is said to be {\sl implementable}
if there exist payment functions $p=p_1,\ldots,p_n$ such that the mechanism $(x,p)$ is truthful.

We give both deterministic and randomized truthful mechanisms. Our randomized mechanisms are a distribution over a family of deterministic mechanisms, all of which are truthful (i.e., they are {\em universally truthful} \citep{NisanRonen99,DobzinskiDughmi09}).

All the mechanisms we present are {\em prior-free}; i.e., there is no assumption of an underlying prior over the agents' signals; neither in the design of the mechanism, nor in the truthfulness notion, nor in the approximation guarantees. Every assertion holds for every signal profile $\vec{s}$.

We note that weaker solution concepts appear in the literature; i.e., truthful-in-expectation (vs. universally truthful), Bayesian truthful (vs. ex-post truthful), and interim IR (vs. ex-post IR). The reader is referred to \citep{RTCoptimalrev} for formal definitions. All of our positive results hold for our solution concept, which is analogous to dominant-strategy IC in the private value setting. Many of our impossibility results hold even with respect to weaker solution concepts.

\subsection{Monotone Allocations}

The following definition is key in characterizing which mechanisms are truthful.
\begin{definition}[Monotonicity]
	An allocation function $x_i$ is said to be monotone if for every $\vec{b}_{-i}$, $x_i(\vec{b}_{-i},b_i)$ is monotone non-decreasing in the signal $b_i$.
\end{definition}

Similar to Myerson's characterization for the independent private value setting, \citet{RTCoptimalrev} characterized the class of truthful mechanisms, as follows.

\begin{proposition}
\label{prop:char-ic}
Monotonicity is a necessary and sufficient condition for allocation functions $x$ to be \emph{implementable}, {\sl i.e.}, there exist payment functions $p$ such that the mechanism $(x,p)$ is truthful.  Moreover, an analogue of Myerson's payment identity holds, so the payment is uniquely determined by the allocation function.
\end{proposition}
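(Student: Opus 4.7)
My plan is to follow the classical Myerson derivation, adapted to the interdependent-values setting as in \citet{RTCoptimalrev}. The argument splits into necessity, sufficiency via a constructive payment rule, and uniqueness.

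For necessity, I fix an agent $i$ and $\vec{s}_{-i}$ and consider two signals $s_i < s_i'$ in $S_i$. Adding the two ex-post IC inequalities (at $(s_i, \vec{s}_{-i})$ deviating to $s_i'$, and at $(s_i', \vec{s}_{-i})$ deviating to $s_i$) collapses the payment terms and yields
\begin{equation*}
(x_i(s_i', \vec{s}_{-i}) - x_i(s_i, \vec{s}_{-i}))(v_i(s_i', \vec{s}_{-i}) - v_i(s_i, \vec{s}_{-i})) \;\geq\; 0.
\end{equation*}
Since $v_i$ is non-decreasing in $s_i$, this forces $x_i(\cdot, \vec{s}_{-i})$ to be monotone non-decreasing (if $v_i$ is independent of $s_i$ the IC constraint is vacuous and $x_i$ can be taken monotone without loss).

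For sufficiency, fix a monotone $x_i$ and $\vec{s}_{-i}$, and write $u_i(s_i) := x_i(s_i, \vec{s}_{-i}) v_i(s_i, \vec{s}_{-i}) - p_i(s_i, \vec{s}_{-i})$. The pair of IC constraints between adjacent signals $s_i$ and $s_i - 1$ sandwiches the utility increment:
\begin{equation*}
x_i(s_i - 1, \vec{s}_{-i})\,\Delta v_i \;\leq\; u_i(s_i) - u_i(s_i - 1) \;\leq\; x_i(s_i, \vec{s}_{-i})\,\Delta v_i,
\end{equation*}
where $\Delta v_i := v_i(s_i, \vec{s}_{-i}) - v_i(s_i - 1, \vec{s}_{-i}) \geq 0$. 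Monotonicity of $x_i$ is precisely what makes this interval non-empty. Taking the IR-tight choice (lower endpoint) and normalizing $u_i(0) = 0$, telescoping yields the discrete Myerson payment identity
\begin{equation*}
p_i(s_i, \vec{s}_{-i}) \;=\; x_i(s_i, \vec{s}_{-i}) v_i(s_i, \vec{s}_{-i}) - \sum_{t=1}^{s_i} x_i(t-1, \vec{s}_{-i})\bigl(v_i(t, \vec{s}_{-i}) - v_i(t-1, \vec{s}_{-i})\bigr).
\end{equation*}
The only remaining step is to verify full IC (for non-adjacent deviations) under this rule; this is a routine telescoping argument decomposing $v_i(s_i, \vec{s}_{-i}) - v_i(b_i, \vec{s}_{-i})$ into adjacent differences and bounding each summand by monotonicity of $x_i$, and it mirrors the classical IPV proof verbatim.

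For uniqueness, any implementing payment must satisfy the two-sided sandwich, and combined with the IR normalization at the lowest signal it is pinned down to the payment identity displayed above. The main conceptual obstacle---which is rather mild---in porting Myerson's argument to the interdependent setting is that the IC constraint is driven by $v_i(s_i, \vec{s}_{-i})$ rather than by $s_i$ itself, so the payment identity must be written in signal space with valuation increments $\Delta v_i$ playing the role that bid differences play classically. Once this substitution is in place, all manipulations are identical to Myerson's original derivation.
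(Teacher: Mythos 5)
Your derivation is the standard Myerson-style argument and is essentially the route the paper relies on for this proposition: the paper itself gives no proof, importing the characterization from \citet{RTCoptimalrev}, and your necessity step (adding the two crossed IC constraints to obtain $(x_i(s_i',\vec{s}_{-i})-x_i(s_i,\vec{s}_{-i}))(v_i(s_i',\vec{s}_{-i})-v_i(s_i,\vec{s}_{-i}))\ge 0$) and your sufficiency step (the sandwich on utility increments, the lower-endpoint payment, and the telescoping check of global IC) are exactly the intended adaptation, with $v_i(\cdot,\vec{s}_{-i})$ playing the role of the private value. Your closing remark correctly identifies the only conceptual substitution needed.

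One point is overstated: uniqueness. For discrete signal spaces the two-sided sandwich only constrains $u_i(s_i)-u_i(s_i-1)$ to lie in the interval $[x_i(s_i-1,\vec{s}_{-i})\,\Delta v_i,\; x_i(s_i,\vec{s}_{-i})\,\Delta v_i]$, which is not a singleton whenever $x_i$ strictly increases and $\Delta v_i>0$; likewise ex-post IR only gives $u_i(0,\vec{s}_{-i})\ge 0$, not equality. So IC and IR alone do not pin the payment down in the discrete setting; one must additionally adopt the revenue-maximal (IR-tight, lower-endpoint) convention, which is what your displayed formula and the paper's Proposition~\ref{prop:deterministic_payment} (the winner pays her value at the critical signal) implicitly do. This imprecision is inherited from the proposition's own phrasing, but your sentence claiming the sandwich plus IR ``pins down'' the payment should be weakened to: it pins down the unique IR-tight payment rule. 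The rest of the argument, including your caveat that monotonicity is forced only where $v_i$ strictly increases in $s_i$, is correct.
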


It follows that constructing a truthful mechanism is equivalent to constructing a monotone allocation function.

An allocation function $x_i$ is called {\sl deterministic} if $x_i(\vec{b})\in \{0,1\}$ for all $i$ and all $\vec{b}$.
For a deterministic mechanism, we use the notation $$x(\vec{s}) = i \quad \quad\quad \text{when} \quad\quad\quad x_j(\vec{s}) = \begin{cases} 1 & \text{if } j=i \\ 0 & \text{otherwise.} \end{cases}$$
Given a deterministic monotone allocation function $x_i$ and signals for agents $\neq i$, $\vec{s}_{-i}$, the {\sl critical signal} for $i$ is as follows: if there exists some $b_i$ such that $x_i(s_{-i},b_i)=1$ then set $b_i^*=\min_{b_i} x_i(s_{-i},b_i)=1$, otherwise there is no critical signal for $i$.

For deterministic truthful mechanisms, the payment identity of \citet{RTCoptimalrev} implies the following.
\begin{proposition}\label{prop:deterministic_payment}
	Let agent $i$ be the allocated winner at bid profile $\vec{s}$ in a deterministic truthful mechanism. Then his payment is his value at the critical bid, i.e., $p_i=v_i(\vec{s}_{-i},b_i^*)$.
\end{proposition}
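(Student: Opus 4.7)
The plan is to derive the claimed payment formula from the payment identity in Proposition~\ref{prop:char-ic}, which asserts that any monotone allocation admits a unique truthful payment rule (the analogue of Myerson's identity). So the task reduces to computing this unique payment on a deterministic threshold allocation and verifying truthfulness.

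First I would argue that the payment $p_i(b_i, \vec{s}_{-i})$ is a constant $p^W$ on the winning region $\{b_i \geq b_i^*\}$ of agent $i$ and a constant $p^L$ on the losing region. For any two winning reports $b_i, b_i'$ the allocations are identical and the true value $v_i(\vec{s})$ is unaffected by the report, so applying the IC constraint in both directions (taking each in turn as the truthful report) forces $p_i(b_i, \vec{s}_{-i}) = p_i(b_i', \vec{s}_{-i})$; the same swap gives constancy on losing reports. Ex-post IR at the lowest losing signal, together with the normalization baked into Proposition~\ref{prop:char-ic}, forces $p^L = 0$.

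To pin down $p^W$, I would squeeze it between two boundary comparisons: IR at the critical signal gives $v_i(\vec{s}_{-i}, b_i^*) - p^W \geq 0$, while IC from true signal $b_i^* - 1$ considering an upward deviation (when $b_i^* \geq 1$) gives $p^W \geq v_i(\vec{s}_{-i}, b_i^* - 1)$. In a discrete signal space either endpoint is compatible with IC, but the analogue of Myerson's identity selects the upper endpoint $p^W = v_i(\vec{s}_{-i}, b_i^*)$, matching the continuous payment formula $v_i(\vec{s}) - \int_{b_i^*}^{s_i} \partial_t v_i(t, \vec{s}_{-i}) \, dt = v_i(\vec{s}_{-i}, b_i^*)$.

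Finally I would verify truthfulness of the resulting rule by a short case analysis using monotonicity of $v_i$ in $s_i$: a winner's utility $v_i(\vec{s}) - v_i(\vec{s}_{-i}, b_i^*) \geq 0$ dominates the zero utility of any losing deviation, and a loser's zero utility dominates the utility $v_i(\vec{s}) - v_i(\vec{s}_{-i}, b_i^*) \leq 0$ of any winning deviation. The only delicate point is the upper-endpoint normalization in the discrete setting; the rest is a direct specialization of the Myerson-style argument to a threshold allocation with interdependent values.
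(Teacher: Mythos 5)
Your proposal is correct, and it is more self-contained than what the paper does: the paper offers no argument at all for this proposition, simply asserting that it is implied by the payment identity of \citet{RTCoptimalrev} (via Proposition~\ref{prop:char-ic}). You instead derive the formula from the raw IC/IR constraints for a deterministic threshold allocation: constancy of the payment on the winning and losing regions by the two-way IC swap, $p^L=0$ by IR plus normalization, and the squeeze $v_i(\vec{s}_{-i},b_i^*-1)\leq p^W\leq v_i(\vec{s}_{-i},b_i^*)$, followed by a direct verification that the resulting rule is truthful. You also correctly identify the one genuinely delicate point that the paper glosses over: with discrete signal spaces, IC and IR alone do not force $p^W=v_i(\vec{s}_{-i},b_i^*)$ --- any payment in the interval $[v_i(\vec{s}_{-i},b_i^*-1),\,v_i(\vec{s}_{-i},b_i^*)]$ is compatible --- so the stated equality is really a consequence of adopting the specific critical-value payment selected by the (discretized) Myerson-style identity, exactly as you say. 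What your route buys is an explicit, elementary proof where the paper has only a citation; what the paper's route buys is brevity and consistency with the uniqueness claim already stated in Proposition~\ref{prop:char-ic}. No gap.
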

 %that an agent that wins the item pays her value at the critical signal, $v_i(\vec{s}_{-i},b_i^*)$.

%\begin{proposition}[Characterization~\citep{RTCoptimalrev}]\label{prop:ic_ir_characterization}
%In the interdependent setting, a mechanism is ex-post IC and ex-post IR if and only if $x_i$ is monotone in $s_i$ for every $i$, and the following payment identity and payment inequality hold:
%\begin{eqnarray*}
%	p_i(\vec{s}) & = & x_i(\vec{s})v_i(\vec{s})-\int_{v_i(0,\vec{s}_{-i})}^{v_i(s_i,\vec{s}_{-i})}x_i(v_i^{-1}(t\ |\ \vec{s}_{-i}),\vec{s}_{-i})dt\\
%	& & -(x_i(0,\vec{s}_{-i})v_i(0,\vec{s}_{-i})-p_i(0,\vec{s}_{-i}));\\
%	p_i(0,\vec{s}_{-i}) &\leq & x_i(0,\vec{s}_{-i})v_i(0,\vec{s}_{-i}).
%\end{eqnarray*}
%\end{proposition}

\subsection{Single-Crossing}

A single-crossing condition captures the idea that bidder $i$'s signal has a greater effect on bidder $i$'s value than on any other bidder's value. Formally:

\begin{definition}[Single-Crossing]
	A valuation profile is said to satisfy the single-crossing condition if for every agent $i$, for any set of other players'  signals $\vec{s}_{-i}$, and for every agent $j$, $$\frac{\partial v_i(s_i, \vec{s}_{-i})}{\partial s_i} \geq \frac{\partial v_j(s_i, \vec{s}_{-i})}{\partial s_i}.$$
\end{definition}

In the context of discrete signal spaces, for $s_i = 1, \ldots, k_i$, define $\frac{\partial v_j(s_i, \vec{s}_{-i})}{\partial s_i} = v_j(s_i, \vec{s}_{-i}) - v_j(s_i - 1, \vec{s}_{-i})$.

%We note that there also exists a weaker notion of single-crossing that only requires the inequality to hold at $\vec{s}$ for the bidder $i$ with the highest value, where $i \in \argmax_{k} v_k(\vec{s})$.

Whenever single-crossing holds, full efficiency can be achieved: once an agent has the highest value, by the single-crossing condition, his value continues to be the highest of all bidders as his signal increases. Therefore, allocating to the bidder with the highest value defines a monotone allocation rule, and therefore, according to Proposition~\ref{prop:char-ic}, it is implementable. The payment of that agent is then just his value at his critical signal. Note also that this mechanism is deterministic and prior-free.  This is precisely the generalized VCG mechanism used in \citep{maskin1992}.
%,ausubel1999generalized,RTCoptimalrev,CFK}.

Unfortunately, according to Proposition~\ref{prop:char-ic}, monotonicity of the allocation rule is also necessary. Hence, without single-crossing, it is impossible to have a truthful mechanism that maximizes welfare.  

\subsection{Approximation}

While full efficiency is unattainable without single-crossing, one might hope for approximate efficiency.
We say that the value of bidder $j$ is \emph{$\alpha$-approximated} by bidder $i$'s value at profile $\vec{s}$, which we denote by $v_i(\vec{s}) \alphaapx v_j(\vec{s})$, if $v_j(\vec{s}) \leq \alpha v_i(\vec{s})$.  We say that an allocation $x$ gives a \emph{prior-free} \emph{$\alpha$-approximation} to welfare if
$$\alpha \sum_i x_i(\vec{s}) v_i(\vec{s}) \geq \max _i v_i(\vec{s}) \quad\quad\quad  \forall \vec{s} \in \times_i S_i.$$

A prior-free mechanism gives an $\alpha$-approximation to the welfare without any assumption on prior distributions.
The quality of the approximation (welfare, revenue), for our randomized mechanisms, is in expectation over the internal
coin tosses of the mechanism and nothing else\footnote{The alternative to prior-free approximation is that the approximation only holds in expectation over some prior distribution; i.e.,
$\alpha \sum _{\vec{s} \in \times_i S_i } f(\vec{s}) \sum_i x_i(\vec{s}) v_i(\vec{s}) \geq \sum _{\vec{s} \in \times_i S_i } f(\vec{s}) \max _i v_i(\vec{s})$, where $\vec{s}$ comes from a joint distribution $F$ with density $f(\cdot)$.}.

The generalized VCG mechanism (that gives optimal efficiency under single-crossing) is a prior-free mechanism.

\subsection{Impossibility Results for Settings Without Single-Crossing}
\label{sec:detimpossibility}

Here we show that no truthful, prior-free, and deterministic mechanism can obtain \emph{any} bounded approximation ratio when the valuations do not satisfy single-crossing. \\
\\
\noindent {\bf Example:} [Impossibility for deterministic prior-free mechanisms]
Consider a scenario with two bidders (bidder $1$ and bidder $2$), where $S_1=\{0,1\}$ and $S_2=\{0\}$, and the following valuation functions:
\begin{align*}
v_1(s_1=0,s_2=0) = r ; \quad \quad \quad & v_1(s_1=1,s_2=0) = r ;\\
v_2(s_1=0,s_2=0) = 1 ; \quad \quad \quad & v_2(s_1=1,s_2=0) = r^2.
\end{align*}

It is easy to see that $v_1$ does not satisfy single-crossing since when $s_1$ increases, $v_1$ does not increase but $v_2$ increases by $r^2 -1$, making $v_1$ go from being $r$ times greater than $v_2$ to being $r$ times smaller than it.

We claim that, for these valuations, no truthful, deterministic, and prior-free mechanism has an approximation ratio better than $r$.
To see this, consider the signal profile $(s_1=0,s_2=0)$. To get a better than $r$-approximation for this profile, bidder $1$ must win the item. Truthfulness requires the allocation to be monotone in each bidder's signal, hence bidder 1 must also win at report $(s_1=1,s_2=0)$, which results in an allocation that is a factor of $r$ off from the optimal allocation. Since $r$ is arbitrary, the approximation ratio is arbitrarily bad.

In Section \ref{sec:impossibility} we show that without single-crossing, even when we consider randomized mechanisms, no truthful randomized mechanism can achieve a better approximation than the simple mechanism that allocates the item to a random bidder, disregarding the reported signals altogether. This impossibility holds even if signals come from a known joint distribution.

\subsection{Approximate Single-Crossing and Its Implications}

The impossibility results motivate the following relaxed notion of single-crossing.

\begin{definition}[$c$-Single-Crossing]
A valuation profile is said to satisfy the $c$-single-crossing condition if for every agent $i$, for any set of other players'  signals $\vec{s}_{-i}$, and for every agent $j$, $$c\cdot \frac{\partial v_i(s_i, \vec{s}_{-i})}{\partial s_i} \geq \frac{\partial v_j(s_i, \vec{s}_{-i})}{\partial s_i}.$$% for every $j$.
\end{definition}

We now explore some useful properties of $c$-single-crossing, depicted in Figure~\ref{fig:contapx}.

\begin{lemma} \label{lem:contapx}
For any profile $\vec{s}$ and $\alpha\geq c$, if $v_i(\vec{s}) \alphaapx v_j(\vec{s})$ and the valuations satisfy $c$-single-crossing, then for any $\vec{s'} = (s_i', \vec{s}_{-i})$ such that $s_i' > s_i$,
$v_i(\vec{s'}) \alphaapx v_j(\vec{s'})$.
%For any $\alpha \geq c$, if $v_i(\vec{s}) \alphaapx v_j(\vec{s})$ then at any report $\vec{s'}$ where only $s_i$ has increased, that is, $\vec{s'} = (s_i', \vec{s}_{-i})$ for $s_i' > s_i$, under $c$-single-crossing, $v_i(\vec{s'}) \alphaapx v_j(\vec{s'})$.
\end{lemma}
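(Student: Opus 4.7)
The plan is to prove the inequality $v_j(\vec{s'}) \le \alpha v_i(\vec{s'})$ by combining the hypothesis $v_j(\vec{s}) \le \alpha v_i(\vec{s})$ with a bound on the increments of $v_i$ and $v_j$ when moving the $i$-th coordinate from $s_i$ up to $s_i'$. The $c$-single-crossing condition is precisely what controls the ratio of these increments.

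First I would integrate (in the discrete case, telescope-sum) the $c$-single-crossing inequality along the segment from $s_i$ to $s_i'$ holding $\vec{s}_{-i}$ fixed. That is, applying the definition at each step $t = s_i, s_i+1, \ldots, s_i'-1$ (and using the discrete definition $\partial v_\ell / \partial s_i = v_\ell(t+1,\vec{s}_{-i}) - v_\ell(t,\vec{s}_{-i})$ given in the paper), the individual inequalities $v_j(t+1,\vec{s}_{-i}) - v_j(t,\vec{s}_{-i}) \le c\bigl(v_i(t+1,\vec{s}_{-i}) - v_i(t,\vec{s}_{-i})\bigr)$ telescope to
\[
v_j(\vec{s'}) - v_j(\vec{s}) \;\le\; c\bigl(v_i(\vec{s'}) - v_i(\vec{s})\bigr).
\]
Since valuations are monotone non-decreasing in every signal, both sides of this bound involve non-negative quantities, so we may replace $c$ by any $\alpha \ge c$ on the right-hand side, yielding $v_j(\vec{s'}) - v_j(\vec{s}) \le \alpha\bigl(v_i(\vec{s'}) - v_i(\vec{s})\bigr)$.

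Finally I would add the hypothesis $v_j(\vec{s}) \le \alpha v_i(\vec{s})$ to this last inequality to obtain
\[
v_j(\vec{s'}) \;\le\; \alpha v_i(\vec{s}) + \alpha\bigl(v_i(\vec{s'}) - v_i(\vec{s})\bigr) \;=\; \alpha v_i(\vec{s'}),
\]
which is exactly $v_i(\vec{s'}) \alphaapx v_j(\vec{s'})$.

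There is no real obstacle here: the only subtle point is that $c$-single-crossing by itself only yields the ratio-$c$ comparison of \emph{increments}, and one might worry that this does not directly translate into preservation of the ratio-$\alpha$ comparison of \emph{values}. The trick is that requiring $\alpha \ge c$ provides enough slack so that the existing gap $\alpha v_i(\vec{s}) - v_j(\vec{s}) \ge 0$ is not consumed by the increase in $v_j$ relative to $v_i$; monotonicity of the valuations ensures the increments are non-negative so the inequality $c \le \alpha$ can be used safely.
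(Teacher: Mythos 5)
Your proof is correct and follows essentially the same route as the paper's: telescoping the $c$-single-crossing increments, using $\alpha \ge c$ together with monotonicity of $v_i$ (so the increment $v_i(\vec{s'}) - v_i(\vec{s})$ is non-negative), and adding the hypothesis $v_j(\vec{s}) \le \alpha v_i(\vec{s})$. The only difference is that you make explicit the telescoping of the per-step single-crossing inequalities, which the paper leaves implicit.
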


\begin{proof}
By assumption, $\alpha v_i(\vec{s}) \geq v_j(\vec{s})$.  By $c$-single-crossing, $c \left( v_i(\vec{s'}) - v_i(\vec{s}) \right) \geq v_j(\vec{s'}) - v_j(\vec{s})$.  In addition, $\alpha \geq c$, so $\alpha v_i(\vec{s'}) \geq \alpha v_i(\vec{s}) + c \left( v_i(\vec{s'}) - v_i(\vec{s}) \right)$.  Then adding the first two inequalities and using this fact gives the desired result: $\alpha v_i(\vec{s'}) \geq v_j(\vec{s'})$.
\end{proof}

\begin{corollary} \label{cor:contapx} For valuations that satisfy $c$-single-crossing, the following are implied by Lemma~\ref{lem:contapx}:
\begin{itemize}
\item The contrapositive: For any $\alpha \geq c$, if $v_i(\vec{s}) \not \alphaapx v_j(\vec{s})$, then at any profile $\vec{s'} = (s_i', \vec{s}_{-i})$ for $s_i' < s_i$, $v_i(\vec{s'}) \not \alphaapx v_j(\vec{s'})$.  See Figure~\ref{fig:contapx}.
\item For any $\alpha \geq c$, if $v_i(\vec{s}) \geq v_j(\vec{s})$, then at any profile $\vec{s'} = (s_i', \vec{s}_{-i})$ for $s_i' > s_i$, $v_i(\vec{s'}) \alphaapx v_j(\vec{s'})$.
\end{itemize}
\end{corollary}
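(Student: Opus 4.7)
The plan is to obtain each bullet as an immediate consequence of Lemma~\ref{lem:contapx}, noting that $\alpha \geq c \geq 1$ throughout (the assumption $c \geq 1$ is standard from the abstract and is the relevant regime for the $c$-single-crossing definition).

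For the first bullet (the contrapositive), I would simply contrapose Lemma~\ref{lem:contapx}. The lemma says that if $v_i(\vec{t}) \alphaapx v_j(\vec{t})$ and $\vec{t'} = (t_i', \vec{t}_{-i})$ with $t_i' > t_i$, then $v_i(\vec{t'}) \alphaapx v_j(\vec{t'})$. Relabeling so that the "larger signal" profile in the lemma's hypothesis plays the role of $\vec{s}$ and the "smaller signal" profile plays the role of $\vec{s'}$ in the corollary's statement, the contrapositive reads: if the approximation fails at the higher signal $\vec{s}$, it must fail at any lower signal $\vec{s'} = (s_i', \vec{s}_{-i})$ with $s_i' < s_i$. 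This is exactly the statement of the first bullet, so no additional work beyond relabeling is needed.

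For the second bullet, the observation is that $v_i(\vec{s}) \geq v_j(\vec{s})$ is the special case $\alpha = 1$ of the approximation relation, and since $\alpha \geq c \geq 1$, the weaker relation $v_i(\vec{s}) \alphaapx v_j(\vec{s})$ also holds. Then I would directly invoke Lemma~\ref{lem:contapx} with signal $s_i' > s_i$ to conclude $v_i(\vec{s'}) \alphaapx v_j(\vec{s'})$.

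There is no real obstacle here; both parts are syntactic consequences of the lemma. The only thing to be careful about is making sure the monotonicity $\alpha \geq c \geq 1$ is in force so that "strictly greater" can be weakened to "within a factor of $\alpha$" before applying the lemma, and to record the direction of the signal change (smaller $s_i$ in the contrapositive, larger $s_i$ in the second item) so that the quantifiers in Lemma~\ref{lem:contapx} are matched correctly. Given how short the argument is, the proof can be written in a sentence or two per bullet.
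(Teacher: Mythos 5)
Your proposal is correct and matches the paper, which gives no separate proof and simply asserts the corollary is implied by Lemma~\ref{lem:contapx}: the first bullet is the exact contrapositive after swapping the roles of the two profiles, and the second bullet follows by weakening $v_j(\vec{s}) \leq v_i(\vec{s})$ to $v_j(\vec{s}) \leq \alpha v_i(\vec{s})$ (valid since $\alpha \geq c \geq 1$ and values are non-negative) and then applying the lemma. Your care about the direction of the signal change and the hypothesis $\alpha \geq c \geq 1$ is exactly the right bookkeeping.
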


\begin{comment}
\begin{figure}[h!]
\includegraphics[scale=.35]{images/lemcontapxppt} \quad\quad \includegraphics[scale=.35]{images/lemcontapxcontraposppt} \quad\quad \includegraphics[scale=.3]{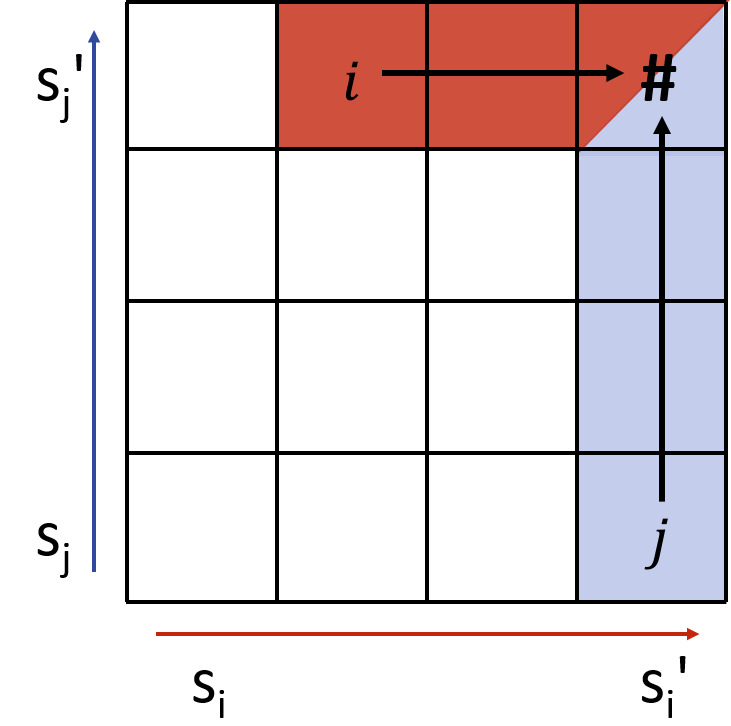}
\caption{Left: An illustration of Lemma~\ref{lem:contapx}.  Middle: An illustration of the contrapositive, mentioned in Corollary~\ref{cor:contapx}. Right: An illustration of a propagation conflict.}
\label{fig:contapxandpropconflict}
\end{figure}
\end{comment}

\begin{figure}
\begin{minipage}[t]{0.7\textwidth}
\centering
\includegraphics[scale=.35]{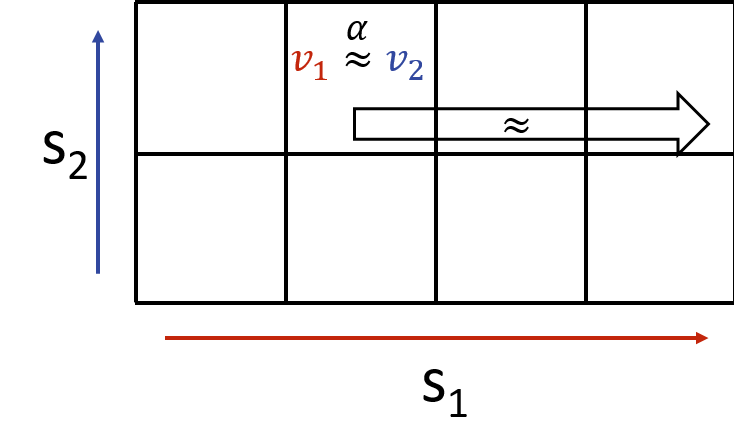} \quad\quad \includegraphics[scale=.35]{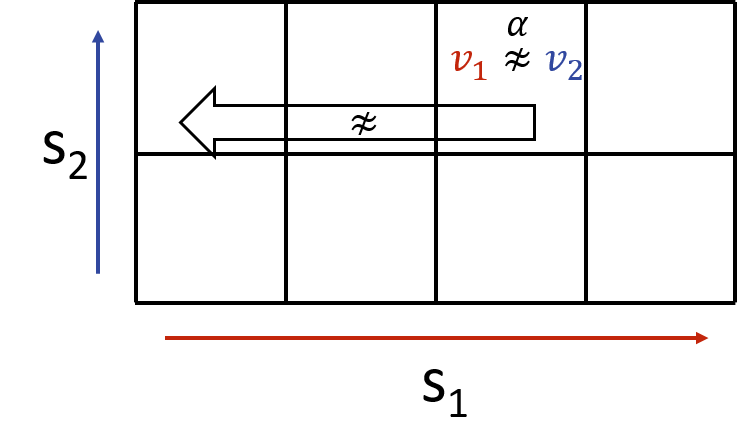}
\caption{Left: An illustration of Lemma~\ref{lem:contapx}.  Right: An illustration of the contrapositive, mentioned in Corollary~\ref{cor:contapx}.} \label{fig:contapx}
\end{minipage}
\hfill
\begin{minipage}[t]{0.25\textwidth}
\centering
\includegraphics[scale=.3]{images-new/figure2}
\caption{An illustration of a propagation conflict.} \label{fig:propconflict}
\end{minipage}
\end{figure}

\paragraph{Remark.}
We have defined our notion of $c$ single-crossing as a relaxation of the single-crossing definition in \citep{RTCoptimalrev}. 
If a set of valuation functions obey the  \citep{RTCoptimalrev} definition then they also obey the various other definitions of single-crossing used in \citep{milgrom1982theory,Aspremont82,maskin1992,ausubel1999generalized,dasgupta2000efficient,
bergemann2009information,CFK,che2015efficient,li2016approximation}, but not vice versa. 
A discussion regarding alternative definitions of single-crossing and the applicability of our results under these definitions appears in Appendix \ref{sec:alternative-definitions}.

\subsection{Concave valuations}

%\mfnote{I removed the following paragraph.}
%We introduce a concavity assumption similar to that of \citep{CFK}.  However, while (much like single-crossing), some form of this assumption appears necessary to give an approximation to revenue, we can relax the assumption and our approximation to the smallest $d$ for which this condition holds.  In addition, $d$-concavity can be used to give improved bounds for welfare approximation.

%\kgnote{This section is about single-crossing and the following definition isn't, but I'm not sure where else it should go.}

\begin{definition}[Concave Valuations] \label{def:concavity}
Valuations are said to be concave if for all bidders $i, j$ and for any $\vec{s}_{-j}$ and $\vec{s'}_{-j}$ such that $(\vec{s}_{-j})_\ell \leq (\vec{s'}_{-j})_\ell$ for all $\ell \neq j$, it holds that
$$ \frac{\partial}{\partial s_j} v_i(\vec{s}_{-j}, s_j) \geq \frac{\partial}{\partial s_j} v_i(\vec{s'}_{-j}, s_j).$$
\end{definition}

That is, when anyone aside from $j$'s signals are lower, then every bidder $i$ %$i \neq j$
is more sensitive to the change in $j$'s signal as when those bidders (aside from $j$) have higher signals.

\vspace{0.1in}

\noindent {\bf Remark}[$d$-concave valuations]: We also consider a parameterized version of concavity. Valuations are said to be $d$-concave if the inequality above is replaced by
$$ d \cdot \frac{\partial}{\partial s_j} v_i(\vec{s}_{-j}, s_j) \geq \frac{\partial}{\partial s_j} v_i(\vec{s'}_{-j}, s_j).$$
That is, every bidder is at least $1/d$ more sensitive to a change in $j$'s signal when bidders' signals are lower.

\subsection{Monotonicity and propagation}

We prove that several mechanisms are truthful and prior-free approximations to social welfare.  In each proof, we have two components: monotonicity and approximation.  To demonstrate that the approximation is prior-free, the approximation component is established for every profile $\vec{s}$.
To demonstrate monotonicity in randomized mechanisms, we need to prove that monotonicity holds for any allocation that might be realized after the random choices of the mechanism.

In the construction of monotone allocations, we use the term an allocation {\em propagation}.
Suppose we assign $x(\vec{s}) = i$. Then, to ensure monotonicity of $i$'s allocation as his signal increases, we \emph{propagate} the allocation to $i$ to all profiles $s_i' > s_i$.  That is, the propagation operation is the assignment of $x(s_i', \vec{s}_{-i}) = i$ for all $s_i' > s_i$ whenever $x(\vec{s}) = i$.

A \emph{propagation conflict}, as depicted in Figure~\ref{fig:propconflict}, refers to the event where we allocate in a way such that multiple propagations cause multiple conflicting winners at some signal profile.  Let $s_i' > s_i$ and $s_j' > s_j$.  If we assign $x(s_i, s_j', \vec{s}_{-ij}) = i$, this must propagate such that $x(s_i', s_j', \vec{s}_{-ij}) = i$.  If we also assign $x(s_i', s_j, \vec{s}_{-ij}) = j$, this allocation must propagate such that $x(s_i', s_j', \vec{s}_{-ij}) = j$.  Then at $(s_i', s_j', \vec{s}_{-ij})$, we have a \emph{propagation conflict} because both $i$ and $j$ must be the winner to satisfy monotonicity, which is not possible.

Having propagation in our algorithm ensures that the allocation is monotone, so instead of verifying monotonicity of the allocation, we only need to verify that no propagation conflicts occur.

\begin{comment}
\begin{definition}[Condition $d$-A.3] When $i$'s signal is smaller, he is %($d$ times)
more sensitive to the signals of others\aenote{ up to a factor of $d$}:
$$\forall i \quad  s_i < s_i' \implies \forall j \quad d \cdot \frac{\partial v_i(s_i, \vec{s}_{-i})}{\partial s_j} \geq \frac{\partial v_i(s_i', \vec{s}_{-i})}{\partial s_j} $$
or as we will use it,
$$v_3(s_1, \vec{s}_{-1}) - v_3(s_1^*, \vec{s}_{-1}) \leq d \left[ v_3(s_1, 0, \vec{s}_{-13}) - v_3(s_1^*, 0, \vec{s}_{-13}) \right] .$$
\end{definition}
\end{comment}

\section{Impossibility result for randomized mechanisms without single-crossing}
\label{sec:impossibility}

Consider the case where every bidder $i$'s signal space is $S_i = \{0,1\}$, and each agent $i$ has a valuation $v_i(\vec{s})=\prod_{j\neq i} s_j$; that is, the bidder has a value $1$ if and only if every other agent has signal $1$. For every bidder $i$, for some small $\vareps>0$, $$s_i=\begin{cases}
1& \quad w.p.\  \vareps\\
0& \quad w.p. \ 1-\vareps.
\end{cases}$$
The optimal expected welfare is $1$ whenever at least $n-1$ bidders have a $1$ signal. This happens with probability $\vareps^n+n\cdot \vareps^{n-1}(1-\vareps)$. Therefore,
\begin{eqnarray}
\opt \quad = \quad\vareps^n+n\cdot \vareps^{n-1}(1-\vareps) \quad  > \quad n \vareps^{n-1}(1-\vareps).\label{eq:rand_lb_opt}
\end{eqnarray}

Consider any truthful mechanism at profile $(s_i=0,\vec{s}_{-i}=\vec{1})$.  At this profile, the mechanism gets bidder $i$'s value in welfare with probability that he is allocated, $x_i(s_i=0,\vec{s}_{-i}=\vec{1})$, and otherwise gets zero since no other bidder has non-zero value.  By monotonicity, for every $i$, we have that $x_i(s_i=0,\vec{s}_{-i}=\vec{1})\leq x_i(\vec{1})$, and by feasibility, $\sum_i x_i(\vec{1})\leq 1$. Under any other profile (where at least two signals are $0$), all agents have zero value, so welfare is zero. The expected welfare of any truthful mechanism is thus bounded by
\begin{eqnarray}
\textsc{Welfare} & = & \sum_i \Pr[s_i=0,\vec{s}_{-i}=\vec{1}]\cdot x_i(s_i=0,\vec{s}_{-i}=\vec{1}) \cdot 1 +\Pr[\vec{s}=\vec{1}]\sum_i x_i(\vec{1}) \cdot 1 \nonumber\\
& = & \sum_i \vareps^{n-1}(1-\vareps)\cdot x_i(s_i=0,\vec{s}_{-i}=\vec{1})+
\vareps^n\sum_i x_i(\vec{1})\nonumber\\
&\leq & \vareps^{n-1}(1-\vareps)\sum_i x_i(\vec{1})+
\vareps^n\sum_i x_i(\vec{1})\nonumber\\
&\leq & \vareps^{n-1}(1-\vareps)+\vareps^n\nonumber\\
& = & \vareps^{n-1}.\label{eq:rand_lb_mech}
\end{eqnarray}

Combining \eqref{eq:rand_lb_opt} with \eqref{eq:rand_lb_mech}, we get that the approximation ratio of any monotone mechanism is $\textsc{Welfare}/\opt \leq \frac{1}{n(1-\vareps)}$ which can be made arbitrarily close to $1/n$; this is the same as the welfare attained by just allocating to a random bidder.
\begin{comment}

The expected welfare from giving the item to a single agent is $SINGLE=\Pr[\mbox{other bidders have signal 1}]=\epsilon^{n-1}$. Therefore $OPT/SINGLE=n(1-\vareps)$ which can be arbitrarily close to $n$.

\fix{To Do:} state the truthful iff monotone. Show that when an agent's signal is $1$ it is not truthful to not  allocate to that agent. Conclude that you might as well give the item to an arbitrary agent.

\end{comment}

These last examples motivate the following notion of approximate single-crossing.
\section{$c$-Approximations for Settings with 2 Bidders or 2 Signals} \label{sec:warmups}

In this section we give deterministic truthful mechanisms for two special cases.

\subsection{A $c$-Approximation Mechanism for $2$ Agents} \label{subsec:2bidders}

A very simple idea gives a $c$-approximation for two bidders, independent of the sizes of their signal spaces.  Start at the signal profile that is the origin $(0,0)$ and allocate to the bidder who has the largest value at this signal profile, say bidder $1$.  Propagate the allocation as $1$'s signal increases.  Then move to the profile $(s_1 = 0, s_2 = 1)$ and repeat this: allocate to the bidder who has the highest value here, propagate as that bidder's signal increases, and then move to the profile where the losing bidder's signal has increased by one.  This algorithm is illustrated in Figure~\ref{fig:twobiddercoloring}. \\
\\
\noindent \textbf{Two-bidder coloring:}
\begin{itemize}
\item Let $s_1 = 0$ and $s_2 = 0$.
\item While $s_1 \leq |S_1|$ or $s_2 \leq |S_2|$:
	\begin{itemize}
		\item Let $i \in \argmax _i v_i(s_1,s_2)$ and let $j \neq i$.
		\item For all $s_i' \geq s_i$, set $x(s_i', s_j) = i$.
		\item Let $s_j = s_j + 1$.
	\end{itemize}
\end{itemize}

\begin{figure}[h!]
\begin{center}
	\includegraphics[scale=.3]{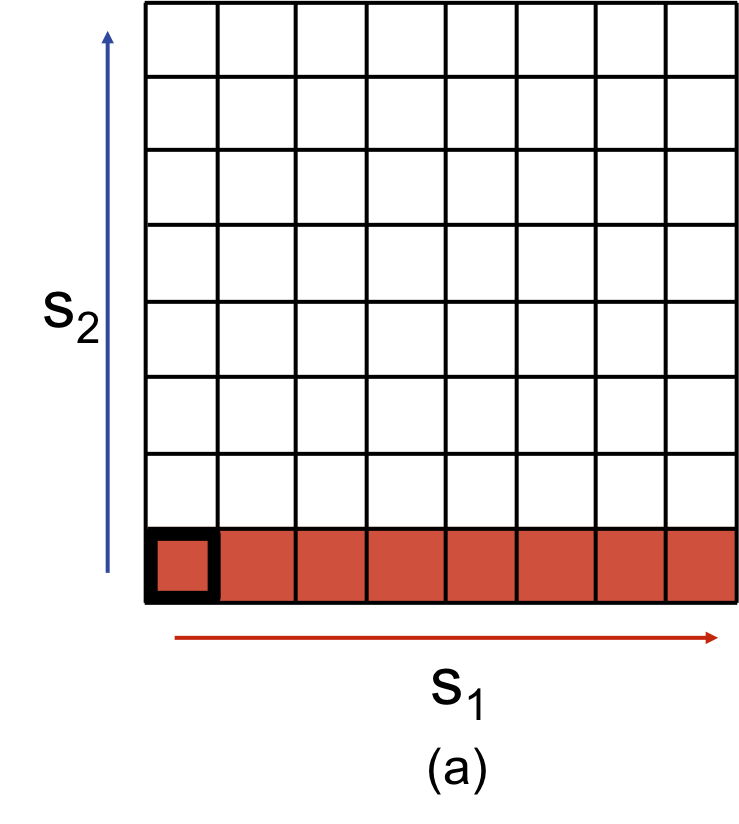} \quad\quad\quad \includegraphics[scale=.3]{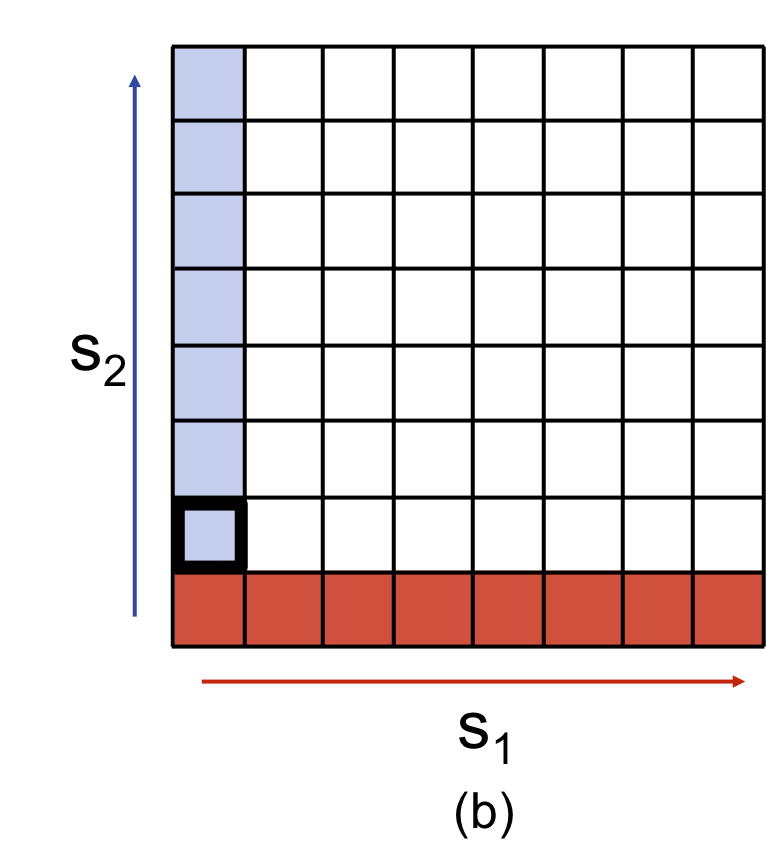} \quad\quad\quad \includegraphics[scale=.3]{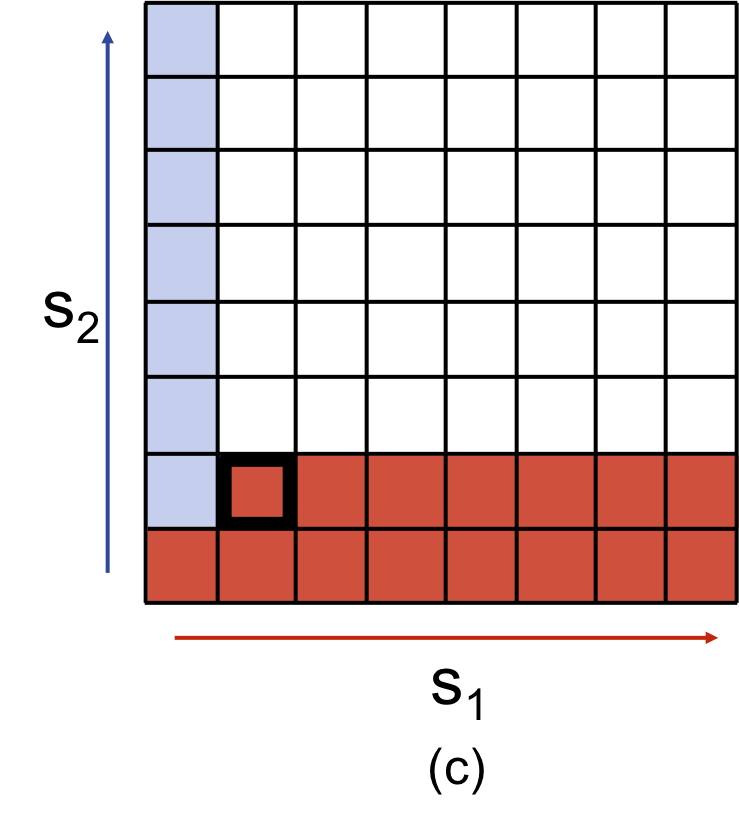}
	\caption{An illustration of the two-bidder coloring algorithm.  (a) At $(0,0)$, bidder 1 has the highest value.  Allocate to 1 and propagate this to all profiles $(s_1,0)$ for $s_1 > 0$.  (b) Move to $(0,1)$, where bidder 2 has the highest value.  Allocate to 2 and propagate this allocation to all profiles $(0, s_2)$ where $s_2 > 1$. (c) Move to $(1,1)$ where $1$ is highest.  Allocate to $1$ and propagate. } %(d) Move to $(1,2)$ where $1$ is highest.  Allocate to $1$ and propagate.  (e) Move to $(1,3)$ where $2$ is highest.  Allocate to $2$ and propagate. (f) The final outcome of this algorithm: a monotone allocation which gives a $c$-approximation.}
	\label{fig:twobiddercoloring}
\end{center}
\end{figure}

\begin{theorem}
	When we have two bidders whose valuations satisfy $c$-single-crossing, the allocation function $x$ given from the two-bidder coloring defines a truthful, deterministic, prior-free mechanism that guarantees a $c$-approximation to social welfare. \label{thm:2bidder}
\end{theorem}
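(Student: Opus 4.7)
The plan has two components standard in the paper: monotonicity, which by Propositions~\ref{prop:char-ic} and~\ref{prop:deterministic_payment} yields a deterministic truthful mechanism, and a prior-free $c$-approximation guarantee at every signal profile.

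For monotonicity, I would argue by examining the combinatorial structure of the algorithm. At each iteration the loop sits at a current ``frontier'' position $(a,b)$, declares a winner $i$, and colors the entire monotone ray in $i$'s own axis starting from $(a,b)$, before shifting the frontier by one along the losing bidder's axis. A short induction on iterations shows that the final coloring gives each row the form (a prefix where bidder~$2$ wins, followed by a suffix where bidder~$1$ wins), and each column the symmetric form. In particular each profile is colored exactly once, and $x_1(s_1, s_2)$ is nondecreasing in $s_1$ with $s_2$ fixed, while $x_2(s_1, s_2)$ is nondecreasing in $s_2$ with $s_1$ fixed. This is exactly the monotonicity needed to invoke Proposition~\ref{prop:char-ic}, from which implementability and the payment identity of Proposition~\ref{prop:deterministic_payment} produce a truthful, deterministic, prior-free mechanism.

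For the approximation, take an arbitrary profile $(s_1,s_2)$. Following the algorithm's history, it was colored at some staircase point $(a,b)$ by a winner $i$, either directly at that point or by propagation along $i$'s own axis (so that $s_i$ exceeds its value at the staircase point while $s_j$ is unchanged). In the first case, $v_i(s_1,s_2) \geq v_j(s_1,s_2)$ by the $\argmax$ rule. In the second case, I apply Corollary~\ref{cor:contapx} with $\alpha = c$: since $v_i(a,b) \geq v_j(a,b)$ and only bidder $i$'s signal has been increased, the inequality propagates to $v_j(s_1,s_2) \leq c \cdot v_i(s_1,s_2)$. Either way the allocated bidder attains at least a $1/c$ fraction of the maximum of $v_1(s_1,s_2)$ and $v_2(s_1,s_2)$, which with only two bidders is exactly a prior-free $c$-approximation to social welfare.

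The main (mild) obstacle is the bookkeeping verifying that the staircase construction indeed covers every profile with a unique winner and avoids propagation conflicts of the type depicted in Figure~\ref{fig:propconflict}. This is where the restriction to two bidders is essential: because propagation happens along a single axis per iteration and the frontier advances monotonically in the losing coordinate, any later iteration can only color cells strictly above or to the right of previously colored rays, so overlap is impossible. Once this geometric check is in place, monotonicity and the $c$-approximation follow from the two short arguments above.
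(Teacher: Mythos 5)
Your proposal is correct and follows essentially the same route as the paper: monotonicity is immediate from the propagation rule (your staircase bookkeeping just makes explicit the no-conflict check the paper leaves implicit), and the approximation step is the paper's argument verbatim, invoking the second bullet of Corollary~\ref{cor:contapx} where the paper cites Lemma~\ref{lem:contapx} directly. No gaps.
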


\begin{proof}
(Monotonicity.) By construction, whenever we set $x(s_1,s_2)=1$, we set $x(s_1',s_2)=1$ for every $s_1'>s_1$.  Similarly, if $x(s_1,s_2)=2$, then $x(s_1,s_2')=2$ for every $s_2'>s_2$.

(Approximation.) Suppose for any profile $(s_1, s_2)$ that, without loss of generality, the item is allocated to bidder 1.  The algorithm sets $x(s_1,s_2)=1$ either because $v_1(s_1,s_2)\geq v_2(s_1,s_2)$, in which case, the highest valued bidder is the winner, or because it was true that $v_1(s'_1,s_2)\geq v_2(s'_1,s_2)$ for some $s'_1<s_1$.  By Lemma~\ref{lem:contapx}, then $c v_1(s_1, s_2) \geq v_2(s_1, s_2)$.  (A symmetric argument proves the approximation for $x(s_1,s_2)=2$.)
\end{proof}

%We observe that for every $(s_1,s_2)\in S_1\times S_2$ such that $x(s_1,s_2)=1$, $c\cdot v_1(s_1,s_2)\geq v_2(s_1,s_2)$; similarly, if $x(s_1,s_2)=2$, then $c\cdot v_2(s_1,s_2)\geq v_1(s_1,s_2)$.

	%A signal profile $(s_1,s_2)$ is either allocated to $x(s_1,s_2)=1$ because $v_1(s_1,s_2)\geq v_2(s_1,s_2)$, in which case, the claim follows, or because it was true that $v_1(s'_1,s_2)\geq v_2(s'_1,s_2)$ for some $s'_1<s_1$.  By Lemma~\ref{lem:contapx}, then $c v_1(s_1, s_2) \geq v_2(s_1, s_2)$.  A symmetric argument proves the claim for $x(s_1,s_2)=2$.

As we will show by the example in Figure \ref{fig:c-is-tight}, the bound of $c$ is tight.  Observe that these valuations are also concave.

\subsection{A $c$-Approximation to Welfare for Settings with 2 Signals} \label{sec:2signals}

In this section, we consider the case where the size of the signal space for each bidder is at most $2$: a bidder's signal is either low ($s_i = 0$) or high ($s_i = 1$).  We will denote by $H(\vec{s})$ the set of high-signal-bidders at $\vec{s}$, that is, $H(\vec{s}) = \{i \mid s_i = 1\}$.%, and similarly $L(\vec{s}) = \{i \mid s_i = 0\}$.

For monotonicity in an allocation, if we allocate to bidder $i$ at $\vec{s}$, then for all $\vec{s'} = (\vec{s}_{-i}, s'_i)$ where $s'_i > s_i$, we must propagate the allocation, allocating to bidder $i$ at these profiles as well.  The 2-signal case is special because (1) an allocation can only propagate to at most one profile, from $s_i = 0$ to $s_i$ = 1, and (2) if $s_i = 1$ and we allocate to $i$, then no propagation is necessary, so bidders with high signals are in a sense special.  We can capitalize on these properties to achieve a $c$-approximation, independent of the number of bidders $n$.

%The allocation is as follows.  For any report $\vec{s}$, if the high-signal bidder with the largest value $i_H \in \argmax_{i \in H(\vec{s})} v_i(\vec{s})$ $c$-approximates the bidder with the largest value $i^* \in \argmax_{i} v_i(\vec{s})$, then allocate to $i_H$.  Otherwise, allocate to $i^*$ and propagate.

The mechanism is simple: consider profiles $\vec{s}$ in order of the number of high signals (equivalently, in increasing hamming distance $||\vec{s}||_0$ from the origin).
%The allocation rule is simple:
For any profile $\vec{s}$ in which the allocation has not already been determined, if the high-signal bidder with the largest value $i_H$ $c$-approximates the bidder with the largest value $i^*$, then allocate to $i_H$.  Otherwise, allocate to $i^*$ and propagate. \\
\\
\noindent \textbf{High-if-possible:}
\begin{itemize}
\item For all profiles $\vec{s}$ increasing in $||\vec{s}||_0$, if $x(\vec{s})$ is undefined:
	\begin{itemize}
	\item Let $i_H \in \argmax_{i \in H(\vec{s})} v_i(\vec{s})$ and $i^* \in \argmax_{i} v_i(\vec{s})$.
	\item If $v_{i_H}(\vec{s}) \capx v_{i^*}(\vec{s})$, set $x(\vec{s}) = i_H$.
	\item Otherwise, set $x(\vec{s}) = i^*$ and propagate to $x(s_{i^*} = 1, \vec{s}_{-i^*})$.
	\end{itemize}
\end{itemize}

%Note that if $v_{i_H}(\vec{s}) \not \capx v_{i^*}(\vec{s})$, then because $i_H$ has the highest value of all high-signal-bidders, it must be that no high-signal-bidder can approximate $i^*$'s value, and thus $s_{i^*} = 0$.  Then we propagate the allocation to $i^*$ to $(\vec{s}_{-i^*}, 1)$.

\begin{theorem} When we have $n$ bidders where the size of the signal space is at most 2 and the valuations satisfy $c$-single-crossing, the allocation $x$ from high-if-possible gives a monotone, deterministic, and prior-free $c$-approximation to social welfare. \label{thm:2signals} \end{theorem}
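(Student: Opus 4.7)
The proof splits into a prior-free $c$-approximation argument at every signal profile and a monotonicity argument, which together with Proposition~\ref{prop:char-ic} will yield truthfulness. I would dispatch the approximation piece first, as it follows almost immediately from Corollary~\ref{cor:contapx}. When the algorithm sets $x(\vec{s})=i_H$ directly, the guard condition $v_{i_H}(\vec{s}) \capx v_{i^*}(\vec{s})$ is by definition a $c$-approximation to the maximum value. When it sets $x(\vec{s})=i^*$ directly, the allocated bidder is the outright maximum. Otherwise $x(\vec{s})=i$ arose via propagation from a profile $\vec{s}'=(s_i=0,\vec{s}_{-i})$ in which the direct assignment had set $i=i^*(\vec{s}')$. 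Since $v_i(\vec{s}') \geq v_j(\vec{s}')$ for every $j$ and $s_i$ strictly increases in moving from $\vec{s}'$ to $\vec{s}$, Corollary~\ref{cor:contapx} gives $v_i(\vec{s}) \capx v_j(\vec{s})$ for every $j$, so the propagated allocation is also a $c$-approximation at $\vec{s}$.

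The substantive work is monotonicity, which amounts to ruling out propagation conflicts. Because profiles are processed in increasing order of $\|\vec{s}\|_0$ and each direct allocation that propagates does so to exactly one profile one level higher (namely, only when the winner $i^*(\vec{s})$ has low signal at $\vec{s}$), the only way monotonicity can fail is via two distinct profiles $\vec{s},\vec{s}''$ at the same level $k$ propagating to the same target $\vec{s}^*$ at level $k+1$ with conflicting winners. A short coordinate analysis forces $\vec{s}$ and $\vec{s}''$ to differ in exactly two coordinates, say $i\neq j$, so that $\vec{s}=(s_i=0,s_j=1,\vec{z})$, $\vec{s}''=(s_i=1,s_j=0,\vec{z})$, with $x(\vec{s})=i$ and $x(\vec{s}'')=j$.

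I would then derive a contradiction from $c$-single-crossing. Because the direct assignment at $\vec{s}$ entered the propagate branch, $v_{i_H(\vec{s})}(\vec{s}) \notcapx v_i(\vec{s})$, and since $j \in H(\vec{s})$ this yields the strict inequality $v_i(\vec{s}) > c \cdot v_j(\vec{s})$. Symmetrically, $v_j(\vec{s}'') > c \cdot v_i(\vec{s}'')$. Let $\vec{s}_0=(s_i=0,s_j=0,\vec{z})$. Applying the contrapositive in Corollary~\ref{cor:contapx} to decrease $s_j$ from $\vec{s}$ to $\vec{s}_0$ preserves $v_i(\vec{s}_0) > c \cdot v_j(\vec{s}_0)$, and applying it to decrease $s_i$ from $\vec{s}''$ to the same $\vec{s}_0$ preserves $v_j(\vec{s}_0) > c \cdot v_i(\vec{s}_0)$. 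Chaining gives $v_i(\vec{s}_0) > c^2 \cdot v_i(\vec{s}_0)$, impossible for $c\geq 1$ (the degenerate $v_i(\vec{s}_0)=0$ case is ruled out by strictness). Hence no propagation conflict can arise, $x$ is monotone, and the theorem follows. The main obstacle is precisely this conflict analysis, which crucially uses the 2-signal assumption so that propagations move at most one level up and there are only two candidate source profiles to reconcile.
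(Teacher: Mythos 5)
Your proof is correct and takes essentially the same route as the paper's: the approximation argument is identical, and your monotonicity argument is the paper's ``common ancestor'' argument (Figure~\ref{fig:commonanestor}), reducing any conflict to two sources $\vec{s}=(s_i=0,s_j=1,\vec{z})$ and $\vec{s}''=(s_i=1,s_j=0,\vec{z})$ and exploiting $c$-single-crossing toward $\vec{s}_0=(0,0,\vec{z})$. The only (valid) difference is in the last step: you push \emph{both} violated approximations down to $\vec{s}_0$ via the contrapositive in Corollary~\ref{cor:contapx} and chain them into the contradiction $v_i(\vec{s}_0)>c^2\,v_i(\vec{s}_0)$, whereas the paper pushes only the violation at $\vec{s}$ down to $\vec{s}_0$ and then back up along $s_i$ via Lemma~\ref{lem:contapx} to conclude that the algorithm could never have allocated to $j$ at $\vec{s}''$ in the first place.
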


%\begin{figure}[h!]
%	\includegraphics[scale=.5]{images/commonancestor}
%\caption{An argument for why the algorithm results in no conflicts via the common ancestor of any two nodes that might conflict.} \label{fig:commonanestor}
%\end{figure}

%\begin{center}
\begin{comment}
\begin{tabular}{| c | c |} \hline
$(0,1)$ & $(1,1)$ \\
$v_1 = c$ & $v_1 = c+1$ \\
$v_2 = 1$ & $v_2 = c+1$ \\ \hline
$(0,0)$ & $(1,0)$ \\
$v_1 = 0$ & $v_1 = 1$ \\
$v_2 = 0$ & $v_2 = c$ \\ \hline
\end{tabular}
\end{comment}
%\end{center}

\begin{figure}
\begin{minipage}[t]{0.4\textwidth}
\centering
\includegraphics[width=.7\linewidth]{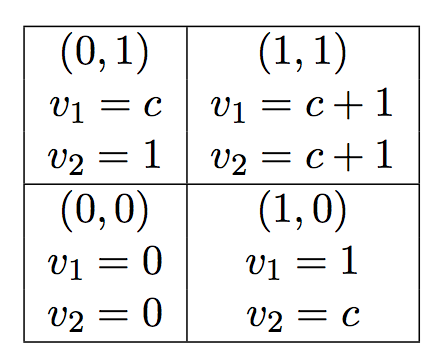}
\caption{An example with 2 bidders and 2 signals for each, where no deterministic mechanism can improve upon a $c$ approximation.} \label{fig:c-is-tight}
\end{minipage}
\hfill
\begin{minipage}[t]{0.55\textwidth}
\centering
\includegraphics[width=1\linewidth]{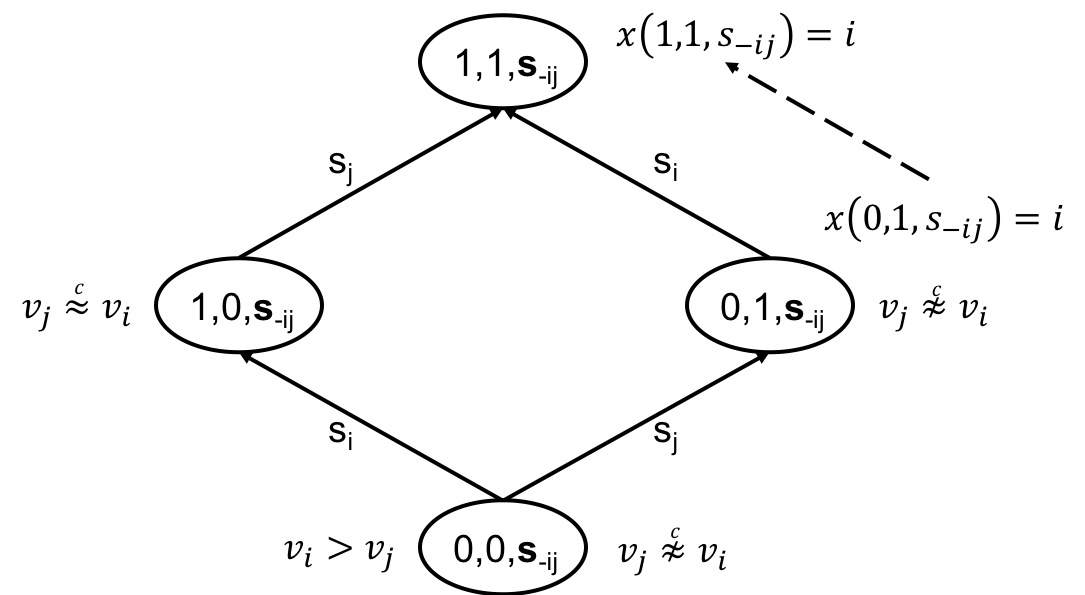}
\caption{An argument for why the algorithm results in no conflicts via the common ancestor of any two nodes that might conflict.} \label{fig:commonanestor}
\end{minipage}
\end{figure}

\begin{proof}
(Approximation.) At every profile $\vec{s}$, the allocated bidder $c$-approximates the highest valued bidder.  If the allocation at $\vec{s}$ is to bidder $i$ and is determined by propagation, by the allocation rule, it must be that at $\vec{s'} = (s_i = 0, \vec{s}_{-i})$ that $i$ was the highest valued bidder, hence $v_i(\vec{s'}) \geq v_j(\vec{s'})$ for all bidders $j$.  By %$c$-single-crossing
Lemma~\ref{lem:contapx}, since only $s_i$ increases from $\vec{s'}$ to $\vec{s}$, then $c \cdot v_i(\vec{s}) \geq v_j(\vec{s})$, and thus the maximum value is $c$-approximated by $i$'s value.  Otherwise, by definition of the allocation rule, we allocate to a bidder who $c$-approximates the maximum valued bidder at $\vec{s}$.

(Monotonicity.) Our propagation step ensures monotonicity; we just need to verify that we do not cause any propagation conflicts.  The argument as to why not follows, and is illustrated in in Figure~\ref{fig:commonanestor}.  Suppose that at $\vec{s}$ we allocate to $i$ with $s_i = 0$ and propagate $i$'s allocation to $(s_i = 1, \vec{s}_{-i})$.  To have a propagation conflict at $(s_i = 1, \vec{s}_{-i})$ due to propagation of both $i$ and some other winner $j$, we observe that a propagation of $j$ can only come from allocation at a profile where $s_j = 0$ to a cell where $s_j = 1$: there could only be a conflict at $(\vec{s}_{-i}, s_i = 1)$ if $s_j = 1$.  By definition of the allocation rule, we know that $v_j(\vec{s}) \not \capx v_i(\vec{s})$, or we would have allocated to $j$ at $\vec{s}$.  Hence, by Corollary~\ref{cor:contapx},
%$c$-single-crossing,
$v_j(s_j=0, \vec{s}_{-j}) \not \capx v_i(s_j=0, \vec{s}_{-j})$, and thus $v_i(s_j = 0, \vec{s}_{-j}) > v_j(s_j = 0, \vec{s}_{-j})$.  By Lemma~\ref{lem:contapx}, this implies that %another application of $c$-single-crossing, then
$v_i( s_i = 1, s_j = 0, \vec{s}_{-ij}) \capx v_j( s_i = 1, s_j = 0, \vec{s}_{-ij})$.  Hence, there exists a high-signal bidder whose value $c$-approximates $j$'s value, so the algorithm would never allocate to $j$ at this profile.  Thus, there are no propagation conflicts.
\end{proof}

%The bound of $c$ is tight as shown by the example in Figure \ref{fig:c-is-tight}.

We next show that the $c$-approximation is tight.
In the example depicted in Figure \ref{fig:c-is-tight}, the valuations satisfy $c$-single-crossing, and any allocation $x$ that achieves better than a $c$-approximation must allocate to bidder 1 at profile $(0,1)$ and to bidder 2 at profile $(1,0)$.  However, for monotonicity, we need to propagate these allocations, which would cause a propagation conflict at $(1,1)$; hence no monotone allocation that achieves better than a $c$-approximation is possible. In fact, in Subsection~\ref{sec:random_c_lb}, we show that the $c$ is tight even if we consider random and truthful-in-expectation mechanisms that are given the prior distribution over signals.

Further, the approximations for this $2$-signal case and the $2$-bidder case of Section~\ref{subsec:2bidders} are tight in another sense, as proven in Subsection~\ref{sec:nocapprox}.

%\begin{proposition} \label{prop:superc}
%Let $n$ be the number of bidders, $k_i$ the size of bidder $i$'s signal space, and $k = \max_i k_i$.  For any setting where $n \geq 3$, $k_i \geq 3$ for some $i$, and the buyer valuations satisfy $c$-single-crossing, there exists an instance where no truthful deterministic prior-free allocation can achieve a $c$-approximation to welfare.
%\end{proposition}

\subsection{Randomized $c$ lower bound with $c$-single-crossing}\label{sec:random_c_lb}
In Section \ref{sec:2signals}, we presented a deterministic, prior-free, and universally truthful mechanism guaranteeing a $c$-approximation for the case where each bidder has two signals. In this section, we show this is essentially tight, even if one considers randomized, truthful in expectation mechanism that are given the prior. This essentially shows that the algorithm in  Section \ref{sec:2signals} is optimal for that case.

Consider the case where there are $n$ bidders, and for $i\in\{1,\ldots,n\}$, agent $i$'s value is defined as follows:
\begin{eqnarray*}
v_i=\begin{cases}
	0\quad &s_i=0\text{ and }\exists j\neq i: s_j=0\\
	\frac{1}{c}\quad &s_i=1\text{ and }\exists j\neq i: s_j=0\\
	1\quad &s_i=0\text{ and }\forall j\neq i: s_j=1\\
	1+\frac{1}{c}\quad &s_i=1\text{ and }\forall j\neq i: s_j=1\\
\end{cases}
\end{eqnarray*}
It is easy to see the valuations are $c$-single-crossing since whenever $i$'s signal increases from $0$ to $1$, his value increases by $1/c$, while every other bidder's value increases by at most $1$. Consider the following (known) correlated distribution over signal profile.  The probability that $\vec{s}=\vec{1}$ is an arbitrarily small $\epsilon$, and the probability for any vector profile where $s_i=0$ for exactly one coordinate $i$ and $s_j=1$ for every other coordinate is $(1-\epsilon)/n$. It is immediate that $$\opt = (1-\epsilon) + \epsilon(1+1/c) > 1.$$

Consider any truthful mechanism at profile $(s_i=0,\vec{s}_{-i}=\vec{1})$.  At this profile, the mechanism gets bidder $i$'s value in welfare with probability that he is allocated, $x_i(s_i=0,\vec{s}_{-i}=\vec{1})$, and otherwise gets $1/c$. By monotonicity, for every $i$, we have that $x_i(s_i=0,\vec{s}_{-i}=\vec{1})\leq x_i(\vec{1})$, and by feasibility, $\sum_i x_i(\vec{1})\leq 1$. We have that 
\begin{eqnarray*}
\textsc{Welfare} & \leq & \sum_i \Pr[s_i=0,\vec{s}_{-i}=\vec{1}]\cdot \left(x_i(s_i=0,\vec{s}_{-i}=\vec{1}) \cdot 1+(1-x_i(s_i=0,\vec{s}_{-i}=\vec{1}))/c\right) \nonumber\\ & &+\quad\Pr[\vec{s}=\vec{1}]\sum_i x_i(\vec{1}) \cdot (1+1/c) \\
& \leq & \sum_i (1-\epsilon)/n\cdot (x_i(s_i=0,\vec{s}_{-i}=\vec{1})+1/c)+\epsilon(1+1/c)\\
&\leq & (1-\epsilon)/n\sum_i x_i(\vec{1})+(1-\epsilon)/c + \epsilon + \epsilon/c\\
&\leq & 1/n+1/c+\epsilon.
\end{eqnarray*}

Therefore, $\textsc{Welfare}/\opt\leq 1/n+1/c+\epsilon$, which as $n$ tends to $\infty$ and as $\epsilon$ is arbitrarily small, tends to $1/c$, hence we get a lower bound that tends to $c$ as $n$ tends to $\infty$.

\subsection{No $c$-approximation exists for more than $2$ agents with more than $2$ signals}\label{sec:nocapprox}

The minimal scenario that does not fall into the regime where we have already established a tight $c$-approximation is one with three bidders, two with signal space of size 2, and one with signal space of size 3.  That is, $S_1 = \{0,1,2\}$ and $S_2 = S_3 = \{0,1\}$.
We show that this scenario generally does not admit a $c$-approximation.

\begin{proposition}\label{prop:superc}
Let $n$ be the number of bidders, and $k_i$ the size of bidder $i$'s signal space. %, and $k = \max_i k_i$.
For any setting with $n \geq 3$, $k_i \geq 3$ for some $i$, and $c$-single-crossing valuations, there exists an instance where no truthful deterministic prior-free allocation can achieve a $c$-approximation to welfare.
\end{proposition}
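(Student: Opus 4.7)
The plan is to prove the proposition by exhibiting an explicit counterexample in the minimal setting $n = 3$, $|S_1| = 3$, $|S_2| = |S_3| = 2$, and then extending it to all larger settings by padding. By Proposition~\ref{prop:char-ic} it suffices to rule out any monotone allocation achieving a $c$-approximation pointwise, so the whole problem reduces to a combinatorial question on a $3 \times 2 \times 2$ grid: partition the 12 profiles into three sets $W_1, W_2, W_3$ with $W_i$ upward-closed in $s_i$, and show that for our chosen valuations, no such partition satisfies the $c$-approximation inequality at every profile.

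First I would design valuations $v_1, v_2, v_3$ that are $c$-single-crossing everywhere, which reduces to checking the discrete inequality $c\bigl(v_i(s_i{+}1,\vec s_{-i}) - v_i(s_i,\vec s_{-i})\bigr) \geq v_j(s_i{+}1,\vec s_{-i}) - v_j(s_i,\vec s_{-i})$ on every unit edge in every coordinate direction. The numerical targets would be chosen so that, in the middle layer $s_1 = 1$, a few ``pressure point'' profiles admit only a restricted set of bidders as $c$-admissible winners (in the sense of being within a factor $c$ of the local maximum), and those restrictions conflict with the monotonicity constraints propagated in from the outer layers $s_1 = 0$ and $s_1 = 2$.

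Given such valuations, I would then perform a finite case analysis over all monotone partitions of the 12 profiles. For each admissible choice at the pressure-point profiles, monotone propagation (in $s_1$ for bidder 1 and in $s_2, s_3$ for bidders 2 and 3 respectively) forces some neighboring profile to receive a bidder whose value is strictly less than $1/c$ of the local maximum, violating $c$-approximation. Once the minimal case is settled, the extension to arbitrary $n \geq 3$ and $k_i \geq 3$ is routine: add $n-3$ dummy bidders whose valuation is identically zero and whose signals do not enter any other valuation, and extend any signal space by replicating the boundary (so values remain constant above the original maximum signal). Both operations preserve $c$-single-crossing and introduce no new allocation flexibility that could rescue a $c$-approximation.

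The main obstacle is the construction itself. Integrating the $c$-single-crossing inequality along any monotone path from the origin shows that the values at any profile are already pinned to within multiplicative factors involving powers of $c$, so no single profile can force a unique $c$-approximation winner and the obstruction must be genuinely combinatorial. Exploiting the intermediate signal $s_1 = 1$---the feature absent from both the $n = 2$ regime (Theorem~\ref{thm:2bidder}) and the $k = 2$ regime (Theorem~\ref{thm:2signals})---is essential: it yields a cell from which monotone propagation in $s_1$ reaches a further forbidden profile, closing off the ``escape routes'' exploited by the two-bidder coloring and the high-if-possible algorithms.
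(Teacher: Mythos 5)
Your strategy is the same as the paper's: work in the minimal $3\times 2\times 2$ setting, reduce truthfulness to monotonicity via Proposition~\ref{prop:char-ic}, engineer ``pressure points'' in the middle layer $s_1=1$ whose propagation collides with forced allocations in the outer layers, and pad to larger $n$ and $k_i$. The padding step (zero-valued dummy bidders whose signals enter no valuation, and boundary replication of signal spaces) is fine and preserves $c$-single-crossing.

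However, there is a genuine gap: the entire substance of the proof is the construction, and you defer it. You write ``first I would design valuations\dots'' and then identify ``the construction itself'' as the main obstacle, but you never exhibit the valuations, nor even the specific qualitative constraint pattern they must realize, nor an argument that such a pattern is consistent with $c$-single-crossing. This consistency is exactly the nontrivial point: as you yourself observe, summing the $c$-SC inequality along monotone paths ties all the values together within $c$-dependent factors, so it is not a priori clear that one can simultaneously enforce, say, ``bidder $2$ is more than $c$ times everyone else at $(2,0,0)$,'' ``bidder $2$ is less than $1/c$ of the maximum at $(1,1,0)$,'' and ``bidder $1$ is more than $c$ times everyone else at $(0,1,1)$'' --- which is the precise triple of constraints the paper uses (Figure~\ref{fig:nocapxconstraints}): the first forces bidder $2$ at $(2,1,0)$ by propagation, the second forces bidder $1$ or $3$ at $(1,1,0)$, allocating to $1$ there collides with $(2,1,0)$, and allocating to $3$ there propagates to $(1,1,1)$ where it collides with the propagation of bidder $1$ from $(0,1,1)$. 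The paper has to supply explicit six-decimal-place numbers (Appendix~\ref{sec:nocapproxnumbers}, for $c=2$) and verify $c$-SC on every edge precisely because realizability is delicate. Without either an explicit instance or a feasibility argument for the constraint system, your proposal is a correct proof outline but not a proof.
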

\begin{comment}
\begin{numberedproposition}{\label{prop:superc}}
Let $n$ be the number of bidders, and $k_i$ the size of bidder $i$'s signal space. %, and $k = \max_i k_i$.
For any setting with $n \geq 3$, $k_i \geq 3$ for some $i$, and $c$-single-crossing valuations, there exists an instance where no truthful deterministic prior-free allocation can achieve a $c$-approximation to welfare.
\end{numberedproposition}
\end{comment}
\begin{proof}
Figure~\ref{fig:nocapxconstraints} depicts a valuation structure where the signal space is $S_1\times S_2 \times S_3=\{0,1,2\}\times\{0,1\}\times\{0,1\}$. We show that in order to have a $c$-approximation to the welfare, a number of constraints on the allocation must be satisfied, which, when taken together, yield a contradiction.
Numerical values that support this structure are depicted in Appendix \ref{sec:nocapproxnumbers}.

Consider the structure depicted in Figure~\ref{fig:nocapxconstraints}. At valuation profile $\vec{s}_a=(s_1=2,s_2=0,s_3=0)$, no other bidder has a value that $c$-approximates bidder~2's value, therefore, we must allocate to bidder~2 at this profile. By propagation, we must allocate to bidder~2 at profile $\vec{s}_b=(s_1=2,s_2=1,s_3=0)$ as well. Now consider profile $\vec{s}_c=(s_1=1,s_2=1,s_3=0)$. At this valuation profile, bidder~2's value does not $c$-approximate the value of the maximum bidder. Therefore, we must allocate either to bidder~1 or bidder~3 at $\vec{s}_c$. We show that both options lead to a contradiction, proving the claim.
\begin{itemize}
	\item If we allocate to bidder~1 at $\vec{s}_c$, then by propagation, we must allocate to bidder~1 at $\vec{s}_b$ as well, but we already claimed that in order to get $c$-approximation to welfare, we must allocate to bidder~2 at $\vec{s}_b$.
	\item If we allocate to bidder~3 at $\vec{s}_c$, by propagation, we must allocate to bidder~3 at $\vec{s}_d=(s_1=1,s_2=1,s_3=1)$. Now, inspecting $\vec{s}_e=(s_1=0,s_2=1,s_3=1)$, we notice that no bidder's value $c$-approximates bidder~1's value at this profile. Allocating to bidder~1 at $\vec{s}_e$ implies that we must allocate to bidder~1 at $\vec{s}_d$, which creates a propagation conflict with bidder~3 at $\vec{s}_d$.
\end{itemize}
A visualization of the implications of this structure and the conflict that must arise due to it is depicted in Figure \ref{fig:noxapxcolored}.
\end{proof}

\begin{figure}[h!]
\begin{center}
	\includegraphics[scale=.35]{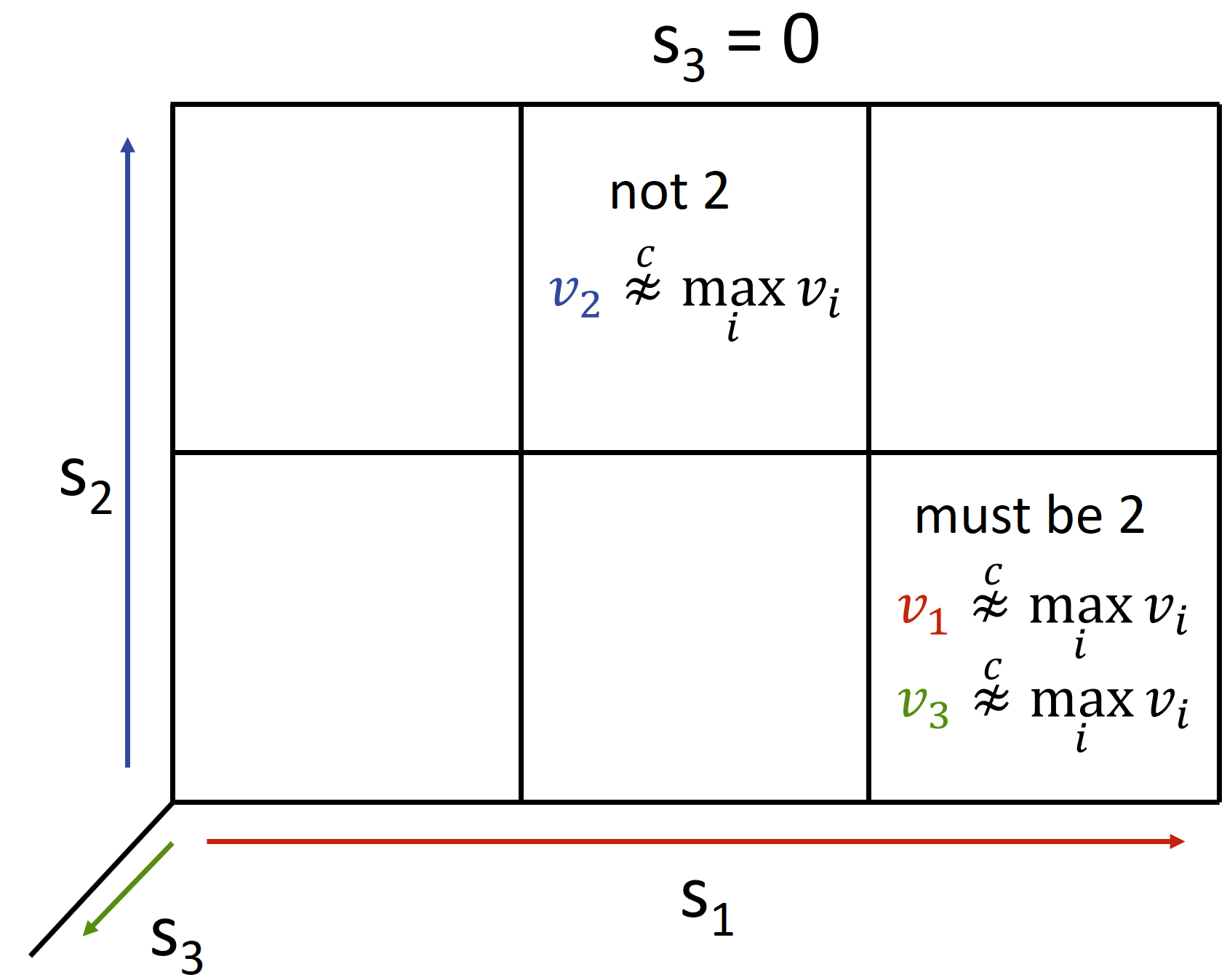} \quad\quad\quad \includegraphics[scale=.35]{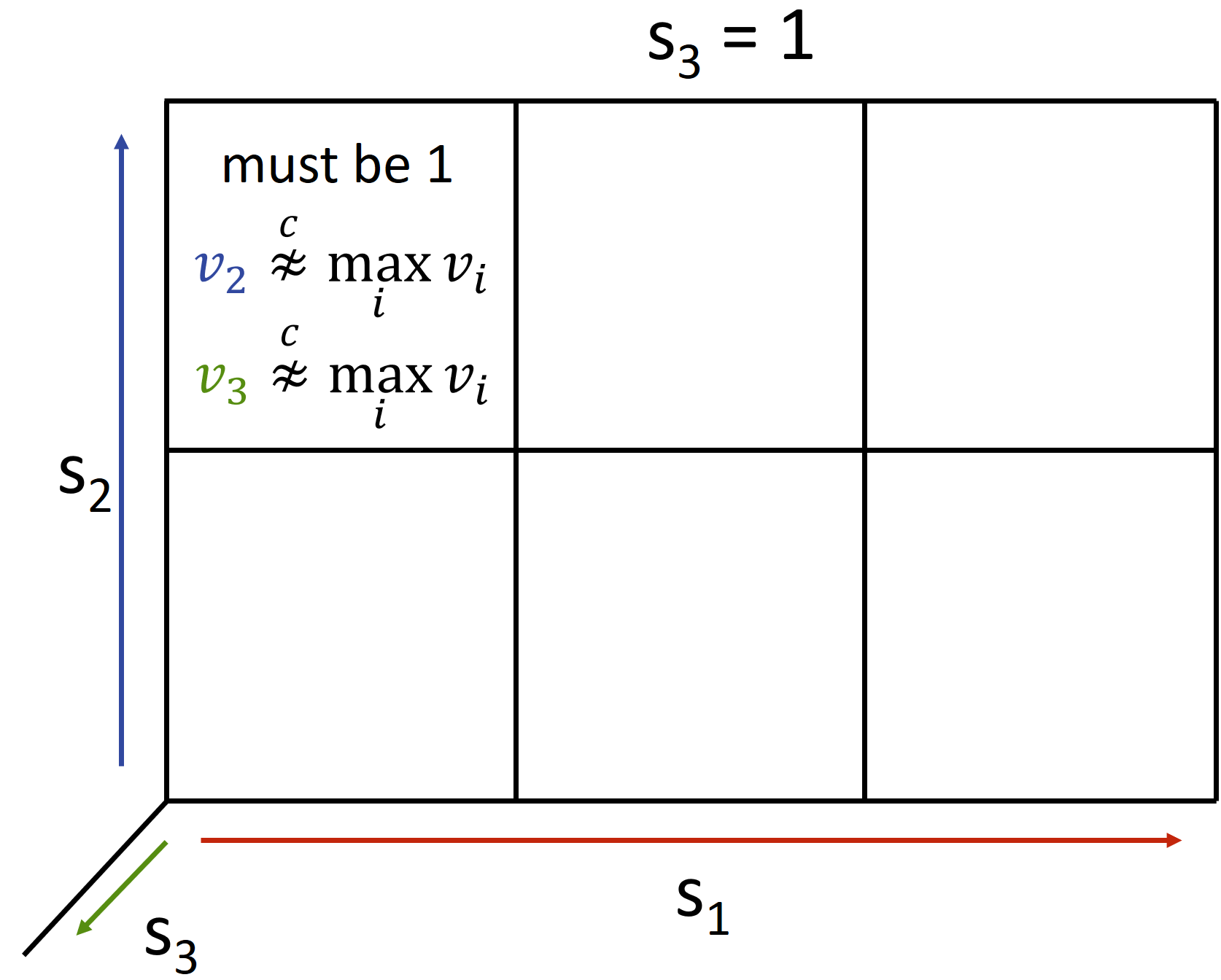}
	\caption{Left: The constraints on approximation of the values in the slice where $s_3 = 0$.  Right: The constraints where $s_3 = 1$.  The labeling ``not $i$" indicates that $i$ does not $c$-approximates the highest value.  The labeling ``must be $i$" indicates that no other bidder $c$-approximates $i$'s value.}
	\label{fig:nocapxconstraints}
\end{center}
\end{figure}

\begin{figure}[h!]
\begin{center}
	\includegraphics[scale=.35]{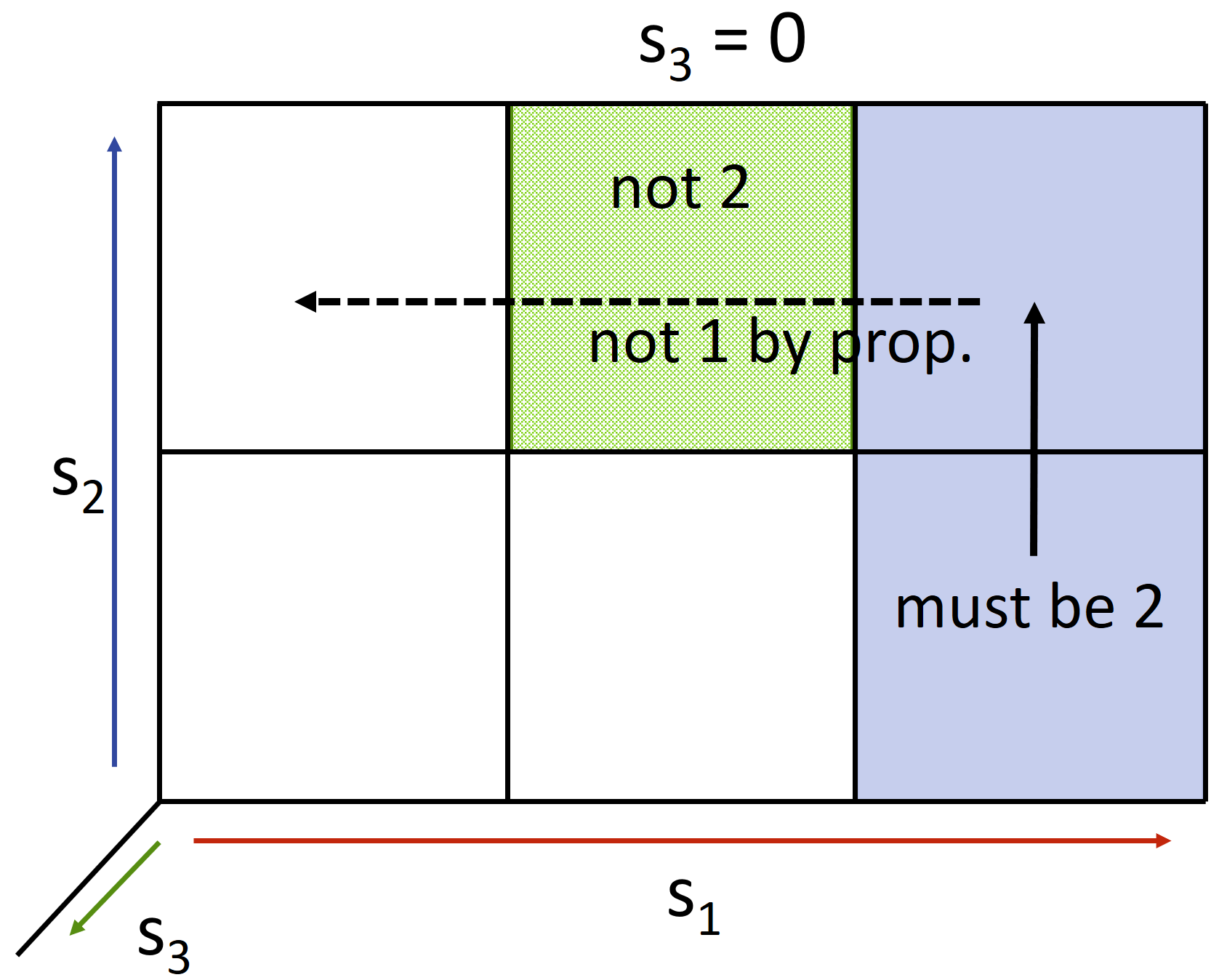}  \quad\quad \quad \includegraphics[scale=.35]{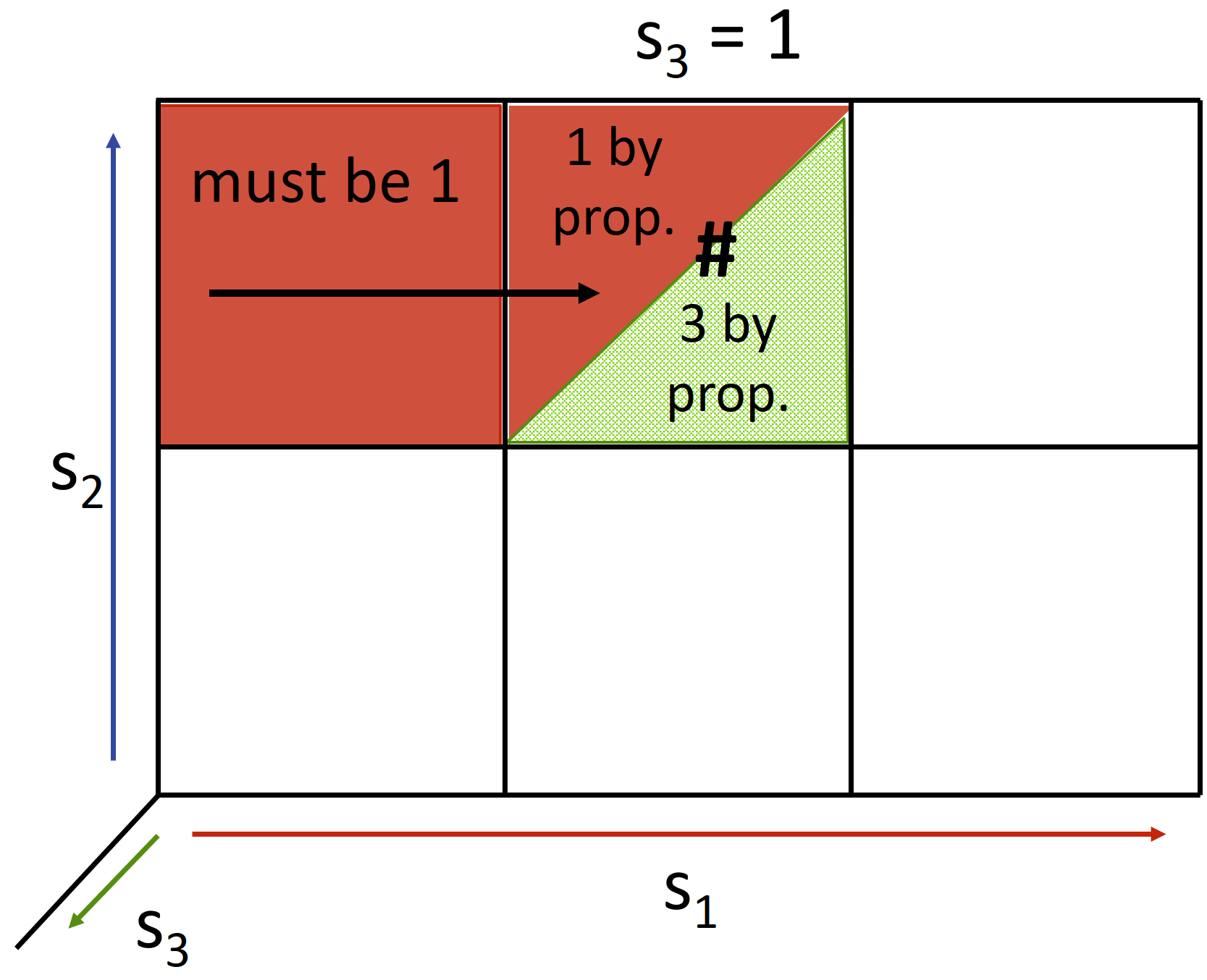}
	\caption{The implications of the valuations on the allocation rule in order to have a $c$-approximation, which leads to a propagation conflict at profile $\vec{s}_d=(1,1,1)$ where both $1$ and $3$ have been propagated as the winners.} % Bottom: A three-dimensional illustration of the propagations and conflict.}
	\label{fig:noxapxcolored}
\end{center}
\end{figure} 
\section{A Deterministic Mechanism for Welfare in the General Setting}
\label{sec:nbidders}

In the previous section, we have seen how in two special cases, we can achieve a truthful, deterministic, prior-free $c$-approximation to welfare, which is also tight.  However, we have also seen that there is no hope for achieving a truthful, deterministic, prior-free $c$-approximation in any setting that is more general than these.  In this section, we explore what welfare guarantee we can achieve in the general setting of $n$ bidders with any numbers of signals when using deterministic and prior-free mechanisms.

%\subsection{Deterministic $(n-1)c$-approximation algorithm}

\subsection{Potential Propagation Conflicts}

%$$\includegraphics[scale=.2]{images/greeninthemiddle} \quad\quad\quad\quad \includegraphics[scale=.4]{images/greeninthemiddlenextslice}$$

The structure that arises in the 3-bidder example and makes a $c$-approximation impossible (Subsection~\ref{sec:nocapprox}) seems inevitable for any $\alpha$-approximation when $\alpha < c^2$.  What is to stop us from requiring that, in order to achieve an $\alpha$-approximation, we must \emph{not} allocate to bidder 3 at profile $(0,2,0)$ but also we \emph{must} allocate to bidder 3 at profile $(1,2,0)$?  This will force us to propagate the allocation for all $s_3' > 0$, but leave a gap at $(0,2,s_3')$ where we can \emph{never} allocate to bidder 1 without causing a conflict!  As shown in Figure~\ref{fig:towerofthrees} below, it is the case that this structure could arise.

\begin{comment}
\begin{figure}[h!]
\includegraphics[scale=.25]{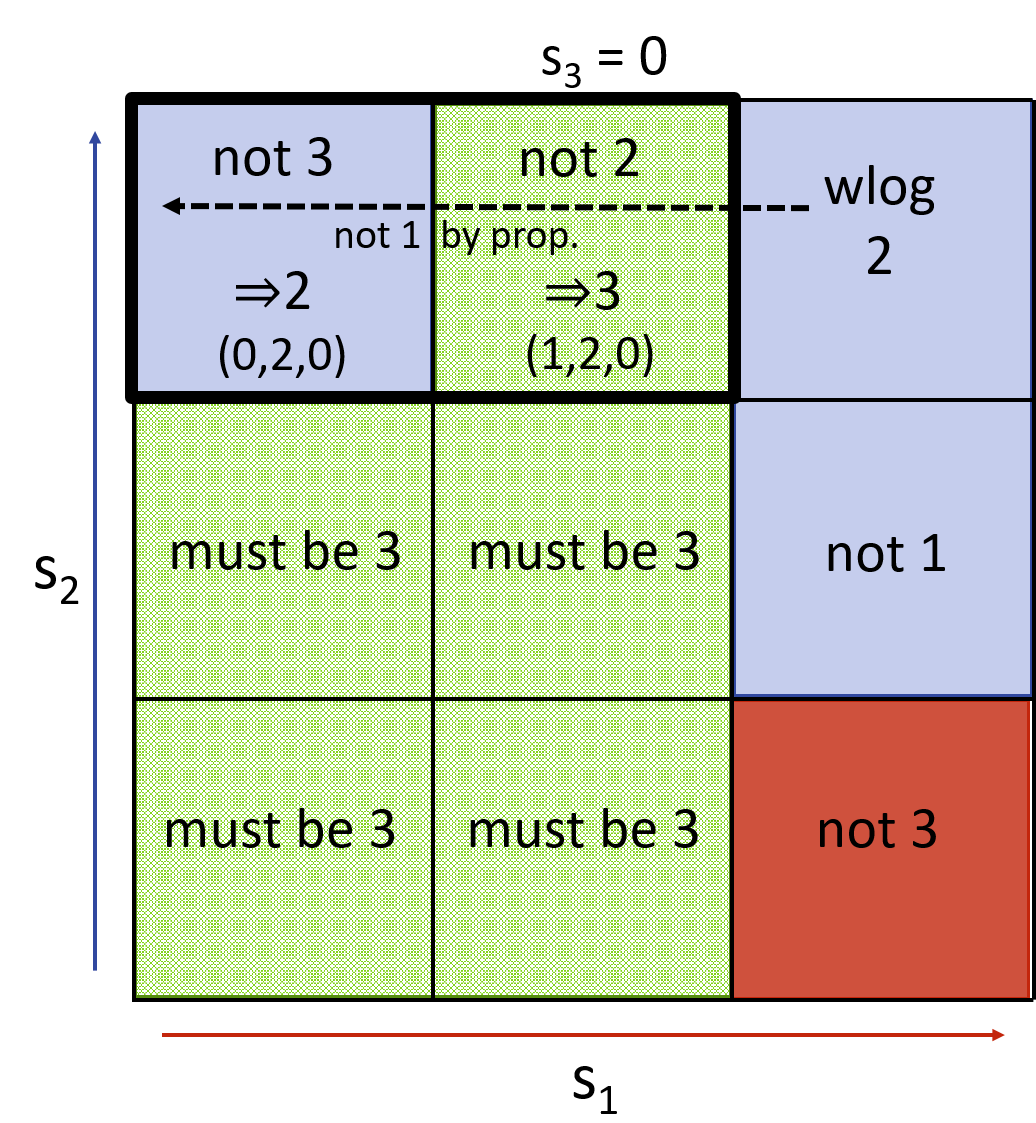} \quad\quad\quad \includegraphics[scale=.25]{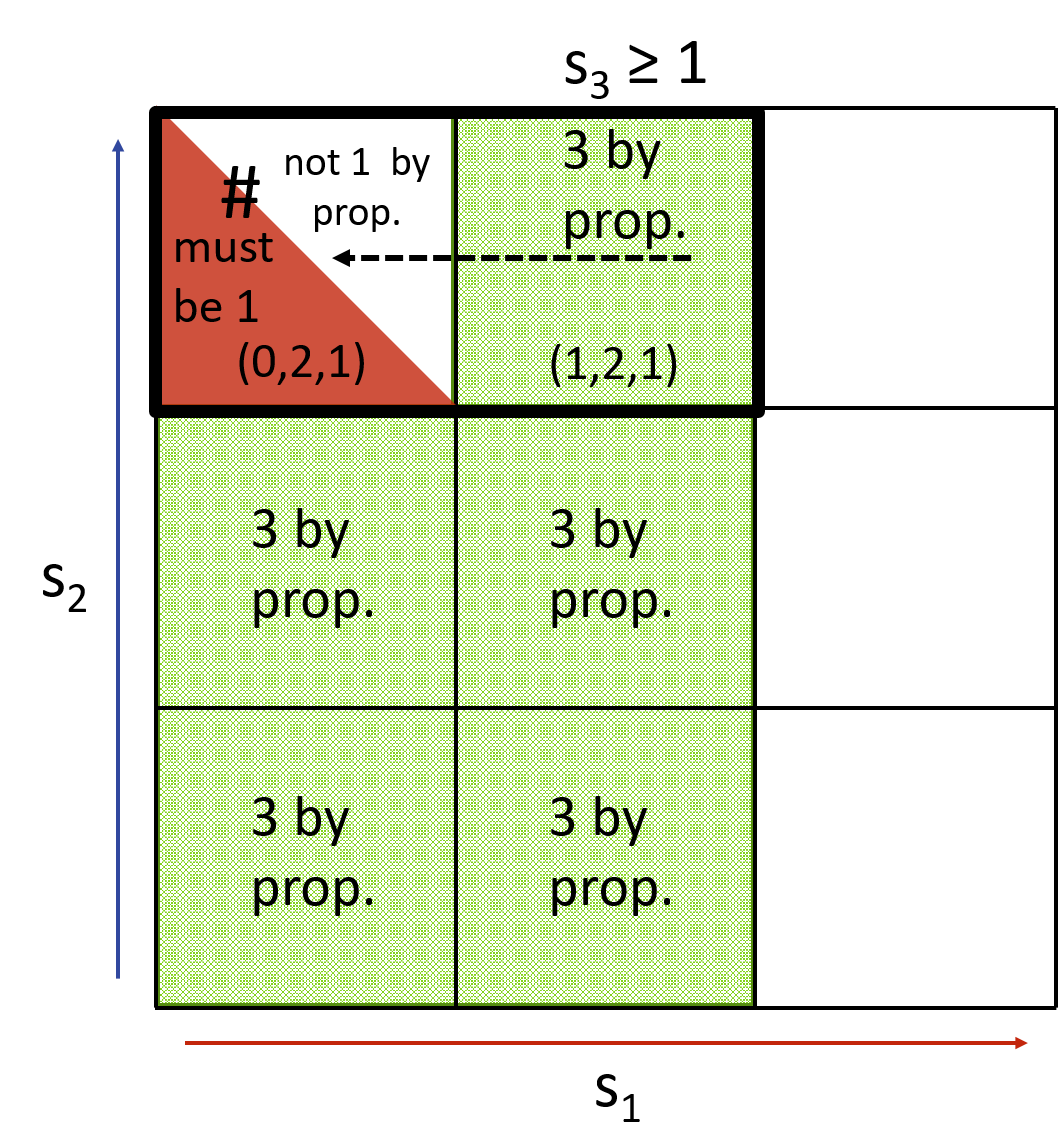}
%\includegraphics[scale=.35]{images/3bidnotmonoconstraints0} \quad\quad\quad \includegraphics[scale=.35]{images/3bidnotmonocolored0}
%\includegraphics[scale=.35]{images/3bidnotmonocolored1new} %\quad\quad\quad \includegraphics[scale=.35]{images/3bidnotmonoconstraints1}
\caption{Conditions on the valuations when a bidder's signals is low $(s_3 = 0)$ can cause structures to arise due to monotonicity where certain bidders (in this case, bidder 1) may never be allocated to.  If the valuations ever require allocating to 1 at any profile $(0,2,s_3)$ in order to obtain an $\alpha$-approximation, it would be impossible to satisfy that constraint, rendering an $\alpha$-approximation impossible.}
\label{fig:towerofthrees}
\end{figure}
\end{comment}

%\kgnote{To Do: Some reasoning about why $c^2$ should be apparent.}

To prevent this scenario, we want to avoid allocating to bidder $3$ at $(1,2,0)$.  %, or at any profile $(s_1, s_2, s_3)$ where there is some $(s_1', s_2', s_3)$ for $s_1' < s_1$ and $s_2' < s_2$ at which we cannot allocate to 3.
%That is, we want the structure of 3's allocation in the $s_1 \times s_2$ plane to form a descending-stair-like structure against the axes.
That is, we want the allocation to bidder $3$ to have the property that whenever we allocate to 3, we have also allocated to 3 at every profile where $s_1$ and/or $s_2$ are smaller.
If this is the case, then we never have this propagation in the middle of the $s_1 \times s_2$ plane.  If we only aim for a $c^2$-approximation, this is indeed possible.  If bidder 3 is the highest valued bidder at some profile $\vec{s}$, and neither bidders 1 nor 2 $c$-approximate his value, then by Corollary~\ref{cor:contapx}, the structure propagates in both dimensions toward the axes, as desired.  Otherwise, one of bidder $1$ and $2$ \emph{does} $c$-approximate 3, say bidder 1.  It turns out that we may need to allocate to bidder 2 to maintain monotonicity; however, we know that a monotone $c$-approximation to bidders 1 and 2 is possible by Theorem~\ref{thm:2bidder}, and thus, if we must allocate to bidder 2, then bidder 2 $c$-approximates bidder 1, so bidder 2 is a $c^2$-approximation to bidder 3.  For $n$ bidders, the same ideas work to give $c^{n-1}$.

%What we do know is that at $(s_1 =0, s_2 =1 , s_3 = 0)$, $v_3 \alphaapx

%What we can can conclude by the ``must be 3" at $(s_1 =1, s_2 =1 , s_3 = 0)$ is that at $(s_1 =0, s_2 =1 , s_3 = 0)$, $v_1 \notalphaapx v_3$, thus $v_3 > v_1$.  Further, at $(s_1 =0, s_2 =2 , s_3 = 0)$, $v_2 \alphaapx v_1$.  Then

What saves us from a seemingly inevitable $c^{n-1}$ factor is a subtle observation of where $c$-single-crossing continues to be helpful.  This observation, described in the ``key lemma" in the next section, shows that in an allocation like that of Figure~\ref{fig:towerofthrees}, it's actually impossible to ever \emph{need} to allocate to 1 at the profiles where allocating to 1 would cause a propagation conflict.  Thus, these propagations in the middle are actually okay, and we can design a mechanism that handles them carefully.

\begin{figure}[h!]
\begin{minipage}[t]{0.6\textwidth}
\centering
\includegraphics[scale=.2]{images-new/figure8-left} \quad\quad\quad \includegraphics[scale=.2]{images-new/figure8-right}
\caption{Structures to arise due to monotonicity where certain bidders (here, bidder 1) may never be allocated to.  If the valuations ever require allocating to 1 at any profile $(0,2,s_3)$ in order to obtain an $\alpha$-approximation, it would be impossible to do so.} \label{fig:towerofthrees}
\end{minipage}
\hfill
\begin{minipage}[t]{0.35\textwidth}
\centering
\includegraphics[scale=.4]{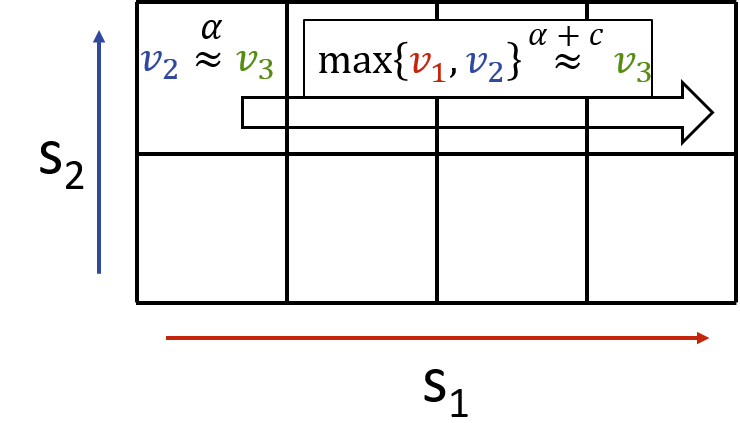}
\caption{Illustration of the key lemma: $v_2(\vec{s}) \alphaapx v_3(\vec{s})$, and as $1$'s signal increases, then $\max\{v_1, v_2\} \overset{\alpha + c}{\approx} v_3$.} \label{fig:keylemma}
\end{minipage}
\end{figure}

\subsection{The Key Lemma and Its Implications}

We observe a surprising implication of $c$-single-crossing: as we increase the signal profile in an additional dimension, we lose not a multiplicative $c$, but only an additive $c$.  See Figure~\ref{fig:keylemma}.

\begin{lemma}[Key Lemma] \label{keylemma}
%  If at profile $\vec{s}$, $v_1(\vec{s}) \alphaapx v_3(\vec{s})$, then for profile $\vec{s'}$ where $\vec{s'} = (\vec{s}_{-2}, s_2')$, we have that $\max\{v_1(\vec{s'}), v_2(\vec{s'})\}$ is a $(\alpha + c)$-approximation to $v_3(\vec{s'})$.
Let $\vec{s}=(s_1,\ldots,s_n)$ be a signals profile. For any three agents $i,j,\ell$, if $v_i(\vec{s}) \alphaapx v_j(\vec{s})$ for some $\alpha\geq c$, then at  $\vec{s'}=(\vec{s}_{-\ell}, s'_\ell)$ where $s'_\ell\geq s_\ell$ and $\ell\neq  j$, we have that $\max\{v_i(\vec{s'}), v_\ell(\vec{s'})\}$ is an $(\alpha + c)$-approximation to $v_j(\vec{s'})$.
\end{lemma}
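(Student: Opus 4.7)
The plan is to control the change in $v_j$ along the single coordinate $\ell$ via $c$-single-crossing, and to handle the other terms using monotonicity together with the hypothesis. The whole argument is essentially a four-line chain of inequalities; the key conceptual point is that moving in one extra coordinate costs only an \emph{additive} slack of $c$, not a multiplicative one, which is what ultimately lets later results avoid a $c^{n-1}$ blow-up.

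Concretely, first I would telescope the $c$-single-crossing inequality in coordinate $\ell$ from $s_\ell$ up to $s'_\ell$ (with all other coordinates fixed at $\vec{s}_{-\ell}$), which is legal precisely because $\ell\neq j$, to obtain
\[
v_j(\vec{s'}) - v_j(\vec{s}) \;\leq\; c\bigl(v_\ell(\vec{s'}) - v_\ell(\vec{s})\bigr) \;\leq\; c\, v_\ell(\vec{s'}),
\]
where the second inequality just drops the nonnegative term $c\, v_\ell(\vec{s})$. Second, I would combine the hypothesis $v_j(\vec{s}) \leq \alpha\, v_i(\vec{s})$ with the fact that $v_i$ is monotone non-decreasing in $s_\ell$ to obtain $v_j(\vec{s}) \leq \alpha\, v_i(\vec{s'})$. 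Adding the two bounds then gives
\[
v_j(\vec{s'}) \;\leq\; \alpha\, v_i(\vec{s'}) + c\, v_\ell(\vec{s'}) \;\leq\; (\alpha + c)\max\{v_i(\vec{s'}),\, v_\ell(\vec{s'})\},
\]
which is exactly the claimed $(\alpha+c)$-approximation.

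I do not foresee any serious obstacle. The hypothesis $\ell\neq j$ is used precisely once, in the $c$-single-crossing step above; without it the lemma would be false, since increasing $s_j$ can inflate $v_j$ arbitrarily relative to the others. The assumption $\alpha\geq c$ is not actually needed for this particular bound---it is presumably kept in the statement to align with Lemma~\ref{lem:contapx}, which handles the $\ell=i$ special case with a sharper constant of $\alpha$ rather than $\alpha+c$.
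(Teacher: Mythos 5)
Your proof is correct and is essentially identical to the paper's: both use $c$-single-crossing telescoped along coordinate $\ell$, drop the nonnegative term $c\,v_\ell(\vec{s})$, invoke monotonicity of $v_i$ in $s_\ell$ together with the hypothesis $v_j(\vec{s})\leq\alpha v_i(\vec{s})$, and sum; you have merely written the chain in the opposite direction (bounding $v_j(\vec{s'})$ from above rather than the max from below). Your observation that $\alpha\geq c$ is not needed for this particular bound is also accurate.
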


In English, if $v_i$ $\alpha$-approximates $v_j$, then at a profile where only $s_{\ell}$ has changed and has increased, $\max\{v_i,v_\ell\}$ $(\alpha + c)$-approximates $v_j$, so at no such profile are we required to allocate to $j$ for an $(\alpha+c)$-approximation.

\begin{proof}
By assumption, $\alpha\cdot v_i(\vec{s}) > v_j(\vec{s})$.  By $c$-single-crossing, because $\vec{s'}$ is obtained from $\vec{s}$ by increasing only $s_\ell$, then $c \left( v_\ell(\vec{s'}) - v_\ell(\vec{s}) \right) \geq v_j(\vec{s'}) - v_j(\vec{s})$.  We get that
\begin{align*}
(\alpha + c) \max\{v_i(\vec{s'}), v_\ell(\vec{s'}) \} &\geq \alpha v_i(\vec{s}) + c v_\ell(\vec{s'}) & \text{monotonicity of $v_i$}\\
&\geq \alpha v_i(\vec{s}) + c \left( v_\ell(\vec{s'}) - v_\ell(\vec{s}) \right) & \text{non-negativity of $v_\ell$}\\
&\geq v_j(\vec{s}) + v_j(\vec{s'}) - v_j(\vec{s}) & \text{adding above equations} \\
&= v_j(\vec{s'}).
\end{align*}
\end{proof}

\subsection{An $(n-1)c$-approximation for $n$ bidders}

Let $\pi = (\pi_1, \pi_2, \ldots, \pi_n)$ be an ordering over the $n$ agents, and let $k$ be the size of the signal space.

The following is the high level idea behind the algorithm that gives the allocation---we later define what it means for a value to be ``sufficiently well-approximated".  For clarity of exposition, we rename bidders so that $\pi_i$ becomes $i$.

We begin by tentatively allocating to bidder 1 at all profiles where 1 reports any signal $s_1$ and all other bidders report $0$.  Now, for bidders $j=2,\ldots,n$ (called  iteration $j$):
 \begin{enumerate} \item We correct previously determined allocations (all of which had $s_j=0$) to ensure that $j$'s value is sufficiently well-approximated (where $v_j$ is not well-approximated if her value is greater than some yet-to-be-determined factor of $\max(v_1,\ldots,v_{j-1})$).
 \item Then, we iteratively consider signal profiles as $s_j$ increases from $1$ to $k$, copying the allocation from profiles where $j$'s signal was $s_j-1$. However, we switch the allocation to bidder $j$ if the previous allocation (for the profile with smaller  $s_j$)  does not approximate bidders $1,\ldots,j$ sufficiently well.
     \end{enumerate}
      Figure~\ref{fig:hypergridcoloring} may be helpful in illustrating these ideas.

 We remark that the this algorithm, $\hypercol(\pi)$, as described, computes the allocation for every possible profile of signals. As the number of such profiles is $(k+1)^n$, it trivially follows that the time complexity is $\Omega(k^n)$. In section \ref{sec:polytime_implementation} we show how to determine the correct allocation, given an arbitrary signal profile, in polytime.

\begin{definition} \label{def:ithintprof} Given a profile of signals $\vec{s}$, a permutation $\pi$, and some index $1\leq i \leq  n$, we define the $i^{\rm th}$ intermediate profile, $\vec{s^{i}_{\pi}}$ as follows:  \begin{equation*}(\vec{s^{i}_{\pi}})_j = \begin{cases}
s_{j} &\text{if $j=\pi_\ell$ for some $1 \leq \ell \leq i$}\\
0 &\text{otherwise}
\end{cases}.\end{equation*} \end{definition}

For example, take $n=5$, $\pi=(5,2,3,1,4)$, $s=(s_1,s_2,\ldots,s_5)$, then $\vec{s^{3}_{\pi}}=(0,s_2,s_3,0,s_5)$.

In the description of the algorithm and {\sl the rest of this section} we assume that $\pi$ is the identity permutation ({\sl i.e.}, $\pi_i=i$---simply rename the bidders). We use $\vec{0}_{[j:\ell]}$ to denote $0$ values in coordinates $j, j+1, \ldots, \ell$ of the vector. For example, $(s_1,s_2, \ldots, s_{i-1}, \vec{0}_{[i:n]})$ denotes a signal profile with signals $s_1$ through $s_{i-1}$ in positions $1$ through $i-1$ and zeros elsewhere.

Note that if $\pi$ is the identity permutation we have that
%z(\vec{s},i)
$$\vec{s^{i}_\pi}= (s_1,s_2, \ldots, s_i, \vec{0}_{[i+1:n]}).$$
\\
We define the set of all possible signal profiles for signals for bidders $\ell \in \{1, \ldots, j-1\}$: $$S_{<j} = \left\{(\tilde{s}_1,\ldots,\tilde{s}_{j-1}) \mid \tilde{s}_{\ell} \in \{0, \ldots, k\}, \ell \in \{1, \ldots, j-1\}\right\}.$$
\\
\noindent \textbf{$\hypercol(\pi)$:}
\begin{itemize}
\item For all $s_1 \in \{0, \ldots, k\}$, at all profiles $(s_1, \vec{0}_{[2:n]})$, tentatively allocate to bidder~1: set $x(s_1, \vec{0}_{[2:n]}) = 1$.
\item For $j = 2, 3, \ldots, n$:
	\begin{itemize}
	\item For all $s_j \in \{0, \ldots, k\}$ and for all $\vec{s}_{<j} \in S_{<j}$:
		\begin{itemize}
		\item If $s_j > 0$ (i.e. the allocation at $\vec{s}$ is yet undefined), tentatively allocate to the same bidder as at %$z(\vec{s},j)=$
		$(\vec{s}_{<j}, s_j-1, \vec{0}_{[j+1:n]})$: set %$$x(z(\vec{s},j-1))=
		$$x(\vec{s}_{<j}, s_j, \vec{0}_{[j+1:n]}) = x(\vec{s}_{<j}, s_j-1, \vec{0}_{[j+1:n]}).$$
		%=x(z(\vec{s},j)).$$
		\item If the winning bidder at $\vec{s}$ does not $(j-1)c$-approximate the maximum value among the first $j$ agents at $\vec{s}$ or does not $c$-approximate $j$'s value, reallocate to $j$: set $x(\vec{s}_{<j}, s_j, \vec{0}_{[j+1:n]})  = j$.
		\end{itemize}
	\end{itemize}
\end{itemize}

\begin{figure}[h!]
\begin{center}
\includegraphics[scale=.18]{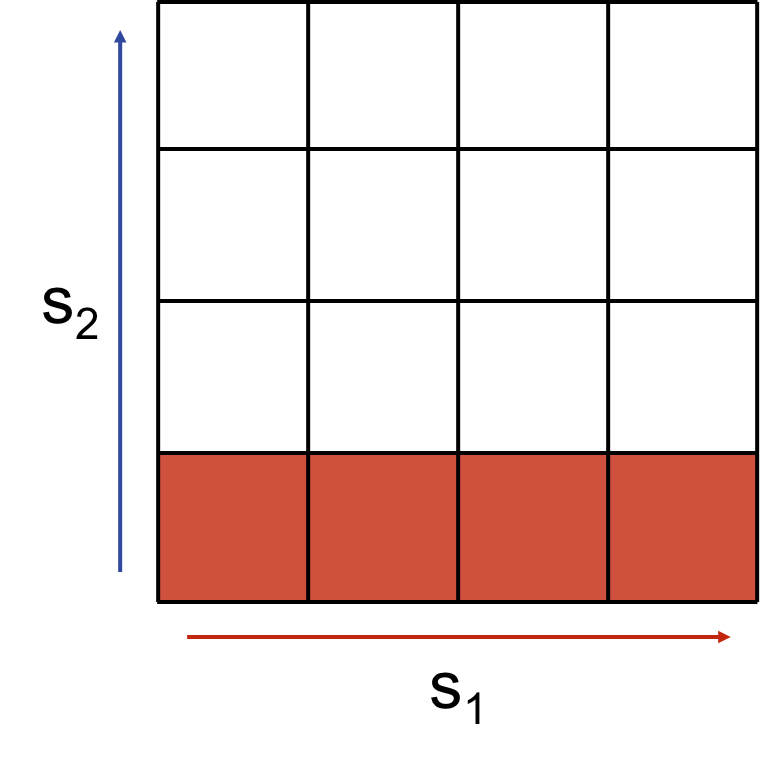} \hspace{.4cm} \includegraphics[scale=.18]{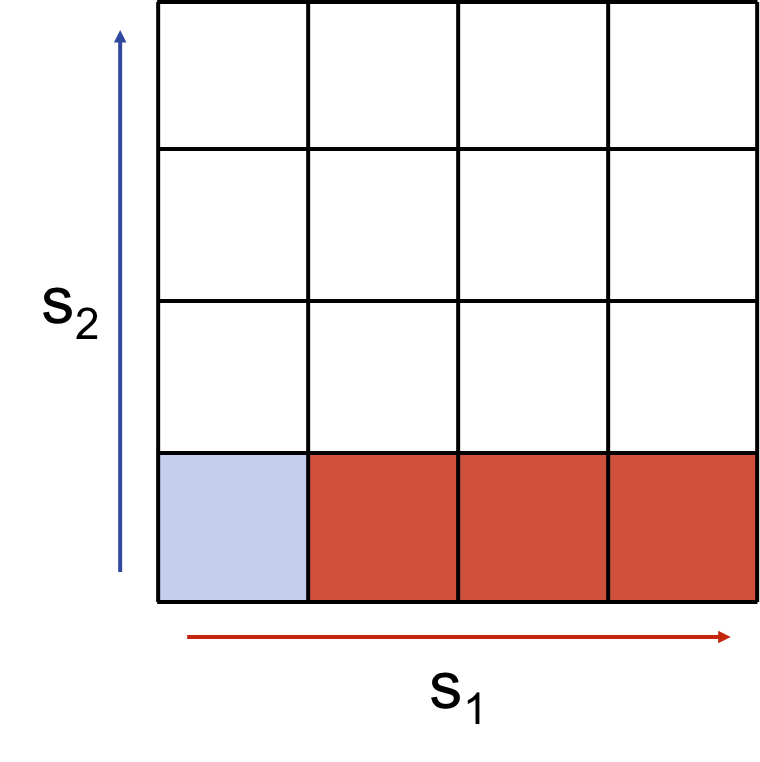} \hspace{.4cm} \includegraphics[scale=.18]{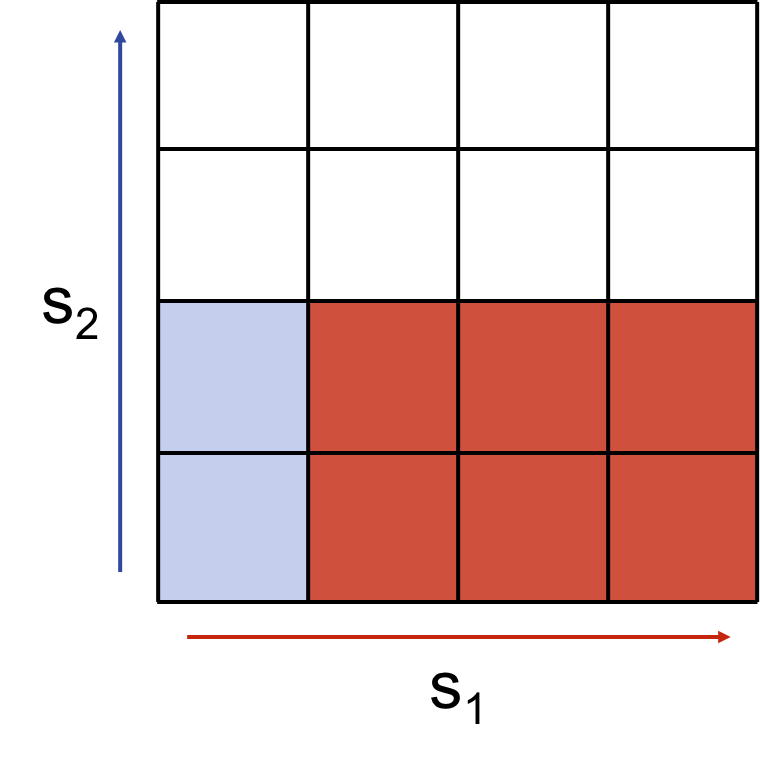} \hspace{.4cm}  \includegraphics[scale=.18]{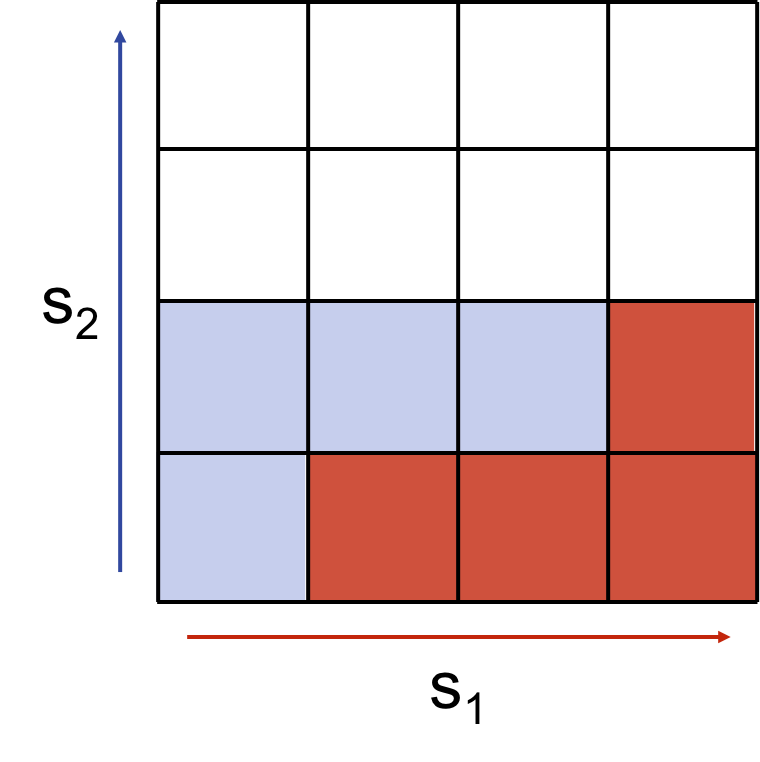} \hspace{.4cm} \includegraphics[scale=.18]{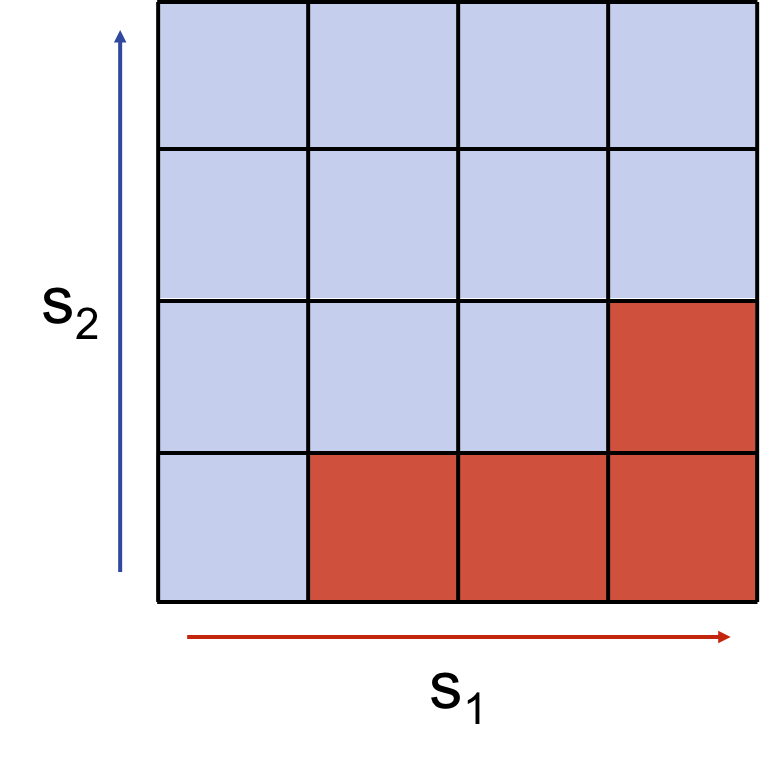}
\caption{The algorithm run on $\pi = (1, 2)$: Start by allocating to 1. Increase $s_2$ from $0$ to $k$, copying the allocation from the previous layer of $s_2$ each time and correcting to reallocate to 2 when 1 is not a sufficient approximation.
The right-most figure depicts the final allocation.}
\label{fig:hypergridcoloring}
\end{center}
\end{figure}

\begin{theorem} \label{thm:hypercolapx}
For $n$ bidders with valuations that satisfy $c$-single-crossing, the $\hypercol$ algorithm is a truthful, deterministic, and prior-free $(n-1)c$-approximation to social welfare.
\end{theorem}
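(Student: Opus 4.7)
The plan is to prove both truthfulness (via monotonicity of each $x_i$ in $s_i$) and the $(n-1)c$-approximation simultaneously by induction on the iteration index $j$. Without loss of generality, I rename bidders so that $\pi$ is the identity permutation. The inductive invariant I will maintain after iteration $j$ is: on the sub-cube $\{\vec{s}: s_{j+1} = \cdots = s_n = 0\}$, (i) $x_i$ is monotone non-decreasing in $s_i$ for every $i \leq j$, and (ii) at every profile $\vec{s}$ on this sub-cube with winner $\ell = x(\vec{s})$, both $(j-1)c \cdot v_\ell(\vec{s}) \geq \max_{\ell' \leq j} v_{\ell'}(\vec{s})$ and $c \cdot v_\ell(\vec{s}) \geq v_j(\vec{s})$ hold. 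Applied at $j=n$, part (ii) yields the approximation claim, and part (i) combined with Proposition~\ref{prop:char-ic} yields truthfulness.

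The approximation half of the inductive step is essentially built into the algorithm: whenever the inherited winner $\ell_0$ passes both checks the invariant propagates directly, and otherwise the algorithm switches to $j$, which trivially $c$-approximates $v_j$. To verify that the new winner $j$ satisfies $(j-1)c \cdot v_j(\vec{s}) \geq \max_{\ell' \leq j} v_{\ell'}(\vec{s})$, I will split on the reason for the switch (either $c v_{\ell_0}(\vec{s}) < v_j(\vec{s})$ or $(j-1)c v_{\ell_0}(\vec{s}) < \max_{\ell' \leq j} v_{\ell'}(\vec{s})$). In both cases, combining the invariant at the pre-switch profile (either at $(\vec{s}_{<j}, s_j - 1, \vec{0})$, or, when $s_j=0$, inherited from iteration $j-1$) with $c$-single-crossing bounds each $v_{\ell'}(\vec{s})$ in terms of $v_j(\vec{s})$; the algebra closes because $j-1 \leq (j-2)c + c$ for every $c \geq 1$.

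The harder half is monotonicity, and the key tool is Lemma~\ref{lem:contapx}. For each fixed $\vec{s}_{<j}$, the algorithm produces a threshold structure in the $s_j$ direction: the inherited bidder $\ell_0(\vec{s}_{<j}) < j$ appears for $s_j < t(\vec{s}_{<j})$ and bidder $j$ appears for $s_j \geq t(\vec{s}_{<j})$, which gives monotonicity of $x_j$ in $s_j$ at once. For $x_i$ in $s_i$ with $i < j$, I will show that when we pass from $\vec{s}_{<j}$ to $\vec{s}'_{<j}$ by increasing $s_i$ by one, the ``region where $x = i$'' only grows. The inductive hypothesis at iteration $j-1$ gives $\ell_0(\vec{s}'_{<j}) = \ell_0(\vec{s}_{<j}) = i$. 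Then Lemma~\ref{lem:contapx}, applied once with $\alpha = (j-1)c$ and once with $\alpha = c$, guarantees that both ``keep $i$'' conditions propagate from $(\vec{s}_{<j}, s_j, \vec{0})$ to $(\vec{s}'_{<j}, s_j, \vec{0})$; hence $t(\vec{s}'_{<j}) \geq t(\vec{s}_{<j})$ and the $s_i$-monotonicity of $x_i$ is preserved.

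The main obstacle will be bookkeeping across later iterations $k > j$, which again introduce copy-and-switch logic and can in principle rewrite allocations assigned in iteration $j$. I expect to handle this by observing that every profile on which $x_i = 1$ at the end of the algorithm traces back via the inheritance chain (copying in each iteration $k > j$, with no switches to $k$ along the way) to a profile with $s_{j+1} = \cdots = s_n = 0$, on which the iteration-$j$ analysis already establishes monotonicity. Iteratively invoking Lemma~\ref{lem:contapx} at each stage $k > j$ with the appropriate $\alpha$ then lifts the monotonicity of $x_i$ in $s_i$ from the lower-dimensional sub-cube to the full cube, completing the induction.
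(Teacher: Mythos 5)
Your proposal is correct and takes essentially the same route as the paper: the same induction over iterations, with the approximation step resting on the additive $(j-2)c + c = (j-1)c$ accounting that the paper isolates as its Key Lemma (Lemma~\ref{keylemma}), and the monotonicity step resting on Lemma~\ref{lem:contapx} (you argue that the ``keep $i$'' conditions propagate upward in $s_i$, where the paper uses the contrapositive that the ``fail'' conditions propagate downward). The cross-iteration bookkeeping you flag as the main obstacle is resolved exactly as you suggest---inductively preserving monotonicity at each later iteration $k>j$ via Lemma~\ref{lem:contapx}---which is also what the paper does.
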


\begin{proof}
(Approximation.) We show inductively that after iteration $j$, every profile $(\vec{s}_{< j+1}, \vec{0}_{[j+1:n]})$, for
$\vec{s}_{<j+1}\in S_{<j+1}$, is allocated to a bidder whose value $(j-1)c$-approximates the highest of the first $j$ bidders.

%As a base case, after iteration 1, every profile $(s_1, \vec{0})$ is allocated to bidder 1, who of course $1$-approximates himself.

As a base case, after iteration 2, every profile $\vec{s} = (s_1, s_2, \vec{0}_{[3:n]})$, $(s_1,s_2)\in S_{<3}$,  is either allocated to bidder $1$ because $v_1(\vec{s}) \capx v_2(\vec{s})$, or the item was reallocated to bidder 2 because $v_1(\vec{s}) \notcapx v_2(\vec{s})$ and thus bidder 2's value is highest, so the item is allocated to a bidder who $c$-approximates the highest value of the first $2$ bidders.

Suppose the claim holds for iterations $1$ through $j-1$.  In iteration $j$, first we correct every profile $(\vec{s}_{<j}, 0, \vec{0}_{[j+1:n]})$, $\vec{s}_{<j}\in S_{<j}$, by reallocating to $j$ if his value is not $c$-approximated by the previous winner, ensuring that the allocation at every such profile is now a $(j-2)c$-approximation to the first $j$ bidders.

As we increase $s_j$, if the algorithm reallocates from bidder $i$ to $j$ at $(\vec{s}_{<j}, s_j, \vec{0}_{[j+1:n]})$, we know by definition of the algorithm that at profile $(\vec{s}_{<j}, 0, \vec{0}_{[j+1:n]})$, the algorithm allocated to bidder $i$, and thus $i$'s value was a $(j-2)c$-approximation to the first $j-1$ bidders.  At this profile, only $j$'s signal has increased, so by the key lemma (Lemma~\ref{keylemma}), the larger of $i$ and $j$'s values must be a $(j-1)c$-approximation to the highest value of the first $j$ bidders.  If $i$'s value is not a $(j-1)c$-approximation to all of the first $j$ bidders, then $j$'s value must be, so reallocating to bidder $j$ maintains the $(j-1)c$-approximation.

If $i$'s value \emph{was} a $(j-1)c$-approximation but we reallocated to $j$, this must be because $i$'s value wasn't a $c$-approximation to $j$'s value, so of course $j$'s value is larger.  Thus the $(j-1)c$-approximation is maintained by reallocating to $j$.  The only other possibility is that the algorithm did not reallocate to $j$, and thus $i$ passed the check that he $(j-1)c$-approximates the highest of the first $j$ bidders.

Thus, the claim holds, so after all $n$ iterations, at each profile, the item is allocated to a bidder who $(n-1)c$-approximates the highest of all $n$ bidders.

%For any profile $\vec{s}$, we claim that if it's allocated to $i$, that $v_i(\vec{s})$ is an $(n-1)c$-approximation to the highest value at $\vec{s}$.  For some $j \leq n$ and $s_j = \ell$, then $\vec{s} = (\vec{s}_{<j}, \ell, \vec{0})$.  If $i < j$, then because the algorithm does not reallocate to any $j' \geq j$ at any point, it must be the case that $v_i(\vec{s})$ is always a $(j'-1)c$-approximation to the largest value among the first $j'$ bidders for all $j' \geq j$, including $j'=n$  Thus $i$'s value $(n-1)c$-approximates the highest value at $\vec{s}$.

(Monotonicity.) We also show monotonicity by induction.  The first iteration is of course monotone in its allocation to bidder 1 and to no other bidder.  Suppose after iteration $j-1$, the allocation is monotone for all bidders.  In iteration $j$, we copy the same allocation from where $s_j$ is lower, which by the inductive hypothesis is monotone.  As we increase $s_j$, we only modify the allocation rule by increasing profiles where $j$ wins the item, ensuring a monotone allocation to bidder $j$ during iteration $j$.  Since we started from an allocation that was monotone to all bidders after iteration $j-1$, we only need to check that by performing our corrections where we reallocate to $j$ that we don't interrupt monotonicity for some bidder $i < j$ during iteration $j$.

Suppose that during iteration $j$ we reallocate from bidder $i$ to bidder $j$ at some profile $\vec{s}$.  By our allocation, either $i$'s value does not $c$-approximate $j$'s value, or it does not $(j-1)c$-approximate the highest value.
%$v_i(\vec{s}) \notcapx v_j(\vec{s})$ or $v_i(\vec{s}) \overset{(j-1)c}{\napprox} \max_\ell v_\ell(\vec{s})$
Then for any profile with only $s_i$ decreased, by \Cref{lem:contapx}, bidder $i$ 's value must also fail to $c$-approximate $j$ or $(j-1)c$-approximate the highest value respectively, so at all such profiles, the item will be reallocated to $j$ if it was previously allocated to $i$, as shown in Figure~\ref{fig:recolorj}.  Thus, the allocation to bidder $i$ is monotone.
\end{proof}

\begin{figure}
\begin{center}
\includegraphics[scale=.25]{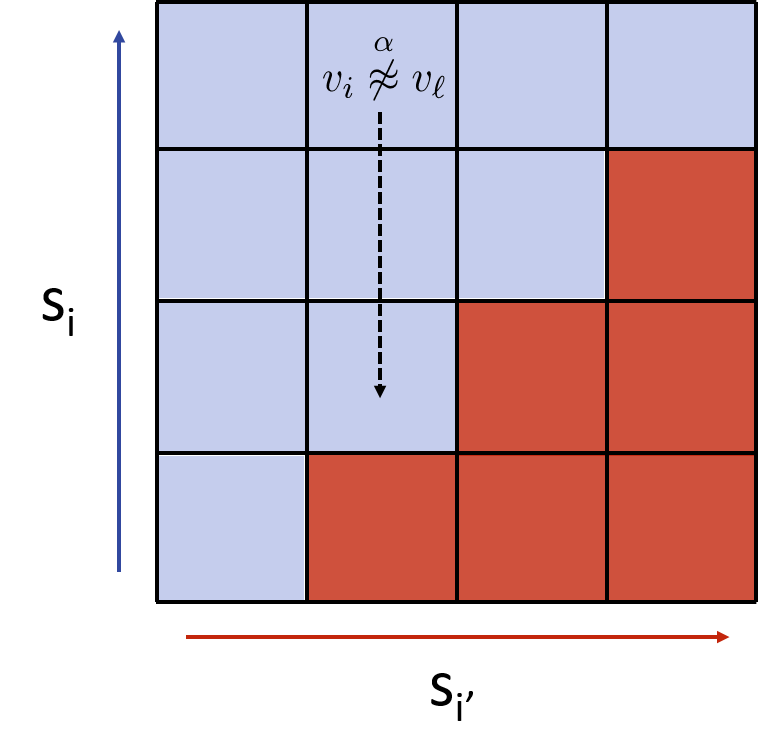} 
\includegraphics[scale=.25]{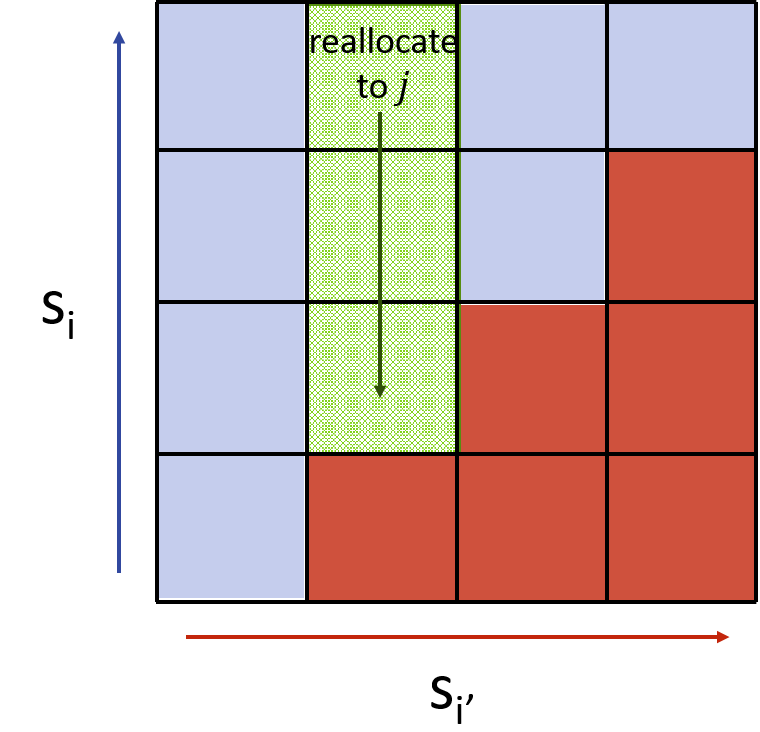}
\caption{Reallocating to bidder $j$ happens for all profiles where only $s_i$ is decreased and $i$ was previously the winner, enforcing monotonicity in the allocation to bidder $i$.}
\label{fig:recolorj}
\end{center}
\end{figure}

\subsection{Polytime Implementation for the Hypergrid-Coloring Mechanism}\label{sec:polytime_implementation}

\begin{theorem} \label{thm:polytime}
	The mechanism given by $\hypercol(\pi)$ is implementable in time $O(n^2 k\log k)$ where $n$ is the number of bidders and $k$ is the largest size of any bidder's signal space.
\end{theorem}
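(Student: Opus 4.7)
The plan is to exploit locality in $\hypercol$: although the description fills in the allocation at all $(k+1)^n$ profiles, the winner at any single profile $\vec{s}$ is determined by a single chain of per-iteration sweeps along the intermediate profiles $\vec{s^j_\pi} = (s_1, \ldots, s_j, \vec{0}_{[j+1:n]})$ of \Cref{def:ithintprof} (assuming, as in the algorithm's presentation, that $\pi$ is the identity). I will simulate $\hypercol$ only along this chain to compute $x(\vec{s})$, and then invoke \Cref{prop:deterministic_payment} together with a binary search for the critical signal to recover the payment.

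First, I would observe that the allocation at $\vec{s^j_\pi}$ after iteration $j$ is fully determined by (i) the allocation at $\vec{s^{j-1}_\pi}$ after iteration $j-1$, and (ii) the evaluations of $v_1, \ldots, v_j$ at the $s_j + 1$ profiles $(s_1, \ldots, s_{j-1}, s, \vec{0}_{[j+1:n]})$ for $s = 0, 1, \ldots, s_j$. This follows directly from the inner loop of iteration $j$, which for fixed $\vec{s}_{<j}$ starts at $s_j = 0$ with the winner inherited from iteration $j-1$ and performs copy-then-correct steps as $s_j$ grows. A key auxiliary observation is that once a correction step switches the winner to bidder $j$, every subsequent step in the same iteration also keeps $j$: check ``$v_j \capx v_j$'' is trivial, and even if the $(j-1)c$-approximation check fails, the corrective action ``reallocate to $j$'' is idempotent. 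Hence iteration $j$'s sweep is a left-to-right walk that either stays with the inherited winner $i^* < j$ throughout or switches once to $j$ and remains there.

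Each step of the sweep requires computing $\max_{\ell \leq j} v_\ell$ at one profile and checking the two approximation conditions, so it costs $O(j)$ valuation evaluations. Thus iteration $j$ costs $O(jk)$ time, and one full allocation query $x(\vec{s})$ costs $\sum_{j=1}^{n} O(jk) = O(n^2 k)$. To compute the payment of the winning bidder $i$, I use \Cref{prop:deterministic_payment}: $p_i = v_i(\vec{s}_{-i}, b_i^*)$, where $b_i^* = \min\{b \in \{0, \ldots, s_i\} : x(\vec{s}_{-i}, b) = i\}$. By monotonicity of $x_i$ (established in \Cref{thm:hypercolapx}), $x(\vec{s}_{-i}, b) = i$ for all $b \geq b_i^*$ and for no $b < b_i^*$, so $b_i^*$ is located by binary search over $\{0, 1, \ldots, s_i\}$ using $O(\log k)$ allocation queries. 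Multiplying, the total cost is $O(n^2 k \log k)$.

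The main obstacle is isolating the ``stay at $j$'' property described above. Without it, one might fear that iteration $j$'s sweep depends on allocations at profiles outside the chain I am simulating (e.g., profiles where two still-unset coordinates are nonzero), which would force reconstructing a higher-dimensional slab. Once this idempotence is verified and combined with the recursive dependence of $\vec{s^j_\pi}$ on $\vec{s^{j-1}_\pi}$, the simulation reduces to $n$ independent one-dimensional sweeps, the binary search is a straightforward application of monotonicity, and the stated bound follows by routine bookkeeping.
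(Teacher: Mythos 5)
Your proposal is correct and follows essentially the same route as the paper: compute the winner by simulating $\hypercol(\pi)$ only along the chain of intermediate profiles $\vec{s^1_\pi}, \ldots, \vec{s^n_\pi}$, where each iteration reduces to a one-dimensional sweep of at most $k+1$ copy-then-correct steps costing $O(n)$ each (the paper phrases this as checking, for each $s_j' \leq s_j$, whether the inherited winner fails one of the two approximation tests, which is equivalent to your sweep with the ``stays at $j$'' idempotence observation), and then recover the payment via \Cref{prop:deterministic_payment} and a binary search for the critical signal. The cost accounting ($O(n^2k)$ per allocation query times $O(\log k)$ queries) matches the paper exactly.
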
 

\begin{proof}
First, we illustrate that given a signal profile $\vec{s}$ and ordering $\pi$, if the allocation rule $x$ is produced from $\hypercol(\pi)$, we can determine $x(\vec{s})$ in $O(n^2k)$ time.  We prove that, given the tentative winner of the $\hypercol(\pi)$ algorithm after iteration $j-1$ at profile $(\vec{s}_{<j}, \vec{0}_{[j:n]})$, we can determine the tentative winner at $(\vec{s}_{<j}, s_j, \vec{0}_{[j+1:n]})$ after the $j^{\mathrm{th}}$ iteration in $O(nk)$ time.  If $j=1$, then after the first iteration, profile $(s_1, \vec{0}_{[2:n]})$ is allocated to bidder 1 in $O(1)$ time.  

The tentative winner after iteration $j$ at profile $(\vec{s}_{<j}, s_j, \vec{0}_{[j+1:n]})$ is either the same as the tentative winner at $(\vec{s}_{<j}, 0, \vec{0}_{[j+1:n]})$ after iteration $(j-1)$, who we call bidder $i$, or the item is reallocated to $j$.  By monotonicity and the definition of $\hypercol(\pi)$, if the winner is $j$, it is because for some smallest signal $s_j' \leq s_j$, at profile $(\vec{s}_{<j}, s_j', \vec{0}_{[j+1:n]})$, bidder $i$'s value was either not a $(j-1)c$-approximation to the highest of the first $j-1$ values, or it was not a $c$-approximation to $j$'s value.  To determine whether $i$ or $j$ is the winner at $(\vec{s}_{<j}, s_j, \vec{0}_{[j+1:n]})$ after iteration $j$, we at most need to check each of the $k$ possible signals $s_j' \leq s_j$ to determine whether the item gets reallocated.  This requires determining the highest value of any of the first $j$ bidders at each such profile, which takes $O(n)$ time, thus requiring $O(nk)$ time total to compute the winner after iteration $j$.  Then computing the winner at profile $\vec{s}$ after all $n$ iterations takes at most $O(n^2k)$ time by checking the winner after iteration $j$ at profile $(\vec{s}_{\leq j}, \vec{0}_{[j+1:n]})$ for all $j = 1, \ldots, n$.

By Proposition~\ref{prop:deterministic_payment}, the payment of the winner at profile $\vec{s}$, say bidder $i$, is his value $v_i(b_i^*, \vec{s}_{-i})$ at his critical signal $b_i^*$.  In order to determine $i$'s critical signal, we can binary search over $s_i' < s_i$ to find the largest $s_i'$ where $x(s_i', \vec{s}_{-i}) \neq i$.  As there are at most $k$ signals to binary search over, this gives a factor of $\log{k}$ times the running time of determining the winner at each such profile, which we have already shown is $O(n^2 k)$.  Then determining the allocation and the payment rules runs in time $O(n^2 k \log{k})$.
\end{proof}

\subsection{Tight instance for $\hypercol$}
\label{sec:tightexample}

We give an example of valuations that satisfy $c$-single-crossing, and an ordering $\pi$ for which $\hypercol(\pi)$ produces an allocation that gives a tight prior-free $(n-1)c$-approximation to social welfare.  Consider the following valuations:
$$\forall i \neq 2 \quad v_i(\vec{s}) = \begin{cases} 0 & s_i = 0 \\ 1 & s_i = 1 \end{cases} \quad\quad\quad \text{and} \quad\quad\quad v_2(\vec{s}) = c \cdot |\{i \neq 2 \mid s_i = 1\}|. $$
These clearly satisfy $c$-single-crossing: when a bidder other than $2$'s signal increases, his value increases by 1, and bidder $2$'s signal increases by exactly $c$.  No other changes occur.

Consider the allocation $x$ produced by $\hypercol(\pi)$ where $\pi = (1,2,\cdots, n)$.  In the first iteration, $x(s_1, \vec{0}) =1 $ for all $s_1$.  Then, in iteration 2, at all profiles $\vec{s} = (s_1, s_2, \vec{0})$, $c v_1(\vec{s}) = v_2(\vec{s})$, so at no such profile will the algorithm reallocate to bidder 2, as his value is $c$-approximated.  In all iterations $j > 2$, we only reallocate to bidder $j$, thus we never allocate to bidder 2.  Then at profile $\vec{1}$, we must allocate to a bidder $i \neq 2$ with $v_i(\vec{1}) = 1$, while $v_2(\vec{1}) = (n-1)c$.  This instance gives a lower bound of $(n-1)c$ for the algorithm, illustrating that our analysis is tight. 
\section{Random $2c$-Approximation for Concave $c$-SC Valuations}
\label{sec:random}

In Section \ref{sec:tightexample} it is shown that there exists an instance and ordering $\pi$ for which $\hypercol(\pi)$ is a tight $(n-1)c$-approximation to welfare.

In this section we consider the following randomized variant:

\vspace{2mm}
\noindent\textbf{\randhypercol:}
Choose a permutation $\pi$ of the $n$ bidders uniformly at random.  Run $\hypercol(\pi)$.
\vspace{0.5mm}

We show (Theorem \ref{thm:concave-2c}) that for concave and $c$-single-crossing valuations, this mechanism gives a $2c$-approximation.
We also show that, without concavity, the same randomized mechanism gives an (inferior) $2c^{3/2} \sqrt{n}$-approximation (see Theorem \ref{thm:randrootn}, whose proof appears in Appendix~\ref{sec:random-appendix}). Nonetheless, note that $\randhypercol$ gives an asymptotic improvement in the approximation over the deterministic $\hypercol(\pi)$.  Moreover, we show that the $\sqrt{n}$ factor is unavoidable for $\randhypercol$ if valuations are not concave.

In this Section and in Section \ref{sec:random-appendix} we {\sl do not} assume that $\pi$ is the identity permutation. Of course, internally to  $\hypercol(\pi)$, such a translation can be done.

\begin{theorem} For arbitrary $c$-single-crossing valuations,
 $\randhypercol$ is a universally truthful and prior-free mechanism that gives a $2c^{3/2} \sqrt{n}$-approximation to social welfare.  \label{thm:randrootn}
\end{theorem}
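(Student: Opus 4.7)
\textbf{Universal truthfulness} is immediate: each realization of $\randhypercol$ is the deterministic mechanism $\hypercol(\pi)$ for some permutation $\pi$, which is truthful by Theorem~\ref{thm:hypercolapx}, so the randomized mechanism is universally truthful.

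For the approximation guarantee, fix a signal profile $\vec{s}$, let $i^* \in \arg\max_i v_i(\vec{s})$, and write $V^* = v_{i^*}(\vec{s})$ and $w(\pi)$ for the winner of $\hypercol(\pi)$ at $\vec{s}$. The plan is to lower bound
\[
\E_\pi\bigl[v_{w(\pi)}(\vec{s})\bigr] \;\geq\; \frac{V^*}{2 c^{3/2} \sqrt{n}}
\]
by a case analysis driven by a heaviness threshold $T$ tuned near $c^{1/2}\sqrt{n}$. Call bidder $i$ \emph{heavy} if $v_i(\vec{s}) \geq V^*/T$ and let $h$ denote the number of heavy bidders; I would split into two regimes.

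\textbf{Regime 1 ($h \geq \sqrt{n}$).} With probability $h/n \geq 1/\sqrt{n}$ the first slot $\pi_1$ is heavy. Conditional on this event, I would argue that $v_{w(\pi)}(\vec{s})$ stays within a $c$-factor (in $c$, not in $n$) of $v_{\pi_1}(\vec{s}) \geq V^*/T$. The argument tracks the chain of reallocations: after iteration $1$ the winner is $\pi_1$, and whenever the algorithm reallocates to $\pi_j$ in a later iteration it does so because the previous winner fails either the max-of-first-$j$ check or the $c$-approximation check to $v_{\pi_j}$ at the intermediate profile $\vec{s}^j_\pi$. Combining these local guarantees with monotonicity of the valuations and the key lemma (Lemma~\ref{keylemma}) lifts the chain to a global guarantee at $\vec{s}$.

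\textbf{Regime 2 ($h < \sqrt{n}$).} Here $i^*$ has few competitors within factor $T$ of $V^*$. Conditioning on the position $j^*$ of $i^*$ in the random $\pi$, the event $j^* \leq \sqrt{n}$ holds with probability $1/\sqrt{n}$; for such $j^*$, Theorem~\ref{thm:hypercolapx}'s inductive bound yields a $(j^*-1)c = O(c\sqrt{n})$-approximation at the intermediate profile $\vec{s}^{j^*}_\pi$. Iterating the key lemma across subsequent iterations—and exploiting $h < \sqrt{n}$ to bound the additive degradation—transfers this approximation to the final profile $\vec{s}$. Balancing the two regime contributions via the choice of $T$ then delivers the $2c^{3/2}\sqrt{n}$ constant.

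\textbf{Main obstacle.} The central difficulty, concentrated in Regime 1, is that the algorithm's reallocation checks compare values at intermediate profiles $\vec{s}^j_\pi$, not at the target profile $\vec{s}$, and the winner's value at $\vec{s}$ can drift substantially from its value at $\vec{s}^j_\pi$. The key lemma is the essential tool that converts per-iteration local guarantees into a global bound: because it ensures the approximation degrades only \emph{additively} (by $c$) with each new signal incorporated, rather than multiplicatively, we can afford the $O(n)$ iterations that come after the favorable probabilistic event without losing more than a $\sqrt{n}$ factor. Getting this additive-versus-multiplicative distinction to interface cleanly with both the heavy-bidder threshold (Regime 1) and the conditional-position analysis (Regime 2), so that the constants line up to $2c^{3/2}$, is the most delicate part of the argument.
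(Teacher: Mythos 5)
There is a genuine gap, and it lives exactly where you flagged the ``main obstacle.'' Both of your regimes rely on transferring a guarantee from an intermediate profile $\vec{s}^j_\pi$ to the final profile $\vec{s}$, but neither has a tool strong enough to do it. In Regime 1, the event ``$\pi_1$ is heavy'' is defined by $v_{\pi_1}(\vec{s}) \geq V^*/T$, yet after iteration $1$ the algorithm only guarantees that the tentative winner's value at the \emph{intermediate} profile $(s_{\pi_1}, \vec{0})$ dominates $v_{\pi_1}(s_{\pi_1}, \vec{0})$, which can be arbitrarily smaller than $v_{\pi_1}(\vec{s})$ (e.g., zero when all other signals are off). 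The reallocation checks in later iterations never compare against $v_{\pi_1}(\vec{s})$; chaining the key lemma over the remaining iterations costs an additive $c$ \emph{per iteration}, so the best you recover is that the final winner $(n-1)c$-approximates $v_{\pi_1}(\vec{s})$ --- worse than the deterministic bound once you pay the extra $1/T$. Regime 2 has the conditioning backwards: when $i^*$ sits in position $j^* \leq \sqrt{n}$, the profile $\vec{s}^{j^*}_\pi$ has most signals zeroed, so $v_{i^*}(\vec{s}^{j^*}_\pi)$ can be near zero and the $(j^*-1)c$-approximation at that profile says nothing about $V^* = v_{i^*}(\vec{s})$; moreover the subsequent additive degradation is governed by the number of \emph{remaining iterations} ($n - \sqrt{n}$ of them), not by the number of heavy bidders, so $h < \sqrt{n}$ does not cap it.

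The missing ingredient is the paper's Lemma~\ref{boundedalgchange}: for the final winner $j$ at $\vec{s}$, every single iteration increases $v_{i^*}$ by at most $c^2 v_j(\vec{s})$. This yields a dichotomy with no heaviness threshold at all: if the winner is bad (say $v_{i^*}(\vec{s}) = \alpha c n \cdot v_j(\vec{s})$), then $i^*$'s value accumulates so slowly across iterations that by iteration $n - \alpha n/(2c)$ it has already reached $V^*/2$ at the intermediate profile. Conditioning on $i^*$ landing in the \emph{last} $\alpha n/(2c)$ positions (probability $\alpha/(2c)$), the tentative winner at $i^*$'s iteration must $c$-approximate this already-large value, and Lemma~\ref{allocatedvaluemonotonicity} (winner values only increase along the reallocation chain) carries that bound forward to $\vec{s}$ without any further loss. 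Balancing $4c^2/\alpha$ against $\alpha c n$ gives $2c^{3/2}\sqrt{n}$. Without a per-iteration bound of this kind on $i^*$'s increments in terms of the winner's value, I do not see how either of your regimes closes.
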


\begin{theorem} For concave valuations that are also $c$-single-crossing,
 $\randhypercol$ is a universally truthful and prior-free mechanism that gives a $2c$-approximation to social welfare.  \label{thm:concave-2c}
\end{theorem}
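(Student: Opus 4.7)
Universal truthfulness is immediate from Theorem~\ref{thm:hypercolapx}: every realization $\hypercol(\pi)$ is a deterministic truthful mechanism, and the mechanism is prior-free by construction. All of the work lies in establishing the expected $2c$-approximation guarantee.

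Fix an arbitrary profile $\vec{s}$, let $i^* \in \argmax_i v_i(\vec{s})$, and denote by $W_\pi$ the winner of $\hypercol(\pi)$ at $\vec{s}$; the target is $\mathbb{E}_\pi[v_{W_\pi}(\vec{s})] \ge v_{i^*}(\vec{s})/(2c)$. Using the intermediate profiles $\vec{s^j_\pi}$ from Definition~\ref{def:ithintprof}, I would introduce $\ell^* = \pi^{-1}(i^*)$, $\ell_W = \pi^{-1}(W_\pi)$, and the ``arrival values'' $B_j = v_{\pi_j}(\vec{s^j_\pi})$, then combine two bounds. The first is a concavity-plus-$c$-single-crossing upper bound on $v_{i^*}(\vec{s})$ in terms of the $B_j$'s: telescoping $v_{i^*}$ along $\vec{s^{\ell^*}_\pi} \le \vec{s^{\ell^*+1}_\pi} \le \cdots \le \vec{s^n_\pi} = \vec{s}$ and invoking concavity coordinate-wise gives $v_{i^*}(\vec{s}) - v_{i^*}(\vec{s^{\ell^*}_\pi}) \le \sum_{j > \ell^*}\bigl[v_{i^*}(s_{\pi_j}, \vec{0}_{-\pi_j}) - v_{i^*}(\vec{0})\bigr]$, and applying $c$-single-crossing to each summand (valid since $\pi_j \ne i^*$ for $j > \ell^*$) together with monotonicity of $v_{\pi_j}$ upgrades this to
\[
v_{i^*}(\vec{s}) \;\le\; B_{\ell^*} + c \sum_{j > \ell^*} B_j.
\]
Under $d$-concavity the tail picks up an extra multiplicative $d$, producing the $c(d+1)$ generalization mentioned in the remark after Definition~\ref{def:concavity}. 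The second bound is the invariant built into the proof of Theorem~\ref{thm:hypercolapx}: at every iteration $j$, $c \cdot v_{W_j}(\vec{s^j_\pi}) \ge B_j$, because either $W_j = \pi_j$ (trivial) or the algorithm verified the $c$-approximation condition when it chose not to switch. Monotonicity of $v_{W_\pi}$ along the chain of intermediate profiles then gives $v_{W_\pi}(\vec{s}) \ge B_{\ell_W}$ and $v_{W_\pi}(\vec{s}) \ge B_j/c$ for every $j \ge \ell_W$.

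I would combine the two ingredients via a case split on the position of $W_\pi$ relative to $i^*$ and on whether the ``head'' $B_{\ell^*}$ or the ``tail'' $c\sum_{j>\ell^*} B_j$ dominates $v_{i^*}(\vec{s})$. If $\ell_W \le \ell^*$, the no-switch condition at iteration $\ell^*$ (or the equality $v_{W_\pi}(\vec{s}) = v_{i^*}(\vec{s})$ when $W_\pi = i^*$) directly yields $c \cdot v_{W_\pi}(\vec{s}) \ge B_{\ell^*}$, which beats $v_{i^*}(\vec{s})/(2c)$ whenever the head accounts for at least half of the upper bound. In the complementary tail-heavy regime some $B_{j'}$ with $j' > \ell^*$ must be large, and randomness of $\pi$ is used to argue that $\ell_W \le j'$ with sufficient probability so that the contribution $B_{j'}/c$ to $\mathbb{E}[v_{W_\pi}(\vec{s})]$ picks up the remaining half. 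The main obstacle is making this tail-heavy averaging yield a constant rather than an $n$-dependent factor: a naive uniform-position argument loses a factor of $n-\ell^*$---which is exactly why the non-concave version in Theorem~\ref{thm:randrootn} only achieves $O(c^{3/2}\sqrt{n})$---so the crux of the proof is leveraging the concavity decomposition, in which each tail summand is governed by a \emph{single} bidder's arrival value $B_j$ rather than by an average over a growing set, to cut the loss down from $n-\ell^*$ to the claimed factor of~$2$.
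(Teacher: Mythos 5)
Your setup is sound---universal truthfulness, the intermediate profiles, the invariant that the tentative winner at each iteration $c$-approximates the newly arrived bidder, and the use of Lemma~\ref{allocatedvaluemonotonicity} to transfer tentative-winner values to the final winner are all the right ingredients---but the core of the argument, the tail-heavy case, is not actually closed, and your proposed route does not obviously close it. Your decomposition $v_{i^*}(\vec{s}) \le B_{\ell^*} + c\sum_{j>\ell^*} B_j$ spends concavity coordinate-by-coordinate, and after that all you can extract from the algorithm is $v_{W_\pi}(\vec{s}) \ge \max_j B_j/c$. In the tail-heavy regime it is simply false that ``some $B_{j'}$ with $j'>\ell^*$ must be large'': the tail sum can equal $v_{i^*}(\vec{s})/(2c)$ while every individual $B_j$ is of order $v_{i^*}(\vec{s})/(c(n-\ell^*))$, so the max-based lower bound on the winner loses the factor $n-\ell^*$ you identify. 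You then assert that ``leveraging the concavity decomposition'' cuts this loss to $2$, but concavity has already been consumed in producing the per-bidder decomposition; no mechanism is given for recovering a sum-type rather than max-type lower bound on the winner, and randomizing over $\pi$ does not supply one (for each fixed $\pi$ the winner still only dominates a single $B_j$).

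The paper avoids this entirely by never splitting the tail into single-bidder contributions. It applies concavity to the suffix \emph{as a block}: pairing each permutation $\pi$ (with prefix set $A$ and suffix set $B$ around $i^*$) with the reflected permutation $\pi'$ in which $B$ precedes $i^*$ and $A$ follows, concavity gives that the increase in $v_{i^*}$ from turning on all of $B$'s signals is at least as large when $A$'s signals are off, hence $v_{i^*}(\vec{s^{t}_{\pi}}) + v_{i^*}(\vec{s^{t'}_{\pi'}}) \ge v_{i^*}(\vec{s})$. Since the pairing is a bijection on permutations, $\E_\pi[v_{i^*}(\vec{s^{t}_{\pi}})] \ge v_{i^*}(\vec{s})/2$, and the $c$-approximation invariant at iteration $t$ plus Lemma~\ref{allocatedvaluemonotonicity} then yield the $2c$ bound with no case analysis. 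To repair your proof you would need to replace the per-coordinate telescoping with this block-level use of concavity (or an equivalent averaging identity over permutation reflections); as written, the argument has a genuine gap exactly where you flag ``the crux.''
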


Rather than proving Theorem \ref{thm:concave-2c}, we give a more general result from which Theorem \ref{thm:concave-2c} follows as a special case.
In particular, we consider a parameterized version of concave valuations, $d$-concave valuations, for which the special case of $d=1$ reduces to the standard definition of concave valuations.

Theorem \ref{thm:d-concave} below shows a $c(d+1)$-approximation guarantee for $d$-concave valuations.
Before giving the proof of Theorem \ref{thm:d-concave}, we give some intuition.

Recall Definition \ref{def:ithintprof} of the $i^{\rm th}$ intermediate profile $\vec{s^{i}_{\pi}}$; this is a function of a signal profile $s$ and a permutation $\pi$.

Fix a signal profile $\vec{s}$, let $v_{i^*}(\vec{s})=\max_\ell v_\ell(\vec{s})$,
and let $t$ be $i^*$'s position in the randomly drawn ordering $\pi$ ({\sl i.e.}, $\pi_t=i^*$; see Figure \ref{fig:permutation}).
Recall that if the tentative winner at $\vec{s^t_{\pi}}$, {\sl determined in the $t^{th}$ iteration}, is not $i^*$ then it must $c$-approximate $i^*$'s value at $\vec{s^t_{\pi}}$. Alternatively, the item is reallocated to $i^*$. Therefore, the expected value of the tentative winner at $\vec{s^t_{\pi}}$ is at least $1/c$ of the expected value of $i^*$ at $\vec{s^t_{\pi}}$.
It now follows from Lemma~\ref{boundedalgchange} that this also holds for the expected value of the final winner at $\vec{s}$.
Therefore, it only remains to show that $\E_{\pi}\left[v_{i^*}(\vec{s^t_{\pi}})\right]\geq v_{i^*}(\vec{s})/2.$

We say that the signal of bidder $i$ ``turns on" when it changes from $0$ to $s_i$, and ``turns off" when it changes from $s_i$ to $0$.
Consider some permutation $\pi$ where the position of $i^*$ is $t$.  Let $A\subset \{1,\ldots,n\}$ be the set of bidders preceding $i^*$ in $\pi$, and let $B=\{1,\ldots,n\}\setminus (A\cup i^*)$ be the set of bidders succeeding $i^*$. If the value of $i^*$ after the signals from bidders in $A$ and $i^*$ have turned on is $v_{i^*}(\vec{s^t_{\pi}})=\alpha\cdot  v_{i^*}(\vec{s})$ for some $0\leq \alpha\leq 1$, then obviously, the change in value of $i^*$ when we turn on all signals of $B$ is exactly $(1-\alpha)\cdot  v_{i^*}(\vec{s})$. Now consider a permutation $\pi'$ that is obtained from $\pi$ by switching between the bidders in $B$ and the bidders in $A$. That is, in $\pi'$, the bidders in $B$ precede $i^*$ and the bidders in $A$ succeed $i^*$.
Let $t'$ be the position of bidder $i^*$ in $\pi'$ (see Figure \ref{fig:permutation} for an illustration).

\begin{figure}[h!]
\begin{center}
\includegraphics[scale=.3]{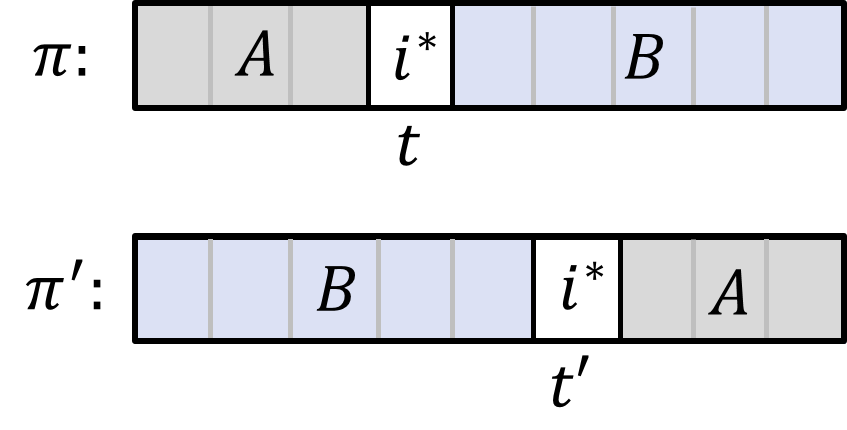}
\caption{An illustration of the permutations $\pi$ and $\pi'$.}
\label{fig:permutation}
\end{center}
\end{figure}

We now argue that %(where agent $i^*$ is considered in $\hypercol(\pi')$),
$$v_{i^*}(\vec{s^{t'}_{\pi'}})\geq (1-\alpha)\cdot v_{i^*}(\vec{s}).$$
To see this, note that the increase in $v_{i^*}$ as a result of the signals of the bidders in $B$ turning on, given that the signals of the bidders in $A$ are on, is $(1-\alpha)\cdot  v_{i^*}(\vec{s})$.
It follows from concavity that the increase in $v_{i^*}$ as a result of the signals of the bidders in $B$ turning on, given that the signals of the bidders in $A$ are off, can only be larger.
To conclude the argument, since $\pi$ and $\pi'$ are drawn with equal probabilities, and $v_{i^*}(\vec{s^t_{\pi}})+v_{i^*}(\vec{s^{t'}_{\pi'}})\geq v_{i^*}(\vec{s})$, it holds that $\E_{\pi}\left[v_{i^*}(\vec{s^t_{\pi}})\right]\geq v_{i^*}(\vec{s})/2$, as desired.

We now state the more general Theorem and give the formal proof.

\begin{theorem}
For every $d$-concave and $c$-single-crossing valuations, $\randhypercol$ is a randomized universally truthful and prior-free mechanism that gives a $c(d+1)$-approximation to social welfare.
\label{thm:d-concave}
\end{theorem}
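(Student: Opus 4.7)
The plan is to split the argument into three parts. Universal truthfulness and prior-freeness come for free: for every fixed $\pi$, the mechanism $\hypercol(\pi)$ is deterministic, prior-free, and truthful by Theorem~\ref{thm:hypercolapx}, and $\randhypercol$ is merely a distribution over these mechanisms. It thus suffices to bound the expected welfare. Concretely, fixing any signal profile $\vec{s}$ with $i^*\in\argmax_i v_i(\vec{s})$ and letting $w_\pi$ denote the bidder allocated by $\hypercol(\pi)$ at $\vec{s}$, I will show
\[\E_\pi\!\bigl[v_{w_\pi}(\vec{s})\bigr]\;\ge\;\frac{v_{i^*}(\vec{s})}{c(d+1)}.\]

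Let $t=t(\pi)$ be the position of $i^*$ in $\pi$, and consider the intermediate profile $\vec{s^{t}_{\pi}}$ from Definition~\ref{def:ithintprof} produced just after $i^*$ is processed. By the approximation step inside the proof of Theorem~\ref{thm:hypercolapx}, the tentative winner $\tilde w_\pi$ assigned at $\vec{s^{t}_{\pi}}$ in iteration $t$ satisfies $c\cdot v_{\tilde w_\pi}(\vec{s^{t}_{\pi}})\ge v_{i^*}(\vec{s^{t}_{\pi}})$: either $\tilde w_\pi=i^*$, or the algorithm chose not to re-assign that cell to $i^*$ only because the current occupant already $c$-approximated $v_{i^*}(\vec{s^{t}_{\pi}})$. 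Invoking Lemma~\ref{boundedalgchange}, which transfers such per-profile guarantees from the intermediate profile to the terminal profile $\vec{s}$, I obtain
\[c\cdot\E_\pi\!\bigl[v_{w_\pi}(\vec{s})\bigr]\;\ge\;\E_\pi\!\bigl[v_{i^*}(\vec{s^{t}_{\pi}})\bigr],\]
so the problem reduces to lower-bounding $\E_\pi[v_{i^*}(\vec{s^{t}_{\pi}})]$.

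The main step is to prove $\E_\pi[v_{i^*}(\vec{s^{t}_{\pi}})]\ge v_{i^*}(\vec{s})/(d+1)$ via a coupling argument that exploits $d$-concavity. I pair each permutation $\pi$ with the permutation $\pi'$ obtained by swapping the prefix $A=\{\pi_1,\ldots,\pi_{t-1}\}$ with the suffix $B=\{\pi_{t+1},\ldots,\pi_n\}$ while keeping $i^*$ in position $|B|+1$; since $\pi\mapsto\pi'$ is an involution on the symmetric group, $\E_\pi[v_{i^*}(\vec{s^{t(\pi)}_{\pi}})]=\E_\pi[v_{i^*}(\vec{s^{t(\pi')}_{\pi'}})]$. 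Next, I fix any ordering of the bidders in $A$ and consider two parallel walks that turn on the $A$-signals one coordinate at a time in that common order: the first walk starts at $(s_{i^*},\vec{0}_{-i^*})$ and ends at $\vec{s^{t(\pi)}_{\pi}}$; the second walk starts at $\vec{s^{t(\pi')}_{\pi'}}$ and ends at $\vec{s}$. At each step, the two walks agree on the coordinates of $A\cup\{i^*\}$, but the first has $\vec{0}_B$ in the $B$-block while the second has $\vec{s}_B$; applying $d$-concavity to the single $A$-coordinate being changed yields (partial increment on walk~1) $\ge (1/d)\cdot$(partial increment on walk~2). Telescoping over the steps gives
\[d\cdot\bigl[v_{i^*}(\vec{s^{t(\pi)}_{\pi}})-v_{i^*}(s_{i^*},\vec{0}_{-i^*})\bigr]\;\ge\;v_{i^*}(\vec{s})-v_{i^*}(\vec{s^{t(\pi')}_{\pi'}}).\]
Using $v_{i^*}(s_{i^*},\vec{0}_{-i^*})\ge 0$ rearranges this to $d\cdot v_{i^*}(\vec{s^{t(\pi)}_{\pi}})+v_{i^*}(\vec{s^{t(\pi')}_{\pi'}})\ge v_{i^*}(\vec{s})$; taking expectations over $\pi$ and invoking the involution identity produces $(d+1)\E_\pi[v_{i^*}(\vec{s^{t}_{\pi}})]\ge v_{i^*}(\vec{s})$. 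Combining with the display above gives the claimed $c(d+1)$-approximation.

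The main obstacle I anticipate is making the telescoping step fully rigorous: one must choose a single common order in which to turn on the $A$-signals along both walks so that, at each step, the two partial increments being compared really differ only in the $B$-coordinate block, putting $d$-concavity in its pure single-coordinate form. Once that bookkeeping is carried out cleanly, the remaining manipulations are routine.
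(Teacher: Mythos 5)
Your proof is correct and follows essentially the same route as the paper's: reduce to lower-bounding $\E_\pi[v_{i^*}(\vec{s^{t}_{\pi}})]$, pair each $\pi$ with its prefix/suffix swap $\pi'$, and apply $d$-concavity to turning on the $A$-signals with $B$ off versus $B$ on; your final accounting via the involution identity $\E_\pi[v_{i^*}(\vec{s^{t(\pi)}_{\pi}})]=\E_\pi[v_{i^*}(\vec{s^{t(\pi')}_{\pi'}})]$ is a mild streamlining of the paper's pointwise bound $v_{i^*}(\vec{s^{t}_{\pi}})+v_{i^*}(\vec{s^{t'}_{\pi'}})\geq \tfrac{2}{d+1}v_{i^*}(\vec{s})$, which the paper obtains by optimizing over $\alpha$. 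One small slip: the lemma that transfers the tentative winner's value at $\vec{s^{t}_{\pi}}$ to the final winner's value at $\vec{s}$ is Lemma~\ref{allocatedvaluemonotonicity}, not Lemma~\ref{boundedalgchange}.
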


\begin{proof}
Given a profile $\vec{s}$, let $i^*$ be the highest valued bidder in $\vec{s}$; i.e., $i^* \in \argmax_i v_i(\vec{s})$.
We show that the expected value of $i^*$ at the iteration where $i^*$ is considered in $\pi$ is at least $\frac{1}{d+1}$ of his value at $\vec{s}$.
That is, it holds that $\E_\pi\left[v_{i^*}(\vec{s_\pi^{t}})\right] \geq  \frac{v_{i^*}(\vec{s})}{d+1}$, where $t$ is the random variable representing $i^*$'s location in a random permutation $\pi$.

Before proving this claim, we show how it implies a $c(d+1)$-approximation guarantee.
Let $j'$ be the tentative winner in $i^*$'s iteration under $\pi$.
By the properties of $\hypercol(\pi)$, $j'$'s value at $\vec{s^{t}_{\pi}}$ $c$-approximates $i^*$'s value, and by Lemma~\ref{allocatedvaluemonotonicity}, then $v_j(\vec{s}) \geq v_{j'}(\vec{s_{\pi}^{t}})$, where $j$ is the final winner; i.e., $j = x(\vec{s})$.
	
Putting it all together, if $\E_\pi\left[v_{i^*}(\vec{s_{\pi}^{t}})\right] \geq \frac{1}{d+1} v_{i^*}(\vec{s})$, then
	$$c \cdot \E_\pi\left[v_j(\vec{s})\right] \quad\quad \geq \quad\quad  c \cdot \E_\pi\left[v_{j'}(\vec{s_{\pi}^{t}})\right] \quad\quad  \geq \quad\quad  \E_\pi\left[v_{i^*}(\vec{s_{\pi}^{t}})\right] \quad\quad  \geq \quad\quad  \frac{1}{d+1}  v_{i^*}(\vec{s}),$$
as desired.	

We now prove that the expected value of $i^*$ at $\vec{s_{\pi}^{t}}$ is at least $1/(1+d)$ fraction of his value at $\vec{s}$.  Fix an ordering $\pi$, and let $t$ be $i^*$'s position in $\pi$.
Let $A\subset \{1,\ldots,n\}$ (resp., $B=\{1,\ldots,n\}\setminus(A\cup i^*)$) be the set of bidders whose location in $\pi$ is smaller than (resp., greater than) $t$.
That is, $A$ and $B$ are the sets of bidders preceding and succeeding $i^*$ in $\pi$, respectively.
Let $0\leq \alpha\leq 1$ be the fraction of $i^*$'s value recovered at $\vec{s_{\pi}^{t}}$, that is
\begin{eqnarray}
	v_{i^*}(\vec{s_{\pi}^{t}})= \alpha\cdot v_{i^*}(\vec{s}).
\label{eq:intermediate_eq}
\end{eqnarray}

It follows that the increase in $i^*$'s value as a result of the signals in $B$ turning on is $(1-\alpha)\cdot v_{i^*}(\vec{s})$. Now consider the permutation $\pi'$ in which $A$'s and $B$'s internal ordering is the same as in $\pi$, but where $B$ precedes $i^*$, which in turn precedes $A$, as shown in Figure \ref{fig:permutation}. Let $t'$ be $i^*$'s position in $\pi'$.
Since at time $t'$, all the signals of bidders in $A$ have yet to turn on and are thus 0, $d$-concavity implies that the change in $i^*$'s value as a result of turning on the signals in $B$ in $\pi'$ is at least $1/d$ fraction of the change in $\pi$.
That is,
$$v_{i^*}(\vec{s_{\pi'}^{t'}}) - v_{i^*}(\vec{0}) \geq \frac{1}{d}\cdot \left( v_{i^*}(\vec{s}) - v_{i^*}(\vec{s_{\pi}^{t}}) \right) $$
which by non-negativity of $v_{i^*}(\vec{0})$ and by (\ref{eq:intermediate_eq}) implies that
	\begin{eqnarray}
		v_{i^*}(\vec{s_{\pi'}^{t'}})\geq \frac{1-\alpha}{d}\cdot v_{i^*}(\vec{s}). \label{eq:intermediate_lb1}
	\end{eqnarray}

By a similar argument, if the change in $i^*$'s value after turning on the signals from $A$ under $\pi$ is at most $\alpha\cdot v_{i^*}(\vec{s})$, then by $d$-concavity, the change in $i^*$'s value after turning on the signals from $A$ under $\pi'$ is even smaller---at most $\alpha d\cdot v_{i^*}(\vec{s})$:
$$v_{i^*}(\vec{s}) - v_{i^*}(\vec{s_{\pi'}^{t'}}) \leq d \cdot \left( v_{i^*}(\vec{s_{\pi}^{t}}) - v_{i^*}(\vec{0}) \right) \leq \alpha d\cdot v_{i^*}(\vec{s}). $$
Therefore, it holds that\footnote{Note that the right hand side of Equation ~\eqref{eq:intermediate_lb2} might be negative, but this is fine.}
	\begin{eqnarray}
	v_{i^*}(\vec{s_{\pi'}^{t'}})\geq (1-\alpha d)\cdot v_{i^*}(\vec{s}). \label{eq:intermediate_lb2}
	\end{eqnarray}	

It follows from Equations \eqref{eq:intermediate_eq} and \eqref{eq:intermediate_lb1} that
	\begin{eqnarray}
		v_{i^*}(\vec{s_{\pi}^{t}})+v_{i^*}(\vec{s_{\pi'}^{t'}})\geq \frac{1+(d-1)\alpha}{d}\cdot v_{i^*}(\vec{s}), \label{eq:sum1}
	\end{eqnarray}
and combining Equations \eqref{eq:intermediate_eq} and \eqref{eq:intermediate_lb2} gives
	\begin{eqnarray}
	v_{i^*}(\vec{s_{\pi}^{t}})+v_{i^*}(\vec{s_{\pi'}^{t'}})\geq (1-(d-1)\alpha)\cdot v_{i^*}(\vec{s}). \label{eq:sum1}
	\end{eqnarray}
Therefore, \begin{eqnarray}
		v_{i^*}(\vec{s_{\pi}^{t}})+v_{i^*}(\vec{s_{\pi'}^{t'}})\geq\max\left\{\frac{1+(d-1)\alpha}{d},1-(d-1)\alpha\right\}\cdot v_{i^*}(\vec{s}).\label{eq:lb_max}
	\end{eqnarray}
The last expression obtains its minimum when the two terms in the $\max$ expression are equal, which occurs at $\alpha=\frac{1}{d+1}$. Plugging this value of $\alpha$ into Equation \eqref{eq:lb_max} implies that for every pair of $\pi$ and its corresponding $\pi'$ (note that this is a bijection) it follows that
\begin{eqnarray*}
	v_{i^*}(\vec{s_{\pi}^{t}})+v_{i^*}(\vec{s_{\pi'}^{t'}})\geq \frac{2}{d+1}\cdot v_{i^*}(\vec{s}).\label{eq:sum_lb}
\end{eqnarray*}

Since there are $n!/2$ such pairs (half the number of permutations), we can write $$\sum_{\pi}v_{i^*}(\vec{s_{\pi}^{t}})\geq \frac{n!}{2}\cdot \frac{2}{d+1}\cdot v_{i^*}(\vec{s})=\frac{n!}{d+1}\cdot v_{i^*}(\vec{s}).$$
	We conclude that
	\begin{eqnarray*}
		\E_\pi \left[v_{i^*}(\vec{s_{\pi}^{t}})\right] & = &\sum_{\pi}\Pr[\pi]\cdot v_{i^*}(\vec{s_{\pi}^{t}})\\
		&= &\frac{1}{n!}\cdot \sum_{\pi}v_{i^*}(\vec{s_{\pi}^{t}})\\
		&\geq&\frac{1}{n!}\cdot \frac{n!}{d+1}\cdot v_{i^*}(\vec{s})\\
		&=&\frac{1}{d+1}\cdot v_{i^*}(\vec{s}),
	\end{eqnarray*}
	as desired.
\end{proof}

\section{Revenue Approximation} \label{sec:revenue}
\subsection{Black-box reduction for deterministic mechanisms}
In this section, we give a black box mechanism that, when given a deterministic truthful allocation rule $x$ that gives an $\alpha$-approximation to welfare for some $\alpha\geq1$, earns expected revenue $\frac{1}{\alpha^2 + 4\alpha + 1} \, \opt$.  The mechanism is very similar to that of \citet{CFK}. While \citet{CFK} build upon the generalized VCG mechanism, we apply similar ideas to an arbitrary deterministic monotone allocation $x$. We later show how to extend this to a family of randomized mechanisms, including the mechanism $\randhypercol$ given in Section~\ref{sec:random}.
As in \citet{CFK}, to approximate revenue we require concave valuations (see Definition~\ref{def:concavity}).

Let $x(\cdot)$ be a truthful allocation rule is defined for any number of bidders. Consider $n$ bidders, with signals $s_i$, $1 \leq i \leq n$ and
let $v_i(\vec{s})$ be the valuation of bidder $i$ on the signal profile $(s_1, \ldots, s_n)$.  Let $S\subset \{1,\ldots,n\}$ be a subset of the $n$ bidders, and let $\vec{s}_S$ denote set of signals of bidders in $S$, while $\vec{s}_{-S}$ denotes the set of signals of the other bidders. For $i\in S$ define the valuation $v^S_i(\vec{s}_S)=v_i(\vec{s}_S,\vec{s}_{-S}).$
Fix the set of signals $s_S$, define $x_S(\vec{s}_S)=x(\vec{s}_S,\vec{s}_{-S})$. If $x(\cdot)$ is a truthful allocation rule with approximation $\alpha$ with respect to some set of valuations $v_i(\vec{s})$ then $x^S(\cdot)$ is a truthful allocation rule with approximation $\alpha$ with respect to valuations $v_i^S(\vec{s}_{S})$.

\begin{definition} \label{def:winningcondreserve} Bidder $i$'s \emph{winning conditional monopoly reserve price} $p_i^*: p_i^*(\vec{s}_{-i}, x)$ is a function of the other bidder's signals, $\vec{s}_{-i}$, and the allocation rule $x(\cdot)$.  Fix all other bidder signals $\vec{s}_{-i}$, and let  $b_i^*=\min_{b_i} x_i(\vec{s}_{-i},b_i)=1$ be the critical signal for bidder $i$. Now, $p_i^*$ is the monopoly selling price given that $s_i\geq b_i^*$ and $s_{-i}$.
   Let $R_i$ be the expected revenue from offering the item to $i$ at price $p_i^*$.
\end{definition}

\begin{definition} \label{def:losingcondreserve} Bidder $i$'s \emph{losing conditional monopoly reserve price} $p_i^*: p_i^*(\vec{s}_{-i}, x)$ is a function of all of the other bidder's signals $\vec{s}_{-i}$ and the allocation rule $x(\cdot)$. Fix all other bidder signals $\vec{s}_{-i}$, and, as above, let  $b_i^*=$ be the critical signal for bidder $i$. Now, $p_i^*$ is the monopoly selling price given $s_{-i}$ and that $s_i< b_i^*$.
   Let $\tilde{R}_i$ be the expected revenue from offering the item to $i$ at this price.
  \end{definition}

Given a deterministic allocation function $x$ with welfare approximation guarantee $\alpha$, consider the following [randomized] blackbox mechanism $M$:
\begin{enumerate}
\item[(a)] With probability $\frac{\alpha^2 +1}{\alpha^2 + 4\alpha d + 1}$:  Given signals $\vec{s}$, let $i := x(\vec{s})$ be the winner of the original allocation at $\vec{s}$.  Offer bidder $i$ the item at price $p_i^*(\vec{s}_{-i}, x(\cdot))$. Note that the item need not be sold.
\item[(b)] With probability $\frac{4\alpha d}{\alpha^2 + 4\alpha d + 1}$:
Let $Z\subset \{1,\ldots,n\}$ be chosen uniformly at random from all such subsets.   Let $i := x^{Z}(\vec{s}_Z)$ be the winner of the allocation restricted to potential winners from $Z$.  Give the item to $i$ if her value at the reported signals is above his conditional monopoly reserve price $p_i^*(\vec{s}_{-i}, x(\cdot;Z))$.
\end{enumerate}

\begin{theorem} \label{thm:revapx}
Mechanism $M$ earns a $(\alpha^2 + 4\alpha d + 1)$-approximation to the optimal revenue.
\end{theorem}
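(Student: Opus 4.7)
The plan is to follow the CFK template for reducing revenue maximization to welfare maximization, tracking how the welfare-approximation factor $\alpha$ and the concavity parameter $d$ propagate. I would proceed in three stages.

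First, I would establish a Myerson-style upper bound on $\opt$ in terms of conditional monopoly revenues relative to the allocation $x$. Writing $i^*(\vec s)=\argmax_i v_i(\vec s)$, the argument of CFK, adapted from the generalized VCG allocation to a generic $\alpha$-approximate allocation, gives
$$\opt \;\leq\; \E_{\vec s}\!\left[R_{i^*(\vec s)}(\vec s_{-i^*})\right] \;+\; \E_{\vec s}\!\left[\tilde R_{i^*(\vec s)}(\vec s_{-i^*})\right],$$
where the decomposition splits OPT's expected payment from $i^*$ according to whether her realized signal lies above or below her critical signal $b_{i^*}^*$ under $x$. The above-threshold contribution is dominated by the winning conditional monopoly revenue $R_{i^*}$, and the below-threshold one by the losing conditional monopoly revenue $\tilde R_{i^*}$. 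These are precisely the quantities targeted by branches (a) and (b) of $M$.

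Second, I would lower bound the revenue of each branch separately. Branch (a) offers $x(\vec s)$ her winning conditional monopoly reserve, yielding expected revenue $\E[R_{x(\vec s)}]$; using that $x$ is $\alpha$-approximate to welfare and that valuations are $c$-single-crossing, this is within an $\alpha^2$-factor of $\E[R_{i^*}]$ (one $\alpha$ from the approximate allocation and one from the induced reserve). Branch (b) samples $Z$ uniformly and runs $x^Z$ with the losing conditional monopoly reserve: $i^*(\vec s)\in Z$ with probability $1/2$, she then wins $x^Z$ up to an $\alpha$-loss, and by $d$-concavity, the losing conditional monopoly reserve computed under $x^Z$ (where signals of bidders in $\{1,\dots,n\}\setminus Z$ are effectively zero) extracts at least $1/d$ of the corresponding reserve under $x$. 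These factors combine to give that branch (b) contributes at least $\E[\tilde R_{i^*}]/(4\alpha d)$ before its mixing weight. Combining with the mixing probabilities $p_a=(\alpha^2+1)/(\alpha^2+4\alpha d+1)$ and $p_b=4\alpha d/(\alpha^2+4\alpha d+1)$ and invoking the first-step upper bound yields
$$\Rev(M) \;\geq\; \frac{1}{\alpha^2+4\alpha d+1}\bigl(\E[R_{i^*}] + \E[\tilde R_{i^*}]\bigr) \;\geq\; \frac{\opt}{\alpha^2+4\alpha d+1}.$$
The particular choice of weights $\alpha^2+1$ and $4\alpha d$ is made precisely so that the per-branch loss factors cancel against the numerators and deliver the uniform bound $1/(\alpha^2+4\alpha d+1)$ on both $\E[R_{i^*}]$ and $\E[\tilde R_{i^*}]$. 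Truthfulness of $M$ itself is inherited from Proposition~\ref{prop:char-ic} applied to $x$ and each restriction $x^Z$, together with the fact that take-it-or-leave-it offers at monopoly reserves preserve ex-post IC.

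The main obstacle will be the $d$-concavity step inside branch (b): one must show that the losing conditional monopoly reserve of bidder $i$ inside $x^Z$, where signals of bidders outside $Z$ are effectively set to $0$, still extracts at least a $1/d$ fraction of the analogous reserve computed inside the full allocation $x$. This requires invoking $d$-concavity of each $v_i$ in $s_i$ uniformly over other bidders' signal levels and translating that pointwise bound on valuations into a multiplicative bound on monopoly revenue, parallel to the $d=1$ argument in CFK. Once this quantitative comparison is in place, the remainder of the proof is the routine arithmetic balancing that fixes the mixing probabilities and yields the claimed $(\alpha^2+4\alpha d+1)$-approximation.
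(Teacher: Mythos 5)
Your proposal correctly identifies the CFK template (two branches, winning/losing conditional monopoly reserves, a lookahead-style benchmark, and mixing probabilities tuned to balance the branches), but the core of the argument has a genuine gap that cannot be patched locally. Your benchmark $\opt \leq \E[R_{i^*}] + \E[\tilde R_{i^*}]$, centered on the highest-valued bidder $i^*$, is not the lookahead benchmark and is not justified: the optimal auction extracts revenue from all bidders, and $R_{i^*}$, $\tilde R_{i^*}$ are \emph{conditional} revenues (not weighted by the probabilities of the conditioning events), so this is not a decomposition of anything. The benchmark the paper uses (Lemma 4.1 of CFK) is $\opt \leq \E[R_1 + v_2(\vec{s})]$, where $1$ is the \emph{winner of $x$} and $2$ is the highest-valued \emph{non-winner}; branch (a) recovers $\E[R_1]$ exactly, and the whole difficulty is converting the welfare term $v_2(\vec{s})$ into revenue. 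Relatedly, your claims that branch (a) earns at least $\E[R_{i^*}]/\alpha^2$ and branch (b) at least $\E[\tilde R_{i^*}]/(4\alpha d)$ are precisely the statements that would need proof: monopoly reserve revenues of different bidders under different conditioning events do not compare via the pointwise welfare approximation, and the paper never attempts such a cross-bidder comparison of reserves.

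The actual mechanism of the proof is different. The random set $Z$ is not there to capture $i^*$; the relevant event $A$ is that $1 \notin Z$ \emph{and} $2 \in Z$, which has probability $1/4$ --- this is where the $4$ in $4\alpha d$ comes from, and your accounting ($1/2$ for $i^*\in Z$, times $\alpha$, times $d$) does not produce it. Conditioned on $A$, the winner $3 = x^Z(\vec{s}_Z)$ satisfies $v_2(\vec{s}) \leq \alpha\, v_3(\vec{s})$, and $d$-concavity enters in bounding $v_3(\vec{s})$ itself, not in comparing reserves of $x^Z$ to reserves of $x$: one writes $v_3(s_1,\vec{s}_{-1}) - v_3(b_1^*,\vec{s}_{-1}) \leq d \left( v_3(s_1,0_{[3]},\vec{s}_{-13}) - v_3(b_1^*,0_{[3]},\vec{s}_{-13}) \right)$, uses $v_3(s_1,0_{[3]},\vec{s}_{-13}) \leq \tilde R_3$ (the losing reserve is at least $3$'s value at $s_3=0$) and $v_3(b_1^*,\vec{s}_{-1}) \leq \alpha\, v_1(b_1^*,\vec{s}_{-1}) \leq \alpha R_1$ (since $1$ wins at the critical profile and $x$ is an $\alpha$-approximation there), arriving at $\E[v_3 \mid A] \leq d\,\E[\tilde R_3 \mid A] + \alpha\,\E[R_1]$. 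Chaining gives $\opt \leq (\alpha^2+1)\E[R_1] + \alpha d\,\E[\tilde R_3 \mid A]$, which matches the mechanism's revenue up to the factor $\alpha^2+4\alpha d+1$. Your sketch never relates the welfare of a non-winner to any reserve revenue, and that is the step which actually requires $d$-concavity and the $\alpha$-approximation at the critical signal.
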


\begin{proof}

Rename the bidders such that $1 = x(\vec{s}; [n])$ is the winner among all $n$ bidders, $2 \in \argmax _{i \neq 1} v_i(\vec{s})$ is a bidder with the highest value aside from $1$, and $3 = x(\vec{s}; Z)$ is the winner among the random set of potential winners $Z$.

Fix a set of signals $\vec{s}$. First we observe that when the mechanism executes (a), the expected revenue is exactly $R_1$, as the mechanism offers the winning conditional monopoly reserve price to $1$.  When the mechanism executes (b) and the winner is bidder $3$, the expected revenue is at least $\tilde{R}_3$ for any random set of bidders $Z$. This is because, conditioning that $3$'s signal is more than the critical signal $b_3^*(s_{-i}, x(\cdot;Z))$ can only increase the revenue when he is the winner at $x(\cdot;[n])$ (in which case, without conditioning the expected revenue is at least $R_3>\tilde{R}_3$) or not (in which case, without conditioning the the expected revenue is at least $\tilde{R}_3$). Notice that the events that we run (b), $1\notin Z$ and $2\in Z$ are independent. Let $A$ be the event that $1\not\in Z$ and $2\in Z$. We therefore bound the revenue of the mechanism by
\begin{eqnarray*}
	\Rev(M) &\geq& \Pr[\mathrm{(a)}] \E_{\vec{s}}[R_1] + \Pr[\mathrm{(b)} \cap A] \E_{\vec{s},Z}[\tilde{R}_3\ |\ A]\\
	&=& \frac{1}{\alpha^2 + 4\alpha d + 1}\left((\alpha^2 + 1)\cdot \E_{\vec{s}}[R_1] + \alpha d \cdot \E_{\vec{s},Z}[\tilde{R}_3\ |\ A]\right).
\end{eqnarray*}

In addition, for any mechanism, an upper bound on the revenue is the \emph{lookahead benchmark}.  In our setting, this is the revenue from selling only to the winner at his winning conditional monopoly reserve, plus the welfare of the highest valued agent who did not win (Lemma 4.1 in \citet{CFK}). That is,
\begin{equation}
\textsc{Lookahead Benchmark:} \hspace{1cm} \opt \leq \E_{\vec{s}}[R_1 + v_2(\vec{s})]. \hspace{4cm}
\label{eq:lookaheadbenchmark}
\end{equation}
Since we will bound the value of bidder $2$ by the value of bidder $3$, we now observe that the value of bidder $3$ is bounded by the revenue from offering conditional monopoly reserves to bidders 1 or 3.
\begin{align}
\E_{\vec{s},Z}[v_3(\vec{s})\ |\ A]\ &=\ \E_{\vec{s},Z}[v_3(s_1, \vec{s}_{-1})\ - v_3(b_1^*, \vec{s}_{-1}) + v_3(b_1^*, \vec{s}_{-1}) \ |\ A] \nonumber\\
&\leq\ \E_{\vec{s},Z}\left[\begin{aligned}
&d \left( v_3(s_1, 0_{[3]}, \vec{s}_{-13}) - v_3(b_1^*, 0_{[3]}, \vec{s}_{-13}) \right) \\ &\ +\  v_3(b_1^*, \vec{s}_{-1})
\end{aligned}\ \Big\vert\ A\right]& \mbox{$d$-\concavity}\nonumber \\
&\leq\ \E_{\vec{s},Z}[d v_3(s_1, 0_{[3]}, \vec{s}_{-13}) + v_3(b_1^*, \vec{s}_{-1}) \ |\ A]& \text{Nonnegativity of $v_3$}\nonumber\\
&\leq\ \E_{\vec{s},Z}[d \tilde{R}_3  + v_3(b_1^*, \vec{s}_{-1})  \ |\ A]& \text{Definition~\ref{def:losingcondreserve} of $\tilde{R}_3$} \nonumber\\
&\leq\ d \cdot \E_{\vec{s},Z}[\tilde{R}_3  \ |\ A] +  \E_{\vec{s}}[\alpha v_1(b_1^*, \vec{s}_{-1})] & \text{$x$ $\alpha$-approximates welfare} \nonumber
\\
&\leq\ d\cdot  \E_{\vec{s},Z}[d \tilde{R}_3  \ |\ A] + \alpha \E_{\vec{s}}[R_1]. & \text{Definition~\ref{def:winningcondreserve} of $R_1$}\label{eq:last}
\end{align}
Lines 4 and 6 hold because the winning conditional monopoly reserve price for $1$ is at least $v_1(b_1^*, \vec{s}_{-1})$ and the losing conditional monopoly reserve price for $3$ is at least $v_3(s_3=0, \vec{s}_{-3})$, and in both cases, the buyer has value above this with probability 1. Line 5 holds since $1$ wins at $(b_1^*, \vec{s}_{-1})$ by definition of $b_1^*$ and an the winner in $x$ $\alpha$-approximates any other bidder.

\begin{align*}
\opt \ &\leq \ \E_{\vec{s}}[R_1 + v_2(\vec{s})] & \textsc{Lookahead Benchmark} \, \eqref{eq:lookaheadbenchmark} \\
&\leq\  \E_{\vec{s}}[R_1] +\E_{\vec{s},Z}[\alpha \, v_3(\vec{s})\ | \ A] & \text{$3 = x(\vec{s}; Z)$; $x$ $\alpha$-approximates welfare of $Z$; $2\in Z$}\\
&\leq \ (\alpha^2+1)\cdot\E_{\vec{s}}[R_1] + \alpha d\cdot \E_{\vec{s},Z}[\tilde{R}_3 \ | \ A]. & \text{$\E_{\vec{s}}[R_1]=\E_{\vec{s},Z}[R_1\ |\ A]$ and \eqref{eq:last}}
\end{align*}

All together, this gives $$\opt \leq (\alpha^2+1)\cdot\E_{\vec{s}}[R_1] + \alpha d\cdot \E_{\vec{s},Z}[\tilde{R}_3 \ | \ A] \leq \left(\alpha^2 + 4\alpha d + 1\right) \Rev(M).$$

\end{proof}

The above theorem, and Theorem~\ref{thm:2signals} from Section~\ref{sec:2signals} yield the following.
\begin{corollary}
	For $c$-single-crossing concave valuations supported on 2 signals, there exist a mechanism that gives $(c^2+4c+1)$-approximation to the optimal revenue.
\end{corollary}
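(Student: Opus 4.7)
The plan is to simply compose the two building blocks the corollary references: use Theorem~\ref{thm:2signals} to obtain a deterministic, truthful, prior-free allocation rule with welfare approximation $\alpha = c$ for the 2-signal setting, and then plug this rule into the black-box mechanism $M$ from Theorem~\ref{thm:revapx} with concavity parameter $d = 1$ (since concavity is the $d=1$ case of $d$-concavity). Substituting $\alpha = c$ and $d = 1$ into the guarantee $\alpha^2 + 4\alpha d + 1$ yields $c^2 + 4c + 1$, which gives the claimed revenue approximation.

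Before concluding, I would spend a sentence verifying the two compatibility conditions required to apply Theorem~\ref{thm:revapx} as a black box. First, $M$ requires an allocation rule $x(\cdot)$ that is defined (and truthful) for any subset $S \subseteq \{1,\dots,n\}$ of bidders, since step~(b) restricts to a random $Z$. This is immediate for the high-if-possible algorithm of Section~\ref{sec:2signals}: restricting to bidders in $Z$ leaves a valid 2-signal $c$-single-crossing instance on $|Z|$ bidders, so Theorem~\ref{thm:2signals} delivers a truthful $c$-approximation on it. Second, $M$ requires concave (or $d$-concave) valuations to bound $v_3$ via the $d$-concavity step in the chain leading to equation~\eqref{eq:last}; the hypothesis of the corollary supplies concavity directly, so we may take $d=1$.

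I don't anticipate any real obstacle — this is a pure plug-in. The only mild subtlety is the bookkeeping of the parameters: one must recognize that the ``$d$'' appearing in Theorem~\ref{thm:revapx} is the concavity parameter (not the single-crossing parameter), so concavity means $d=1$ and yields $c^2 + 4c + 1$ rather than, say, $c^2 + 4c^2 + 1$. With that observation in place, the proof is a one-line invocation:
\[
\Rev(M) \;\geq\; \frac{1}{\alpha^2 + 4\alpha d + 1}\,\opt \;=\; \frac{1}{c^2 + 4c + 1}\,\opt,
\]
where $x$ is instantiated as the high-if-possible allocation rule of Theorem~\ref{thm:2signals}.
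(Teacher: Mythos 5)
Your proposal is correct and matches the paper exactly: the paper derives this corollary by instantiating the black-box reduction of Theorem~\ref{thm:revapx} with the deterministic $c$-approximate allocation of Theorem~\ref{thm:2signals}, i.e., $\alpha=c$ and $d=1$, giving $\alpha^2+4\alpha d+1=c^2+4c+1$. Your extra check that the high-if-possible rule remains a truthful $c$-approximation on any restricted bidder set $Z$ is a worthwhile observation that the paper leaves implicit.
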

\subsection{Extension to randomized mechanisms}
We first note that reduction from welfare to revenue does not automatically give a reduction from expected welfare to expected revenue, as even if some random variable $X\geq1$ has some expectation $E[X]=\alpha$, the expectation of $X^2$ might be much larger\footnote{To see this, consider a random varible $X$ that takes value 1 with probability $1-p$ and $\alpha^M$ with probability $p=\frac{\alpha-1}{\alpha^M-1}$ for some very large $M$.} than $\alpha^2$.

Instead, we state the following. Consider a randomized mechanism such that for every set $S$, the probability that $j\gets x^S(\vec{s}_S)$ has a value at $\vec{s}$ that $\alpha$-approximates the maximum valued agent at $S$ is at least $p$ (independently for each $S$). Consider the case where both $x(\vec{s})$ and $x_Z(\vec{s}_Z)$ output an allocation that is $\alpha$-approximation to the maximum valued bidder in $\{1,\ldots, n\}$ and $Z$ respectively (this happens with probability $p^2$). In this case, the analysis of the previous section follows. Therefore, conditioning on $A$ and this event gives the desired guarantee. To optimize the parameters, we run (a) with probability $\frac{\alpha^2 +1}{\alpha^2 + (4\alpha d)/p^2 + 1}$ and with probability $\frac{(4\alpha d)/p^2}{\alpha^2 + (4\alpha d)/p^2 + 1}$ run (b). Plugging in the analysis from previous section yields an $(\alpha^2 + 4\alpha d/p^2 + 1)$-approximation to the revenue.

Inspecting $\randhypercol$ in the case of concave valuations, we notice that, with probability at least $1/2$, the winner is a $2c$-approximation to the optimal welfare. To see this, consider a set of signals $\vec{s}$ where $i^*\in \argmax_{i\in\{1,\ldots,n\}} v_i(\vec{s})$. Consider an ordering $\pi$ in which some set $A$ precedes $i^*$, and some set  $B=\{1,\ldots, n\}\setminus \{A\cup i^*\}$ succeeds $i^*$ and the ordering $\pi'$ with $A$ and $B$ switched: where $B$ precedes $i^*$ and $A$ succeeds $i^*$. As stated in Section \ref{sec:random}, if the value of $i^*$ at $\vec{s^{i^*}_\pi}$ (that is, after all signals in $A\cup i^*$ are turned on) is $\alpha\cdot v_i(\vec{s})$, then concavity ensures that $i^*$'s value at $\vec{s^{i^*}_{\pi'}}$ is at least $(1-\alpha)\cdot v_i(\vec{s})$. Therefore, at least in one of the orderings $\pi$ and $\pi'$, in $i^*$'s iteration, $i^*$'s value is at least half of its value at $\vec{s}$. By the way $\randhypercol$ works, the tentative winner after this iteration must $c$-approximate $i^*$'s value, and thus $2c$-approximation to his value at $\vec{s}$.
Since permutation $\pi$ and $\pi'$ are drawn with the same probability, this gives the desired guarantee.

\begin{corollary}
	For $c$-single-crossing concave valuations supported on an arbitrary number of signals, there exist a mechanism that gives $(4c^2+32c+1)$-approximation to the optimal revenue.
\end{corollary}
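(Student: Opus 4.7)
The plan is to invoke the randomized extension of Theorem~\ref{thm:revapx} developed in the paragraphs immediately preceding the corollary, and instantiate its parameters using the guarantees of $\randhypercol$ for concave $c$-single-crossing valuations. The key identifications are: since the valuations are (ordinarily) concave we have $d=1$, and from Theorem~\ref{thm:concave-2c} together with the symmetry argument sketched just above the corollary, $\randhypercol$ produces a winner whose value $2c$-approximates the maximum welfare with probability at least $1/2$. Thus I would set $\alpha = 2c$ and $p = 1/2$ in the extended bound $\alpha^2 + 4\alpha d/p^2 + 1$, giving $(2c)^2 + 4(2c)(1)/(1/2)^2 + 1 = 4c^2 + 32c + 1$, which is exactly the claimed ratio.

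The first step I would carry out carefully is verifying the $p \ge 1/2$ success probability \emph{for every subset} $Z$ of bidders, not only for the full set $\{1,\dots,n\}$. The extension requires that for each set $S$ drawn in step (b) of the blackbox mechanism, the restricted allocation rule $x^S(\cdot)$ succeeds independently with probability $p$. Because $c$-single-crossing and concavity are preserved under restricting to any subset of bidders (the partial-derivative inequalities are unaffected by fixing the remaining signals), the argument that pairs $\pi$ with its ``swapped'' companion $\pi'$ and concludes that at least one of the two puts $i^*$ at a profile where his value is at least half of $v_{i^*}(\vec{s})$ applies verbatim to $x^S$. Hence the restricted rule also gives a $2c$-approximation with probability $\ge 1/2$.

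The second step is simply to re-run the revenue argument of Theorem~\ref{thm:revapx} conditioned on the two ``good events'' — that both $x(\vec{s})$ and $x^Z(\vec{s}_Z)$ produce a $2c$-approximation to their respective optima — which by independence occurs with probability at least $p^2 = 1/4$. Conditioning on these events, the inequality chain through the lookahead benchmark and the losing conditional monopoly reserve $\tilde R_3$ goes through exactly as in the deterministic proof, with $\alpha = 2c$ and $d = 1$. Re-optimizing the mixing probabilities between steps (a) and (b) to $\frac{\alpha^2+1}{\alpha^2+4\alpha d/p^2+1}$ and $\frac{4\alpha d/p^2}{\alpha^2+4\alpha d/p^2+1}$ yields the advertised $(\alpha^2 + 4\alpha d/p^2 + 1)$-approximation.

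The main subtlety — and what I would expect to be the only place a careless proof could go wrong — is the independence requirement across the random choice of $Z$ and the internal randomness (the ordering $\pi$) of $\randhypercol$. I would make explicit that the mechanism draws a fresh, independent random ordering for the call to $x(\cdot)$ in step (a) and for each call to $x^Z(\cdot)$ in step (b), so that the events ``$x(\vec{s})$ yields a $2c$-approximation'' and ``$x^Z(\vec{s}_Z)$ yields a $2c$-approximation of the best bidder in $Z$'' are independent of each other and of the random choice of $Z$. Once this is spelled out, multiplying the success probability $p^2 = 1/4$ into the mixing probability for step (b) is exactly what produces the $4\alpha d/p^2 = 32c$ term, and the corollary follows.
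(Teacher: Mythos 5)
Your proposal is correct and follows essentially the same route as the paper: invoke the randomized extension of Theorem~\ref{thm:revapx} with $d=1$, $\alpha=2c$, and $p=1/2$ (the latter from the $\pi$/$\pi'$ pairing argument for concave valuations), yielding $\alpha^2+4\alpha d/p^2+1=4c^2+32c+1$. Your added care about the guarantee holding for every restricted rule $x^S$ and about independence of the randomness is a reasonable elaboration of points the paper states but does not spell out.
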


%\input{matroid}
%\input{unitdemand}

%\begin{acks}
%\end{acks}

% Bibliography
\bibliographystyle{plainnat}

\bibliography{interdependent}
\newpage
\appendix

\section{Alternative Single Crossing Definitions} 
\label{sec:alternative-definitions}

We have defined our notion of $c$ single crossing as a relaxation of the single crossing definition in \citep{RTCoptimalrev}.
This definition imply additional definitions that have been used in the literature \citep{milgrom1982theory,Aspremont82,maskin1992,ausubel1999generalized,dasgupta2000efficient,
bergemann2009information,CFK,che2015efficient,li2016approximation}.
Thus, the impossibility results given in Sections \ref{sec:preliminaries} and \ref{sec:impossibility} hold for any of these definitions.

It is less clear how to define a $c$-single crossing relaxation for other variants of the single crossing definition.
The definition of single crossing in \citep{RTCoptimalrev} is closed under subsets: if a set of valuation functions have this property then so does every subset. In contrast, the  \citep{dasgupta2000efficient} definition is that
for all $i\neq j$, \begin{equation}\frac{\partial v_i}{\partial s_i}(\vec{s})\geq\frac{\partial v_j}{\partial s_i}(\vec{s})\label{eq:dasgupta}\end{equation} at any point where $v_i(\vec{s})=v_j(\vec{s})=\max_k v_k(\vec{s})$. This definition is not closed under subsets.

However, if we modify the definition so that Equation \eqref{eq:dasgupta} holds whenever $v_i(\vec{s})=v_j(\vec{s})$, the modified definition is closed under subsets. Now, one could define a new version of $c$-single crossing based upon the above: for all $i\neq j$, \begin{equation*}\frac{\partial v_i}{\partial s_i}(\vec{s})\geq\frac{1}{c}\cdot\frac{\partial v_j}{\partial s_i}(\vec{s})\end{equation*} at any point where $v_i(\vec{s})=v_j(\vec{s})$. The proofs for 2 bidders, 2 signals, and the randomized approximations, when applied to this new definition, give the same results as the ones obtained in this paper. 
\section{Exact Numbers Satisfying the Structure in Figure \ref{fig:nocapxconstraints}} \label{sec:nocapproxnumbers}
We now present exact numbers that support the structure suggested in Figure \ref{fig:nocapxconstraints} in Section \ref{sec:nocapprox}. In this example, $c = 2$. One can verify that in $\vec{s}_a=(2,0,0)$ no bidder $c$-approximates bidder~2's value, in $\vec{s}_c=(1,1,0)$ bidder~2's value does not $c$-approximate bidder~3's value, and in $\vec{s}_e=(0,1,1)$ no bidder $c$-approximates bidder~1'a value, as desired in order to yield the contradiction. 
\newline\newline
\begin{tabular}{| c | c | c |} \hline
	$(0,1,0)$ & $\vec{s}_c=(1,1,0)$ & $(2,1,0)$ \\
	$v_1 = 0.007219$ & $v_1 = 0.014529$ & $v_1 = 0.017809$ \\
	$v_2 = 0.004286$ & $v_2 = 0.008091$ & $v_2 = 0.014651$ \\
	$v_3 = 0.003180$ & $v_3 = 0.017799$ & $v_3 = 0.017809$ \\ \hline
	$(0,0,0)$ & $(1,0,0)$ & $\vec{s}_a=(2,0,0)$ \\
	$v_1 = 0$ & $v_1 = 0.000100$ & $v_1 = 0.003381$ \\
	$v_2 = 0.000676$ & $v_2 = 0.000876$ & $v_2 = 0.007436$ \\
	$v_3 = 0.003170$ & $v_3 = 0.003370$ & $v_3 = 0.003380$ \\ \hline
\end{tabular} \quad \quad \quad
\vspace{.5cm}
\begin{tabular}{| c | c | c |} \hline
	$\vec{s}_e=(0,1,1)$ & $(1,1,1)$ & $(2,1,1)$ \\
	$v_1 = 0.009449$ & $v_1 = 0.016760$ & $v_1 = 0.020040$ \\
	$v_2 = 0.004295$ & $v_2 = 0.008101$ & $v_2 = 0.014661$ \\
	$v_3 = 0.004295$ & $v_3 = 0.018915$ & $v_3 = 0.018925$ \\ \hline
	$(0,0,1)$ & $(1,0,1)$ & $(2,0,1)$ \\
	$v_1 = 0.002231$ & $v_1 = 0.002331$ & $v_1 =  0.005611$ \\
	$v_2 = 0.000686$ & $v_2 = 0.000886$ & $v_2 = 0.007446$ \\
	$v_3 = 0.004286$ & $v_3 = 0.004486$ & $v_3 = 0.004495$ \\ \hline
\end{tabular}

\section{Random Mechanisms for General $c$-SC Valuations}
\label{sec:random-appendix}

In this section we show that if the permutation $\pi$ is chosen randomly, then the approximation guarantee given for $\randhypercol$  improves significantly.
In Section \ref{sec:random_sqrt} we establish a $2c^{3/2} \sqrt{n}$-approximation guarantee, and in Section \ref{sec:randoalglb} we show that the $\sqrt{n}$ factor is inevitable (for this mechanism).

\subsection{A Random, Prior-Free, Universally Truthful $2c^{3/2} \sqrt{n}$-Approximation}
\label{sec:random_sqrt}

Given a signal profile $\vec{s}$, a permutation $\pi$, let $\vec{s^j_\pi}$ be the $j$th intermediate profile as defined in Definition \ref{def:ithintprof}.

 We define $x^j_{\pi}(\vec{s})$ to be the tentative winner for $\vec{s^j_\pi}$ set during the $j^{\rm th}$ iteration of $\hypercol(\pi)$. 

To simplify the notation hereinafter, we drop $\vec{s}$ from $x^t_{\pi}$. In Lemmas~\ref{allocatedvaluemonotonicity}, \ref{boundedalgchange}, we rename the $t^{\rm th}$ bidder in the ordering $\pi$ to $t$, and omit $\pi$ as well. Therefore, we use $\vec{s^t}$,
and $x^t$.

\begin{numberedtheorem}{\ref{thm:randrootn}}
%\begin{theorem}
For arbitrary $c$-single-crossing valuations,
$\randhypercol$ is a universally truthful and prior-free mechanism that gives a $2c^{3/2} \sqrt{n}$-approximation to social welfare.
%\end{theorem}
\end{numberedtheorem}

The high level idea of our proof is as follows. Fix a permutation $\pi$. For any bid profile $\vec{s}$ where agent $i^*\in\argmax_i\{v_i(\vec{s})\}$ has the highest value, we show that if $\hypercol(\pi)$ performs badly, then if $i^*$ is placed sufficiently close to the end of the ordering, the allocated bidder is guaranteed to be a good approximation of $i^*$. The reason for this is the following: let $j$ be the allocated bidder under permutation $\pi$, and assume by renaming that agent $i$ is the $i^{\rm th}$ agent in the ordering. Consider the bid profile $\vec{s}=(s_1,s_2,\ldots,s_n)$, and all the intermediate bid profiles that $\hypercol(\pi)$ considers, $\vec{s^i}=(s_1,s_2,\ldots,s_i,\vec{0}_{[i+1:n]})$ for every $i$ (the bid profile where the last $i+1,\ldots,n$ signals are zeroed). If for any $i$, the value of $i^*$ increases by a lot from profile $\vec{s^{i-1}}$ to $\vec{s^{i}}$, then by $c$-single-crossing, $i$'s value must also increase.  However, $i$'s increase cannot be much larger than the value of $j$ at $\vec{s}$; otherwise, $\hypercol(\pi)$ would have allocated to $i$ at $\vec{s^{i}}$, and then $j$ would not win at $\vec{s}$. %\kgnote{only if i is after j though}
Therefore, there is no $i$ where $v_{i^*}(\vec{s^{i}})-v_{i^*}(\vec{s^{i-1}})$ is too large (this is cast in Lemma~\ref{boundedalgchange}).

Because the increase of $i^*$'s value cannot be too large between intermediate profiles, % \kgnote{of the suffix},
if $i^*$ was to appear toward the end of $\pi$, his value at the intermediate profile must already be a sufficiently large fraction of his value at $\vec{s}$, and thus any allocated bidder must have also been a good enough approximation to $i^*$; otherwise, $\hypercol(\pi)$ would have reallocated to $i^*$. Since $\pi$ is chosen uniformly at random, the probability that $i^*$ appears towards the end of $\pi$ is sufficiently large to guarantee a good approximation.

We use the following two lemmas to prove the improved approximation guarantee for our randomized mechanism.

The first lemma shows that the value of the final winner is (weakly) greater than the value of any tentative winner.
This is proved by showing that as the iteration number $t$ increases, the value of the tentative winner increases.

\begin{lemma} \label{allocatedvaluemonotonicity}
Given a profile $\vec{s}$ and permutation $\pi$, and assume $\hypercol(\pi)$ allocates to $j$ at $\vec{s}$.
For any iteration $t$, $v_{x^t}(\vec{s^t})\leq v_{j}(\vec{s})$.
\end{lemma}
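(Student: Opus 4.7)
The plan is to establish the stronger monotonicity claim that $v_{x^t}(\vec{s^t})$ is non-decreasing in $t$, which immediately yields the lemma since $\vec{s^n}=\vec{s}$ and $x^n=j$, so $v_{x^t}(\vec{s^t})\leq v_{x^n}(\vec{s^n})=v_j(\vec{s})$. To prove $v_{x^t}(\vec{s^t})\leq v_{x^{t+1}}(\vec{s^{t+1}})$, I would fix the prefix $(s_1,\ldots,s_t)$ and trace the sequence of profiles $\vec{q}_0=\vec{s^t},\vec{q}_1,\ldots,\vec{q}_{s_{t+1}}=\vec{s^{t+1}}$ visited by iteration $t+1$ as bidder $t+1$'s signal grows from $0$ to $s_{t+1}$. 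Write $a_i$ for the tentative winner at $\vec{q}_i$ right after iteration $t+1$ processes it; because this iteration only ever reallocates to bidder $t+1$, a one-line induction gives $a_i\in\{x^t,t+1\}$ for every $i$. It then suffices to check $v_{a_i}(\vec{q}_i)\geq v_{a_{i-1}}(\vec{q}_{i-1})$ step by step together with the base inequality $v_{a_0}(\vec{q}_0)\geq v_{x^t}(\vec{s^t})$; chaining them gives the desired non-decreasing property.

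Three of the four sub-cases are easy: when no reallocation happens at $\vec{q}_i$ then $a_i=a_{i-1}$ and monotonicity of valuations in $s_{t+1}$ handles it; the same applies when $a_{i-1}=t+1$ (so $a_i=t+1$); and when a reallocation is triggered by failure of the $c$-approximation check against $v_{t+1}$, one immediately gets $v_{t+1}(\vec{q}_i)>c\cdot v_{x^t}(\vec{q}_i)\geq v_{x^t}(\vec{q}_{i-1})$. The main obstacle is the subcase where the reallocation at $\vec{q}_i$ is triggered \emph{only} by failure of the $(tc)$-approximation check against the maximum of the first $t+1$ bidders: a priori $v_{t+1}(\vec{q}_i)$ could be as small as $c\cdot v_{x^t}(\vec{q}_i)$ and need not dominate $v_{x^t}(\vec{q}_{i-1})$.

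To resolve this obstacle, I would combine the approximation guarantee inherited from iteration $t$ in Theorem~\ref{thm:hypercolapx}, namely $v_j(\vec{s^t})\leq (t-1)c\cdot v_{x^t}(\vec{s^t})$ for every $j\in\{1,\ldots,t\}$, with the $c$-single-crossing inequality integrated as $s_{t+1}$ rises from $0$ to $i$, giving $v_j(\vec{q}_i)-v_j(\vec{s^t})\leq c\bigl(v_{t+1}(\vec{q}_i)-v_{t+1}(\vec{s^t})\bigr)$. Using valuation monotonicity to replace $v_{x^t}(\vec{s^t})$ by $v_{x^t}(\vec{q}_i)$ and nonnegativity of $v_{t+1}(\vec{s^t})$, these two ingredients telescope into
\[
\max_{j\in\{1,\ldots,t+1\}} v_j(\vec{q}_i) \;\leq\; tc\cdot\max\{v_{x^t}(\vec{q}_i),\,v_{t+1}(\vec{q}_i)\}.
\]
Hence whenever the $(tc)$-approximation check fails for $x^t$ at $\vec{q}_i$, the maximum on the right must be attained at $v_{t+1}(\vec{q}_i)$, so $v_{t+1}(\vec{q}_i)>v_{x^t}(\vec{q}_i)\geq v_{x^t}(\vec{q}_{i-1})$, closing the delicate subcase. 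Specializing the same displayed bound at $i=0$ handles the base inequality $v_{a_0}(\vec{q}_0)\geq v_{x^t}(\vec{s^t})$ by an identical argument, and the edge case $t=1$ reduces to a single bidder comparison in which the two checks coincide. Chaining the step inequalities then gives $v_{x^{t+1}}(\vec{s^{t+1}})\geq v_{x^t}(\vec{s^t})$ and completes the induction.
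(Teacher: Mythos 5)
Your proposal is correct and follows essentially the same route as the paper: both arguments reduce the lemma to showing that the tentative winner's value is non-decreasing across iterations, with non-reallocation steps handled by monotonicity of the valuations and reallocation steps handled by showing the new winner's value exceeds the old one. The one place you go further is worthwhile: the paper simply asserts as an ``observation'' that every reallocation strictly increases the tentative winner's value, and the subcase you flag as delicate --- reallocation triggered only by failure of the $(j-1)c$-approximation check --- is exactly where that observation is not immediate; your derivation of $\max_{j\le t+1} v_j(\vec{q}_i)\le tc\cdot\max\{v_{x^t}(\vec{q}_i),v_{t+1}(\vec{q}_i)\}$ from the iteration-$t$ approximation invariant and $c$-single-crossing is the Key-Lemma argument the paper implicitly relies on via Theorem~\ref{thm:hypercolapx}, and it correctly closes that gap.
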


\begin{proof}
	Observe that $\hypercol(\pi)$ has the following properties: (a) whenever the algorithm reallocates at some profile, the value of the new tentative winner is larger than the previous one, and (b) if the algorithm does not reallocate as we increase $s_t$ during iteration $t$, then the tentative winner's value (weakly) increases.
	
	We prove the lemma by backwards induction on $t$. When $t=n$, $x_t=j$ and $\vec{s^t}=\vec{s}$, so the claim is a tautology. Suppose that for $t\leq n$, $v_{x^t}(\vec{s^t})\leq v_{j}(\vec{s})$. To finish the proof, one needs to show that
	\begin{eqnarray}
		v_{x^{t-1}}(\vec{s^{t-1}})\leq v_{x^t}(\vec{s^t}).\label{eq:induction_step}
	\end{eqnarray}
	At iteration $t$, one of the following must happen:
	\begin{itemize}
		\item Either the tentative winner stayed the same, i.e., $x_{t-1}=x_t$, and \eqref{eq:induction_step} holds by (b), or
		\item We reallocate to $t$ at some $\vec{s'}=(s_1,\ldots, s_{t-1}, s'_{t}, 0_{[t+1:n]})$ for $s'_{t}\leq s_{t}$. By (b), we have that $v_{x_{t-1}}(\vec{s^{t-1}})\leq v_{x_{t-1}}(\vec{s'})$, by (a) $v_{x_{t-1}}(\vec{s'})\leq v_{x_{t}}(\vec{s'})$, and by (b) $v_{x_{t}}(\vec{s'})\leq v_{x_{t}}(\vec{s^t})$. Chaining the last three inequalities yields \eqref{eq:induction_step}.
	\end{itemize}
	Hence, \eqref{eq:induction_step} holds, and the lemma follows.
\end{proof}

The second lemma formalizes the intuition that at every iteration of $\hypercol$, $i^*$'s increase in value cannot be large compared to the winners' value.

\begin{lemma} [Change Bounded by The Allocated Value] \label{boundedalgchange}
	Suppose the final allocation at $\vec{s}$ is to agent $j$ and the highest value at $\vec{s}$ is of agent $i^* \in \argmax _i v_i(\vec{s})$. Then for every iteration $t$, $$v_{i^*}(\vec{s^t})-v_{i^*}(\vec{s^{t-1}})\leq c^2 v_j (\vec{s}).$$
\end{lemma}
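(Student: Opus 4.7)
\textbf{Proof plan for Lemma~\ref{boundedalgchange}.} The plan is to bound the change $v_{i^*}(\vec{s^t}) - v_{i^*}(\vec{s^{t-1}})$ in two stages: first relate it to a change in $v_t$ via $c$-single-crossing (losing one factor of $c$), and then relate that change in $v_t$ to the allocated value $v_j(\vec{s})$ via the algorithm's reallocation rule together with Lemma~\ref{allocatedvaluemonotonicity} (losing the second factor of $c$).

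For the first step, observe that by construction of the intermediate profiles, $\vec{s^{t-1}}$ and $\vec{s^t}$ differ only in coordinate $t$, where the signal increases from $0$ up to $s_t$. Applying $c$-single-crossing with $i = t$ and $j = i^*$ at each unit increment and telescoping, I would obtain
\begin{equation*}
v_{i^*}(\vec{s^t}) - v_{i^*}(\vec{s^{t-1}}) \;\leq\; c \bigl( v_t(\vec{s^t}) - v_t(\vec{s^{t-1}}) \bigr) \;\leq\; c \cdot v_t(\vec{s^t}),
\end{equation*}
where the last inequality uses non-negativity of $v_t$. So it remains to bound $v_t(\vec{s^t})$ by $c \cdot v_j(\vec{s})$.

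For the second step, I would invoke the allocation rule in iteration $t$ of $\hypercol(\pi)$: after processing the profile $\vec{s^t}$, the tentative winner $x^t$ at $\vec{s^t}$ either \emph{is} bidder $t$ (in which case trivially $v_{x^t}(\vec{s^t}) = v_t(\vec{s^t})$) or is some earlier bidder $i$ that was not reallocated away from, which, by the reallocation condition, means that $v_i(\vec{s^t})$ must $c$-approximate $v_t(\vec{s^t})$, i.e.\ $c \cdot v_{x^t}(\vec{s^t}) \geq v_t(\vec{s^t})$. In either case,
\begin{equation*}
v_t(\vec{s^t}) \;\leq\; c \cdot v_{x^t}(\vec{s^t}).
\end{equation*}
Finally, Lemma~\ref{allocatedvaluemonotonicity} gives $v_{x^t}(\vec{s^t}) \leq v_j(\vec{s})$, so chaining the bounds yields $v_{i^*}(\vec{s^t}) - v_{i^*}(\vec{s^{t-1}}) \leq c \cdot v_t(\vec{s^t}) \leq c^2 \cdot v_j(\vec{s})$, as desired.

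I expect no serious obstacle here; the lemma is essentially a clean combination of $c$-single-crossing applied across one coordinate and the $c$-approximation invariant maintained by the reallocation step of $\hypercol$. The one point that requires a little care is the discrete form of $c$-single-crossing: I need to apply the inequality coordinate-by-coordinate along the path $s_t = 0, 1, \ldots, s_t$ and sum, rather than trying to apply it in a single jump, so that the signals of the other bidders are held fixed at each step as required by the definition.
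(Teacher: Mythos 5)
Your proof is correct and follows essentially the same route as the paper's: first apply $c$-single-crossing across coordinate $t$ to get $v_{i^*}(\vec{s^t})-v_{i^*}(\vec{s^{t-1}})\leq c\,v_t(\vec{s^t})$, then bound $v_t(\vec{s^t})\leq c\,v_{x^t}(\vec{s^t})\leq c\,v_j(\vec{s})$ via the reallocation condition and Lemma~\ref{allocatedvaluemonotonicity}. The only difference is cosmetic: the paper splits the second step into the cases $t>j$ and $t\leq j$, whereas your uniform argument through the tentative winner $x^t$ subsumes both.
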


\begin{proof}
First, notice that by the definition of $c$-single-crossing,
\begin{eqnarray}
	%\Delta_{i^*}^t =
	v_{i^*}(\vec{s^t})-v_{i^*}(\vec{s^{t-1}})\leq c\left(v_{t}(\vec{s^t})-v_{t}(\vec{s^{t-1}})\right).\label{eq:a}
\end{eqnarray}

To finish the proof, we argue that
\begin{eqnarray}
	v_{t}(\vec{s^t})\leq c v_j (\vec{s}).\label{eq:b}
\end{eqnarray}
By the non-negativity of $v_t$, combining \eqref{eq:a} with \eqref{eq:b} is enough to to prove the lemma. Recall that by renaming, agent $i$ is at position $i$ in the ordering $\pi$. To show \eqref{eq:b}, we consider two cases.

\textit{Case 1:} If $t > j$, then to wind up with $x(\vec{s}) = j$, after iteration $t$, we must have had $x^t=j$. Thus, it must be the case that $v_{t}(\vec{s^t})\leq cv_j (\vec{s^t})$, or $\hypercol$ would have reallocated to bidder $t$ during iteration $t$. By monotonicity, $v_j (\vec{s^t})\leq v_j(\vec{s})$,  so \eqref{eq:b} follows.

\textit{Case 2}: If $t \leq j$, then if $j' = x^{t}$ is the tentative winner after iteration  $t$, it must be the case that either $j'=t$, or $v_{t}(\vec{s^{t}})\leq c \cdot v_{j'}(\vec{s^{t}})$. Otherwise, $\hypercol$ would have reallocated to bidder $t$ at that iteration. Since by Lemma~\ref{allocatedvaluemonotonicity},  $v_{j'}(\vec{s^{t}})\leq v_j(\vec{s})$, \eqref{eq:b} follows in this case as well.

We conclude that in both cases,  \eqref{eq:b} follows, and hence, so does the lemma.
\end{proof}

We now proceed to prove the theorem.

\begin{proof}[Proof of Theorem~\ref{thm:randrootn}]
(Monotonicity.) $\randhypercol$ is clearly universally truthful: the only random choices of the algorithm are in selecting an ordering $\pi$ uniformly at random, which is done independent of the bid profile $\vec{s}$. Given $\pi$, the algorithm returns the allocation rule $x$ from  $\hypercol(\pi)$, which we have already proven is monotone.

(Approximation.) For any given signal profile $\vec{s}$, let $i^* \in \argmax _i v_i(\vec{s})$ be the bidder with the highest value at $\vec{s}$.
Observe that we can draw the permutation $\pi$ used in $\randhypercol$ in the following manner.  First, we draw the relative positioning of every bidder except $i^*$ (using a uniformly at random permutation), denoted by $\pi_{-i^*}$. Rename the bidders such that $\pi_{-i^*} = (1, \ldots, n-1)$.  Then, draw the position of $i^*$ by drawing a uniform number in $t \in \{1,\ldots,n\}$, giving $\pi = (1, \ldots, t-1, i^*, t, \ldots, n-1)$.  This permutation is drawn uniformly at random, thus satisfying the requirements of $\randhypercol$.

We show that for every $\pi_{-i^*}$, $i^*$'s value at $\vec{s}$ is well-approximated a large fraction of the iterations.  For the remainder of the proof, let $\pi_{-i^*}$ be fixed, so the choice of $t$ defines $\pi$.

For all such orderings with $\pi_{-i^*}$ fixed, let $\pi'$ be the ordering where $i^*$ is in the worst position with respect to how $i^*$'s value at $\vec{s}$ is approximated.  That is, let $t'$ be the position of $i^*$, where $\pi' = (1, \ldots, t'-1, i^*, t', \ldots, n-1)$.  Let $j'$ be the bidder who is allocated at $\vec{s}$ in $\hypercol(\pi')$; $j'$'s value is minimum over all potential winners $j$ in $\hypercol(\pi)$ with $\pi_{-i^*}$ fixed.  Let $\alpha c n$ be the approximation of $j'$'s value to $i^*$ where $\alpha <1$, that is,
%Consider the permutation $\pi'$ (with the same fixed $\pi_{-i^*}$) with $i^*$ positioned at $t'$ such that the value of the allocated bidder at $\vec{s}$ according to $\hypercol(\pi')$, call him $j'$, is minimized, giving the worst possible approximation to $i^*$'s value at $\vec{s}$ over any ordering where $\pi_{-i^*}$ is fixed. Let $\alpha c n$ be this approximation ratio, where $\alpha<1$, i.e.,
\begin{eqnarray}
v_{i^*}(\vec{s}) = \alpha cn \cdot v_{j'}(\vec{s}).\label{eq:acn_approx}
\end{eqnarray}

Consider the iteration $z=n-\frac{\alpha n}{2c}$ under $\pi'$. We have that
\begin{align}
	v_{i^*}(\vec{s^z_{\pi'}}) &= v_{i^*}(\vec{s})-\sum_{\ell=z + 1}^{n}\left(v_{i^*}(\vec{s^{\ell}_{\pi'}})-v_{i^*}(\vec{s^{\ell-1}_{\pi'}})\right) &  \nonumber\\
	&\geq v_{i^*}(\vec{s})- \sum_{\ell=n - \frac{\alpha n}{2c} + 1}^{n} c^2 v_{j'}(\vec{s}) & \text{by Lemma~\ref{boundedalgchange}}\nonumber\\
	&= v_{i^*}(\vec{s})- \frac{1}{2}\alpha cn \cdot v_{j'}(\vec{s}) &  \nonumber\\
	&= \frac{v_{i^*}(\vec{s})}{2}.& \text{by~\eqref{eq:acn_approx}} \label{eqn:istarlarge}
\end{align}

Note that, coordinate-wise, it holds that
$$\vec{s^z_{\pi'}} \leq (\vec{s}_{[1:z]}, \vec{0}_{[z+1:n-1]}, s_{i^*}).$$
To see this, note that if $t'>z$, then $\vec{s^z_{\pi'}} = (\vec{s}_{[1:z]}, \vec{0}_{[z+1:n-1]}, 0)$, and if $t' \leq z$, then
$\vec{s^z_{\pi'}} =  (\vec{s}_{[1:z-1]}, \vec{0}_{[z:n-1]}, s_{i^*})$.

Let
$$\vec{s^{t}_{\pi}} = (\vec{s}_{[1:t-1]}, \vec{0}_{[t:n-1]}, s_{i^*})$$
be the ordering in which $i^*$ is in position $t$ (fixing $\pi_{-i^*}$).

By definition of $\hypercol(\pi)$, if the tentative winner at profile $\vec{s_\pi^{t}}$ after iteration $t$ is $\tilde{j}$, then
\begin{eqnarray}
	v_{\tilde{j}}(\vec{s_\pi^{t}})\geq v_{i^*}(\vec{s_\pi^{t}})/c,\label{eq:jp_approx}
\end{eqnarray}
as otherwise the algorithm would have reallocated to $i^*$.

In addition, in the event where $i^*$ is among the last $\frac{\alpha n}{2c}$ bidders (i.e., $t > z$), by monotonicity of $v_{i^*}$ and by~\eqref{eqn:istarlarge},
\begin{eqnarray}
	v_{i^*}(\vec{s^{t}_{\pi}})\geq v_{i^*}(\vec{s^z_{\pi'}})\geq  \frac{v_{i^*}(\vec{s})}{2}.\label{eq:largeatpi}
\end{eqnarray}

Let $j$ be the allocated bidder at $\vec{s}$ in $\hypercol(\pi)$, then

\begin{align}
	\E_{t}[v_{j}(\vec{s})] &\geq \E_{t}[v_{\tilde{j}}(\vec{s^{t}_\pi})]  & \text{by Lemma~\ref{allocatedvaluemonotonicity}}  \nonumber\\
	&\geq \E_{t}[v_{i^*}(\vec{s^{t}_\pi})/c]  & \text{by \eqref{eq:jp_approx}}\nonumber\\
	 &\geq \frac{1}{c}\E_{t}[v_{i^*}(\vec{s^{t}_\pi})|t>z]\cdot\Pr_{t}[t>z]  &  \nonumber\\
	 &\geq \frac{1}{c}\cdot \frac{v_{i^*}(\vec{s})}{2}\cdot\frac{\alpha n }{2c n}  & \text{by \eqref{eq:largeatpi} and the random choice of $t$} \nonumber\\
	&= \frac{v_{i^*}(\vec{s})}{\frac{4c^2}{\alpha}}.& \label{eq:expected_welfare}
\end{align}
Therefore, the approximation ratio is at most $\frac{4c^2}{\alpha}$. However, since at the worst position of $i^*$, the approximation ratio is $\alpha c n$, then at a random position, the approximation ratio is at most $\alpha cn$. Therefore, in expectation, we get a $\min\{4c^2/\alpha, \alpha cn\}$-approximation to $i^*$'s value at $\vec{s}$.  This is maximized, i.e., gets the worst possible ratio, when $\alpha = \sqrt{4c/n}$, giving a $2c^{3/2} \sqrt{n}$-approximation.

Since this holds for \emph{every} possible $\pi_{-i^*}$ that is drawn, then overall we have a $2c^{3/2} \sqrt{n}$-approximation.
\end{proof}

%In Appendix~\ref{sec:randoalglb}, we prove that this algorithm has a lower bound of $\Omega({c\sqrt{n}/\log{n}})$.

\subsection{Lower Bound of \lowercase{$\Omega({c\sqrt{n}/\log{n}})$} for $\randhypercol$}\label{sec:randoalglb}

We now show a lower bound of $\Omega({c\sqrt{n}/\log{n}})$ for $\randhypercol$.

Consider the following example.  There are $n+1$ agents---$n$ of them are  partitioned into $\frac{\sqrt{n}}{\log{n}}$ fixed sets $S$ of size $\log{n} \sqrt{n}$, plus some lonesome $i^*$.  Each bidder has a low (0) and high (1) signal.

The valuation functions are as follows.  For agent $i \in S$, $v_i(\vec{s}) = \mathbbm{1}[s_j = 1 \forall j \in S]$, that is, his value is 1 if and only if every agent in his set has a high signal, and is otherwise 0.  Agent $i^*$ has a value that is $c$ times the number of sets where all agents are high, that is, $v_i^*(\vec{s}) = c \cdot |\{S : s_j = 1 \forall j \in S\}|$.

Clearly it also satisfies $c$-single-crossing: as any bidder $i \neq i^*$ changes his signal from low to high, if his value changes, it does by 1.  So do the values of everyone else in his set, so they're clearly single-crossing.  So does $i^*$'s value by $c$, which is clearly $c$-single-crossing.  If his value does not change, then he does not make an entire set high, nor does he increase the number of sets that are high, so no one's value changes, which is also $c$-single-crossing.  None of the agents have signals that depend on $i^*$, so his value is trivially $c$-single-crossing.

For any set $S$, the probability over the random ordering that none of its agents appear in the last $\sqrt{n}$ in the permutation is $$\frac{\binom{n-\sqrt{n}}{\log n\sqrt{n}}}{\binom{n}{\log n\sqrt{n}}}\leq \left(\frac{n-\sqrt{n}}{n}\right)^{\log{n} \sqrt{n}} = \left(1 - \frac{1}{\sqrt{n}} \right)^{\log{n} \sqrt{n}} \leq e^{-\log{n}} = \frac{1}{n}.$$

Then, over all $\frac{\sqrt{n}}{\log{n}}$ sets, the probability that any of the sets have no agents appear in the end is, by a union bound, at most $\frac{\sqrt{n}}{\log{n}}  \cdot \frac{1}{n}$.  That is, with probability at least $1 - \frac{1}{\sqrt{n} \log n}$, none of the sets have no agents appear at the end, or equivalently, every set has some agent who comes in the last $\sqrt{n}$ of the permutation, and thus at any cell allocated in the first $n - \sqrt{n}$ iterations, $i^*$ has a valuation of 0.

Then if this event occurs and $i^*$ comes in the first $n- \sqrt{n}$ in the permutation, $i^*$ has a valuation of $0$, so we will not allocate to $i^*$, and thus at the end, in cell $\vec{1}$, we must have allocated to a bidder $i \neq i^*$ who has value 1, making our approximation off by a factor of $c |S| = c \frac{\sqrt{n}}{\log{n}}$.

That is, with probability at least $\frac{n-\sqrt{n}}{n} \cdot \left( 1 - \frac{1}{\sqrt{n} \log n} \right)$ our approximation is off by $c \frac{\sqrt{n}}{\log{n}}$.  We can lower bound this probability by $1 - \frac{2}{\sqrt{n}}$.

Then at best, our approximation is $$\left(1 - \frac{2}{\sqrt{n}} \right) \frac{\log{n}}{c \sqrt{n}} + \frac{2}{\sqrt{n}} = \frac{(\sqrt{n} - 2) \log{n} + \sqrt{n}2c}{cn} \leq \frac{\sqrt{n} \log{n} + \sqrt{n}2c}{cn}=\frac{\log{n} + 2c}{c\sqrt{n}}.$$% (or drop the $2c$ if $c < \log{n}$).
For $c< \log{n}$, this gives a lower bound of at least $\sqrt{n}c/\log{n}$.

\end{document}